\setlist{  
  listparindent=\parindent,
  parsep=0pt,
}
\setlist[itemize]{leftmargin=*}
\def\a{{\mathbf a}}
\def\b{{\mathbf b}}
\def\c{{\mathbf c}}
\def\i{{\mathbf i}}
\def\u{{\mathbf u}}
\def\v{{\mathbf v}}
\def\w{{\mathbf w}}
\def\x{{\mathbf x}}
\def\y{{\mathbf y}}
\def\z{{\mathbf z}}
\def\C{{\cal C}}
\def\F{{\cal F}}
\def\H{{\cal H}}
\def\N{{\cal N}}
\def\P{{\cal P}}
\def\R{{\mathbb R}}
\def\l{\lambda}
\def\nm{\Vert}
\renewcommand{\and}{\mbox{$\wedge$}}
\newcommand{\bc}{\begin{center}}
\newcommand{\ec}{\end{center}}
\newcommand{\be}{\begin{equation}}
\newcommand{\ee}{\end{equation}}
\newcommand{\bd}{\begin{displaymath}}
\newcommand{\ed}{\end{displaymath}}
\newcommand{\ba}{\begin{array}}
\newcommand{\ea}{\end{array}}
\newcommand{\ben}{\begin{enumerate}}
\newcommand{\een}{\end{enumerate}}
\newcommand{\bit}{\begin{itemize}}
\newcommand{\eit}{\end{itemize}}
\newcommand{\beq}{\begin{eqnarray}}
\newcommand{\eeq}{\end{eqnarray}}
\newcommand{\btab}{\begin{tabular}}
\newcommand{\etab}{\end{tabular}}
\newcommand{\mathbfig}{\begin{figure}}
\newcommand{\efig}{\end{figure}}
\newcommand{\btp}{\begin{tikzpicture}}
\newcommand{\etp}{\end{tikzpicture}}
\newcommand{\argmin}{\operatornamewithlimits{arg~min}}
\newcommand{\nmm}[1]{ \nm #1 \nm }
\newcommand{\nmF}[1]{ \nm #1 \nm_F }
\newcommand{\IP}[2]{ \langle #1 , #2 \rangle }
\def\nmsl1{\nm_{{\rm SL1}}}
\def\vecc{{\mathsf{vec}}}
\def\diag{{\mathsf{diag}}}
\def\exp{{\mathsf{exp}}}
\definecolor{verm}{rgb}{0.6,0.2,0.2}
\definecolor{purp}{rgb}{0.3,0.1,0.6}
\definecolor{purple}{rgb}{0.4,0.0,0.6}
\definecolor{bggreen}{rgb}{0.1,0.3,0.1}
\definecolor{dgreen}{rgb}{0.1,0.6,0.1}
\definecolor{black}{rgb}{0.0,0.0,0.0}
\definecolor{crim}{rgb}{0.3,0.1,0.1}
\definecolor{dred}{rgb}{0.5,0.1,0.1}
\definecolor{NavyBlue}{HTML}{1f4e79}
\definecolor{DeepTeal}{HTML}{006d77}
\definecolor{DarkSlateGray}{HTML}{2f4f4f}
\definecolor{SlateBlue}{HTML}{6a5acd}
\definecolor{MutedBurgundy}{HTML}{800020}
\newtheorem{assumption}{Assumption}{\bfseries}{\itshape}
\newtheorem{corollaryA}{Corollary}{\bfseries}{\itshape}
\newtheorem{theoremA}{Theorem}{\bfseries}{\itshape}
\title[Nonconvex System IDentification]{Nonconvex Linear System Identification with Minimal State Representation}
\author{
 \Name{Uday Kiran Reddy Tadipatri} \Email{ukreddy@seas.upenn.edu}\\
 \addr University of Pennsylvania, USA
 \AND
 \Name{Benjamin D. Haeffele} \Email{haeffele@seas.upenn.edu}\\
 \addr University of Pennsylvania, USA
 \AND
 \Name{Joshua Agterberg} \Email{jagt@illinois.edu}\\
 \addr University of Illinois Urbana-Champaign, USA
 \AND
 \Name{Ingvar Ziemann} \Email{ingvarz@seas.upenn.edu}\\
 \addr University of Pennsylvania, USA
 \AND
 \Name{Ren{\'e} Vidal} \Email{vidalr@seas.upenn.edu}\\
 \addr University of Pennsylvania, USA
}
\begin{document}
\maketitle

\begin{abstract}
Low-order linear System IDentification (SysID) addresses the challenge of estimating the parameters of a linear 
dynamical system from finite samples of observations and control inputs with minimal state representation.
Traditional approaches 
often utilize Hankel-rank minimization, which relies on convex relaxations that can require numerous, costly singular value decompositions (SVDs) to optimize.
In this work, we propose two nonconvex reformulations to tackle low-order SysID
(i) Burer-Monterio (BM) factorization of the Hankel matrix for efficient nuclear norm minimization, and
(ii) optimizing directly over system parameters for real, diagonalizable systems  with an atomic norm style decomposition. 
These reformulations circumvent the need for repeated heavy SVD computations, significantly improving computational
efficiency. Moreover, we prove that optimizing directly over the system parameters yields lower statistical
error rates, and lower sample complexities that do not scale linearly with trajectory length
like in Hankel-nuclear norm minimization.
Additionally, while our proposed formulations are nonconvex, we provide theoretical guarantees of achieving global optimality in polynomial time. Finally, we demonstrate algorithms that solve these nonconvex programs and validate our theoretical claims on synthetic data.
\end{abstract}

\begin{keywords}
System Identification, Hankel rank minimization, Sample Complexity.
\end{keywords}

\section{Introduction}
We consider the linear time-invariant system \eqref{eq:ls} with time $t \in \mathbb{N}$, $\x_t \in \R^{n_x}$ as hidden states, $\u_t \in \R^{n_u}$ as control inputs,
$\y_t \in \R^{n_y}$ as outputs, $\zeta_t \in \R^{n_y}$ as output noise,
and system parameters $(A \in \R^{n_x \times n_x}$, $B \in \R^{n_x \times n_u}$, 
$C \in \R^{n_y \times n_x}$, $D \in \R^{n_y \times n_u})$ described as
\be\label{eq:ls}\tag{LS}
\begin{split}
&\x_{t+1} = A\x_{t} + B\u_t; \x_1 = 0,\\
&\y_t = C\x_t + D\u_t + \zeta_t.
\end{split}
\ee

Linear \textit{System IDentification} (SysID) aims to estimate these parameters
$(A$, $B,$ $C,$ $D)$ using finite rollouts with finite length trajectories,
$\{(\u_t^{i}, \y_t^{i})_{t=1}^{2L+2}\}_{i=1}^{N}$. However, the true system order or state dimension $n_x$ is 
typically unknown.
It is of practical interest to find systems that are \textit{minimal order}
as such models provide faster response times, simplified control designs, and improved robustness, 
while mitigating overfitting during the learning phase.
This task can be formulated as a dimension minimization problem under an $L_2$ norm constraint (see Program \eqref{eq:P0}). 
However, solving this problem is known to be NP-HARD \citep{vandenberghe-boyd-siam96}.

In classical linear system theory, the order of the system is determined by the rank of the 
Hankel matrix constructed from the Markov parameters $(CB, CAB, \dots, CA^{2L+1}B)$
\cite[Corollary 6.5.7]{sontag2013mathematical}. Convex relaxations, such as minimizing the Hankel
nuclear norm instead of the rank, have been proposed to make this problem computationally tractable
\citep{fazel-et-acc03}. However,
this presents two key challenges: 
(i) \textbf{Computational complexity}: The number of parameters scales 
linearly with the trajectory length $L$, making nuclear norm minimization increasingly difficult.
(ii) \textbf{Statistical inefficiency}: The high dimensionality leads to worse statistical error rates 
and larger sample complexity requirements \citep{oymak2019non}.

In the literature on matrix sensing, nuclear norm minimization is tackled via nonconvex reparameterization 
using its variational form with Frobenius norms \citep{burer-monteiro-mp03}.
This reformulation enables the use of efficient first-order optimization methods,
such as gradient descent and Nesterov acceleration \citep{nesterov-nag83}, which converge
in $\mathcal{O}(1/T)$, and $\mathcal{O}(1/T^2)$, respectively, where $T$
is the number of iterations. 
In contrast, direct nuclear norm minimization often requires computing proximal operators \citep{parikh-boyd-fto14} that 
depend on SVD, making it less computationally efficient.
In this work, we propose two reformulations, the first of which is Hankel nuclear norm minimization for SysID 
(see Program \eqref{eq:P2}) using the BM factorization \citep{burer-monteiro-mp03}.
Despite being nonconvex, this formulation eliminates the need for frequent SVD computations, significantly enhancing computational efficiency, and can still be provably globally optimized in polynomial time.

While this first reformulation offers clear advantages in terms of optimization efficiency, the statistical error rates 
and sample complexities remain comparable to those of Hankel nuclear norm minimization. In particular, the statistical error rates scale with trajectory length, despite the number of system parameters remaining constant. 
Our second reformulation (see Program \eqref{eq:P1}) proposes a method of SysID of a real, diagonalizable system
where we perform optimization directly over the system parameters space $(A, B, C, D)$ with a structured 
regularization similar to an atomic norm decomposition.  Although this reformulation is also nonconvex,
we show that it can be efficiently optimized to global optimality. Furthermore, this approach achieves improved 
statistical error rates and reduced sample complexity compared to existing Hankel nuclear 
norm minimization heuristics.

\textbf{Contributions.}
Our key contributions are as follows:
\vspace{-15pt}
\begin{enumerate}
    \item \textbf{Nonconvex Relaxations for low-order SysID.} We propose two nonconvex reformulations (i) for 
general linear systems via BM re-parametrization \eqref{eq:P2} of the
Hankel matrix, (ii) by directly optimizing over
system parameters \eqref{eq:P1}, restricted to real, diagonalizable systems.
These nonconvex problems are first of their kind
and serve as relaxations to the NP-HARD problem \eqref{eq:P0}.
    \item \textbf{Global Optimality Guarantees.} We provide guarantees of global optimality for the two nonconvex reformulations when they are solved using first-order optimization procedures.
    Furthermore, we affirm that these nonconvex problems can be solved in polynomial time.
    The second reformulation overcomes the quadratic dependency on trajectory length in the
    computational efficiency of existing methods.
    \item \textbf{Statistical Error Rates and Sample Complexities.} We provide statistical error rates and sample complexities for both the reformulations
    while the first reformulation scales poorly compared to the second
    due to its linear dependency on trajectory length. 
    \item \textbf{Numerical evidence.} We propose algorithms to tackle these
    nonconvex programs, provide numerical simulations that corroborate our 
    theoretical insights. Under a fixed compute budget, our algorithms outperform existing methods.
\end{enumerate}
These results highlight that for real, diagonalizable systems
the reformulation \eqref{eq:P1} achieves both superior statistical efficiency and
reduced sample complexity compared to \eqref{eq:P2},
\cite{sun-et-al-ojcs22}, and \cite{lee2022improved}.

\textbf{Organization.} First, in \textsection\ref{sec:prob_form} we introduce the problem and mathematical
formulations. Next, we move onto \textsection\ref{sec:opt_results} where we present 
optimality certificates for the nonconvex formulations. Then, in \textsection\ref{sec:stat_results}
we present statistical error rates and sample complexities for each of the formulations.
Later, we move to \textsection\ref{sec:numerical_simulations} that presents numerical experimentation
and comparison with existing methods. Finally, we conclude and discuss the future work 
in \textsection\ref{sec:conclusion}. 
Further related works and proofs for all the mathematical statements can be found
in Appendix of our arXiv version.\\
\textbf{Notation.}   
For a matrix $A$, $[A]_{i_1:i_2, j_1:j_2}$ denotes its submatrix from rows $i_1$--$i_2$ and columns 
$j_1$--$j_2$; $[[A_{i,j}]_{i \in \mathcal{I}}]_{j \in \mathcal{J}}$ stacks blocks $A_{i,j}$ row- and 
column-wise over $\mathcal{I}$ and $\mathcal{J}$. The superscript $\dagger$ denotes the adjoint for
operators and the Moore--Penrose pseudoinverse for matrices. 
The inequality $f(x) \gtrsim g(x)$ or $f(x) \geq \mathcal{O}(g(x))$ means there exists a positive
constant $c$ such that $f(x) \geq c g(x)$ for all $x$. For some $\alpha > 0$, $(\ln x)^{\alpha}$
is denoted as ${\sf polylog}(x)$, and $\tilde{\mathcal{O}}(x)$ denotes
$\mathcal{O}(x \cdot {\sf polylog}(x))$.

\section{Problem Formulation}\label{sec:prob_form}
Given a noisy linear time-invariant system \eqref{eq:ls},
we are interested in special approximation of outputs $\{\y_t\}_{t=1}^{2L+2}$ for given inputs $\{\u_t\}_{t=1}^{2L+2}$
that has low-order, or smaller state dimension, $n_x$. Formally, define
impulse response \footnote{Due to the trivial estimation of $D$ we often omit its dependency in the impulse response and denote it as $G(A, B, C)$.} 
$G(A, B, C, D)$ $\in$ $\R^{2(L+1)n_y \times 2(L+1)n_u}$ where $i, j \in [2(L+1)]$, and
\vspace{-10pt}
\be
[G(A, B, C, D)]_{i, j} = \begin{cases}
0 & i < j,\\
D & i = j,\\
CA^{i-j-1}B & i > j.
\end{cases} \in \R^{n_y \times n_u}.
\ee
Our goal is to solve the optimization
problem \eqref{eq:P0},
\vspace{-3pt}
\be\label{eq:P0}
\begin{split}
\min_{n_x, A, B, C, D} n_x , 
\quad \text{such that }\frac{1}{4N(L+1)}\sum_{i=1}^{N}
\nmm{\vecc(Y_i)- G(A, B, C, D)\vecc(U_i)}_F^2 \leq \epsilon,
\end{split} \tag{P0}
\ee
where $Y_i := \begin{bmatrix}\y_1^{i} & \y_2^{i} & \dots & \y_{2(L+1)}^{i} \end{bmatrix} \in \R^{n_y \times 2(L+1)}$,
$U_i := \begin{bmatrix}\u_1^{i} & \u_2^{i} & \dots & \u_{2(L+1)}^{i} \end{bmatrix} \R^{n_y \times 2(L+1)}$,
$\vecc: \R^{m \times n} \to \R^{m\cdot n \times 1}$ is column-wise stacking operator.
It is well known that solving \eqref{eq:P0} in general
is NP-HARD \citep{vandenberghe-boyd-siam96}. Numerous convex relaxations of \eqref{eq:P0} have been proposed to tackle this issue under certain conditions (beyond the scope of this work), such as minimizing the nuclear norm of the Hankel matrix
\citep{fazel-et-acc03, recht-et-al-cdc08, recht-et-al-siam10}. To place this in context, define the Hankel operator $\mathcal{H} : \R^{n_y \times (2L+1)n_u}
\to \R^{(L+1)n_y \times (L+1)n_u}$ for a block sequence $\{K_{t}\}_{t \in [2L+1]} \subset \R^{n_y \times n_u}$ as
\vspace{-5pt}
\be
\forall i, j \in [L+1], \left[\mathcal{H}\left(\begin{bmatrix}K_1 & K_2 & \dots & K_{2L+1}\end{bmatrix}\right)\right]_{1 + (i-1)\cdot n_y:i\cdot n_y, 1+(j-1)\cdot n_u:j\cdot n_u}
= K_{i+j-1}.
\ee
\vspace{1pt}
In \cite{sun-et-al-ojcs22} the authors proposed to solve the reformulated problem
\vspace{-5pt}
\be\label{eq:P1_1}
\tag{P1'}
\min_{H, D}\frac{1}{2N}\sum_{i=1}^{N}
\nmm{\y_{2L+1}^{i} - \begin{bmatrix}H & D\end{bmatrix}\vecc(U_i)}_F^2 + \l \nmm{\mathcal{H}(H)}_*.
\ee
After obtaining the Markov parameters $H$ they perform the
Ho-Kalman procedure \citep{ho-kalman-66} to obtain the 
system parameters $(A, B, C, D)$. The matrix $H$ has 
$2(L+1)n_yn_u$ number of unique parameters to estimate.
The time complexity of SVD on the matrix $\mathcal{H}(H)$ is 
$\mathcal{O}(L^2n_yn_x)$ \citep{xu-qiao-laa08}, 
so the quadratic dependence on $L$ it causes a computational bottleneck when $L$ is large.
Therefore, we propose adopting a BM type re-parameterization \citep{burer-monteiro-mp03} in two ways: (i) by learning two low-rank matrices whose product forms a Hankel matrix, and (ii) by directly estimating the system parameters $(A, B, C, D)$. First, 
recall the adjoint of the Hankel operator $\mathcal{H}^{\dagger} : \R^{(L+1)n_y \times (L+1)n_u} \to \R^{n_y \times (2L+1)n_u}$ as
\be
\mathcal{H}^{\dagger}\left(\begin{bmatrix}[K_{i+j-1}]_{i \in [L+1]}
\end{bmatrix}_{j \in [L+1]}\right) = \begin{bmatrix}K_1 & K_2 & \dots & K_{2L+1}\end{bmatrix}
\in \R^{n_y \times (2L+1)n_u},
\ee
and note that $\mathcal{H} \circ \mathcal{H}^{\dagger} = {\sf id}$, and 
$\mathcal{H}^{\dagger} \circ \mathcal{H} = {\sf id}$, where ${\sf id}$ is the identity operator.

\textbf{First formulation}. We define the optimization program \eqref{eq:P2} as
\vspace{-8pt}
\be\label{eq:P2}
\min_{n_x, V, Z, D} \frac{1}{2N}\sum_{i=1}^{N}\nmF{\y_{2L+1}^{i} - \begin{bmatrix}\mathcal{H}^{\dagger}(VZ^T) & D\end{bmatrix} \vecc(U_i)}^2 + \frac{\l}{2}\left[\nmF{V}^2 + \nmF{Z}^2\right],
\tag{P1}
\ee
where $V \in \R^{(L+1)n_y \times n_x}$, and $Z \in \R^{(L+1)n_u \times n_x}$.
Let the impulse response for this parameterization be 
$G'(V, Z) \in \R^{2(L+1)n_y \times 2(L+1)n_u}$.

Program \eqref{eq:P2} redefines the nuclear norm minimization via the sum of 
squared Frobenius norms $\frac{1}{2}\left[\|V\|_F^2 + \| Z \|_F^2\right]$. In contrast to the nuclear norm  regularization, the Frobenius norm minimization has time
complexity $\mathcal{O}(L(n_y + n_u))$, which is only linear in $L$. 
However, in both Programs \eqref{eq:P1_1} and \eqref{eq:P2}, the number of parameters
scales with the trajectory length. For \eqref{eq:P1_1}, it has been shown that the 
sample complexity is $NL \gtrsim \tilde{\mathcal{O}}(L^2 n_u (n_y + n_u))$ 
\citep{sun-et-al-ojcs22}. In contrast, we will see in Theorem 
\ref{thm:thm_p2_stat} \eqref{eq:P2} achieves a smaller sample complexity of 
$NL \gtrsim \tilde{\mathcal{O}}(Ln_u (n_y + n_u))$.
Nevertheless, there remains room for improvement in the sample complexity, 
given that we have access to $\mathcal{O}(NL)$ data points.
We consider another reformulation that ameliorates this suboptimal trajectory dependence. Unfortunately, the theoretical guarantees necessitate that the systems 
be real, diagonalizable, which imposes limitations on the class of realizable systems 
\citep{fernandez-et-al-esiam15}. However, previous studies have shown that
estimation of non-diagonalizable systems is highly challenging 
\citep{tu-et-al-jmlr24} which is an avenue for future work.

\textbf{Second formulation}. For real, diagonalizable systems
we optimize
over the system parameters and dimension directly via a regularization that can induce
low-order structure. We achieve this by considering the program
\be\label{eq:P1}
\tag{P2}
\min_{n_x, \a, B, C, D}\frac{1}{4N(L+1)}\sum_{i=1}^{N}
\nmm{\vecc(Y_i)- G({\sf diag}(\a), B, C, D)\vecc(U_i)}_F^2 + \l \Theta(\a, B, C, D),
\ee
where $\a \in \R^{n_x}$, and $\Theta(\a,B,C,D)$ is a regularizer that we will define subsequently 
in \textsection\ref{sec:opt_results}. 
Since we are dealing with real, diagonalizable systems the program \eqref{eq:P1} optimizes only
on the spectrum of state-transition matrix $A$, and eigen matrix is assumed to be absorbed in $B$, and $C$.
Although Program \eqref{eq:P1} is nonconvex, specific choices of 
$\Theta(\a, B, C, D)$ allow us to provide a certificate of optimality when employing 
first-order optimization methods.
In this work, we provide certificates for global optimality and analyze statistical 
recovery errors
for the programs \eqref{eq:P2}, and \eqref{eq:P1} for a trivial feed-through matrix, $D = 0$. The bulk of the work (in \textsection \ref{sec:opt_results}) relies in re-writing the programs
\eqref{eq:P2}, and \eqref{eq:P1} as sums of slightly generalized positively homogenous functions
for which global optimality guarantees are well studied in \cite{haeffele-vidal-tpami20}. In \textsection \ref{sec:stat_results}
we instantiate the approach of \cite{ziemann-tu-nips22} to provide tight 
(up-to log factors) error rates and sample complexities.

\section{Optimality certificates of nonconvex programs}\label{sec:opt_results}
In this section, we present certificates for the global optimality 
to each of the formulations \eqref{eq:P2}, and \eqref{eq:P1},
whose proofs can be found in \textsection\ref{sec:apdx_opt}. First, we state
Proposition \ref{prop:prop_p2} that establishes optimality 
of any stationary points of \eqref{eq:P2}.

\begin{proposition}\label{prop:prop_p2}
Let $(U_i, Y_i)$ be $N$ roll-outs of the system \eqref{eq:ls}. Consider the
estimator via factors $\hat{V}$ $\in$ $\R^{(L+1)n_y \times n_x}$, $\hat{Z}$ $\in$ $\R^{(L+1)n_u \times n_x}$. Define $U_i' := [U_i]_{1:2L+1}$, and 
\be\label{eq:polar_p2}
M := \frac{1}{N\l}\sum_{i=1}^{N}\left(\y_{2L+2}^{(i)} - \mathcal{H}^{\dagger}(\hat{V}\hat{Z}^T)\vecc(U_i')\right){\vecc(U_i')}^T;
\text{\textsf{Polar}}^{\eqref{eq:P2}}:=\nmm{\mathcal{H}(M)}_2.
\ee
Suppose $\hat{V}, \hat{Z}$ are any stationary points of Program \eqref{eq:P2}. 
If $\text{\textsf{Polar}}^{\eqref{eq:P2}} = 1$, then $\hat{V}$ and $\hat{Z}$ are globally optimal. 
Otherwise, the objective can be reduced 
by augmenting the top-singular vectors $(\v^*, \z^*)$
of $\mathcal{H}(M)$ to 
$\begin{bmatrix}\hat{V} & \tau^*\v^*\end{bmatrix}$ $\in$ $\mathbb{R}^{(L+1)n_y \times (n_x+1)}$ and 
$\begin{bmatrix}\hat{Z} & \tau^*\z^*\end{bmatrix}$ $\in$ 
$\mathbb{R}^{(L+1)n_u \times (n_x+1)}$, for some scaling factor $\tau^*$.
\end{proposition}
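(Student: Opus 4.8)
The plan is to recognize Program \eqref{eq:P2} as an instance of the nonconvex factorization framework of \cite{haeffele-vidal-tpami20} and to instantiate their global-optimality certificate. Writing $X = V Z^{T}\in\R^{(L+1)n_y\times(L+1)n_u}$ for the factored Hankel matrix and
\[
\ell(X) := \frac{1}{2N}\sum_{i=1}^{N}\nmF{\y_{2L+2}^{(i)} - \mathcal{H}^{\dagger}(X)\vecc(U_i')}^{2},
\]
the objective of \eqref{eq:P2} takes the form $\ell(V Z^{T}) + \frac{\l}{2}(\nmF{V}^{2}+\nmF{Z}^{2})$. Here $\ell$ is a convex quadratic in $X$ (a squared norm composed with the linear map $X\mapsto\mathcal{H}^{\dagger}(X)\vecc(U_i')$), while $X = V Z^{T} = \sum_{k}\v_k\z_k^{T}$ is a column-wise sum of the positively homogeneous degree-two map $\phi(\v,\z)=\v\z^{T}$ and the regularizer is the matching column-wise sum of $\theta(\v,\z)=\frac12(\nmF{\v}^{2}+\nmF{\z}^{2})$, which is also degree-two positively homogeneous. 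These are exactly the ingredients required by the framework, so the remaining task is to compute the relevant polar quantity and match it to $\text{\textsf{Polar}}^{\eqref{eq:P2}}$.

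The key computation is the gradient of $\ell$ in $X$. Differentiating the squared residual and pushing the derivative back through the linear map using the adjoint relationship $(\mathcal{H}^{\dagger})^{\dagger}=\mathcal{H}$, I obtain
\[
\nabla_X\ell(X) = -\frac1N\sum_{i=1}^{N}\mathcal{H}\!\left(\big(\y_{2L+2}^{(i)}-\mathcal{H}^{\dagger}(X)\vecc(U_i')\big)\vecc(U_i')^{T}\right) = -\l\,\mathcal{H}(M),
\]
which identifies $-\frac1\l\nabla_X\ell(X)$ with $\mathcal{H}(M)$ at $X=\hat V\hat Z^{T}$. Since $\theta$ induces, via $\phi$, the nuclear norm on $X$ through the variational identity $\nmm{X}_{*}=\min_{V Z^{T}=X}\frac12(\nmF{V}^{2}+\nmF{Z}^{2})$, the associated polar is the dual norm, i.e.\ the spectral norm, evaluated at the negative gradient and normalized by $\l$. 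Thus the abstract optimality threshold $\frac1\l\nmm{\nabla_X\ell}_2 = \nmm{\mathcal{H}(M)}_2 = \text{\textsf{Polar}}^{\eqref{eq:P2}}$, so the certificate ``globally optimal iff $\nmm{\nabla_X\ell}_2\le\l$'' becomes ``globally optimal iff $\text{\textsf{Polar}}^{\eqref{eq:P2}}\le1$.''

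It remains to turn this threshold into the stated dichotomy. Stationarity of $(\hat V,\hat Z)$ reads $\nabla_X\ell(X)\hat Z + \l\hat V = 0$ and $\nabla_X\ell(X)^{T}\hat V + \l\hat Z = 0$, i.e.\ $\hat V = \mathcal{H}(M)\hat Z$ and $\hat Z = \mathcal{H}(M)^{T}\hat V$; hence any nonzero column of $\hat V$ is a unit-singular-value direction of $\mathcal{H}(M)$, forcing $\text{\textsf{Polar}}^{\eqref{eq:P2}}\ge1$ and making the cases ``$=1$'' and ``$>1$'' exhaustive at a nontrivial stationary point. For the improvement step, append $\tau\v^{*}$ and $\tau\z^{*}$ (the top singular pair of $\mathcal{H}(M)$, unit norm) as new columns; this replaces $X$ by $X+\tau^{2}\v^{*}\z^{*T}$ and adds $\l\tau^{2}$ to the regularizer, so with $s=\tau^{2}$ the derivative of the objective at $s=0$ is $\IPF{\nabla_X\ell(X)}{\v^{*}\z^{*T}}+\l = \l\big(1-\text{\textsf{Polar}}^{\eqref{eq:P2}}\big)$. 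When $\text{\textsf{Polar}}^{\eqref{eq:P2}}>1$ this is strictly negative, so a line search yields $\tau^{*}>0$ strictly decreasing the objective, exactly as asserted; when $\text{\textsf{Polar}}^{\eqref{eq:P2}}=1$ the polar condition certifies global optimality through convexity of $\ell$ and the variational nuclear-norm identity.

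I expect the main obstacle to be the careful bookkeeping that ties the abstract polar of \cite{haeffele-vidal-tpami20} to the concrete matrix $\mathcal{H}(M)$: verifying that $\phi$ and $\theta$ meet the positive-homogeneity and regularity hypotheses of that framework, that the composition of the convex squared loss with the linear Markov-unfolding map $\mathcal{H}^{\dagger}(\cdot)\vecc(U_i')$ is convex and smooth in $X$, and above all that the adjoint step $(\mathcal{H}^{\dagger})^{\dagger}=\mathcal{H}$ is applied correctly so that the negative loss gradient in $X$-space is precisely $\l\,\mathcal{H}(M)$. A secondary point to dispatch is the role of $D$ (treated as $D=0$, or equivalently profiled out while preserving joint convexity in $(X,D)$) and the degenerate stationary point $\hat V=\hat Z=0$, where $X=0$ is optimal precisely when $\text{\textsf{Polar}}^{\eqref{eq:P2}}\le1$, consistent with the dichotomy.
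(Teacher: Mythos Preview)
Your proposal is correct and follows essentially the same route as the paper: both instantiate the \cite{haeffele-vidal-tpami20} framework with $\phi(\v,\z)=\v\z^{T}$ (the paper composes this with $\mathcal{H}^{\dagger}(\cdot)\vecc(U')$ inside the factor map, you absorb that linear piece into the convex loss---equivalent setups), verify the gauge/homogeneity hypotheses, and compute the polar via the adjoint identity to obtain $\nmm{\mathcal{H}(M)}_2$. Your extra self-contained details---that stationarity forces $\text{\textsf{Polar}}^{\eqref{eq:P2}}\ge1$ and the explicit first-order decrease calculation for the augmentation step---are things the paper simply defers to the cited framework, so they only make your argument more complete.
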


\begin{corollaryA}\label{crl:crl_p2}
Under the conditions of Proposition \ref{prop:prop_p2}, if $(\hat{n}_x, \hat{V}, \hat{Z})$ are the globally
optimal points of program \eqref{eq:P2}, then the system has the order 
$\hat{n}_x$ and system parameters take the form
\be
\hat{A} = \left(\left[\hat{V}\right]_{1:n_y, :}\right)^{\dagger}\left[\hat{V}\right]_{1+n_y:2n_y, :},
\hat{B} = \left[\hat{Z}^T\right]_{:, 1:n_u},
\hat{C} = \left[\hat{V}\right]_{1:n_y, :}.
\ee
\end{corollaryA}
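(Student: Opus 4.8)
The plan is to read this statement as an instance of classical Ho--Kalman realization theory, layered on top of the rank characterization of the global optimum supplied by Proposition~\ref{prop:prop_p2}. The only genuinely new content is to verify that the factors $\hat V,\hat Z$ produced at a global optimum of \eqref{eq:P2} carry, respectively, the extended \emph{observability} and \emph{controllability} structure of a minimal realization, so that the stated read-off formulas return a consistent triple $(\hat A,\hat B,\hat C)$ of order $\hat n_x$.

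First I would pin down the order. At a global optimum the polar optimality condition of Proposition~\ref{prop:prop_p2} holds and no rank-one augmentation lowers the objective; invoking the positively-homogeneous/BM framework of \cite{haeffele-vidal-tpami20,burer-monteiro-mp03} together with the balancing identity $\tfrac12(\nmF{\hat V}^2+\nmF{\hat Z}^2)=\nmm{\hat V\hat Z^T}_*$, this certifies that $\hat V\hat Z^T$ solves the equivalent nuclear-norm problem and that $\hat V,\hat Z$ may be taken full column rank with $\hat n_x=\rk(\hat V\hat Z^T)$. The outer minimization over $n_x$ forces $\hat n_x$ to equal this rank. Since the learned Markov sequence is $\mathcal{H}^{\dagger}(\hat V\hat Z^T)$ and $\mathcal{H}\circ\mathcal{H}^{\dagger}={\sf id}$, its block-Hankel matrix coincides with $\hat V\hat Z^T$ and has rank $\hat n_x$; the order--rank correspondence \cite[Cor.~6.5.7]{sontag2013mathematical} then yields a minimal realization of order exactly $\hat n_x$, which is the first claim.

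Next I would extract the matrices. A rank-$\hat n_x$ block-Hankel matrix factors as $\mathcal{O}\mathcal{C}$, where $\mathcal{O}=[C^T,(CA)^T,\dots,(CA^{L})^T]^T$ is the extended observability matrix and $\mathcal{C}=[B,AB,\dots,A^{L}B]$ the extended controllability matrix of a minimal triple $(A,B,C)$. Because $\hat V$ and $\mathcal{O}$ are both full-column-rank factors of the same rank-$\hat n_x$ matrix, there is an invertible $T$ with $\hat V=\mathcal{O}T$ and $\hat Z^T=T^{-1}\mathcal{C}$; equivalently $\hat V,\hat Z^T$ are the observability/controllability matrices of the similar (hence equally valid) realization $(T^{-1}AT,T^{-1}B,CT)$, so up to this gauge we may take $\hat V=\mathcal{O}$, $\hat Z^T=\mathcal{C}$. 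Reading off the first block row of $\mathcal{O}$ gives $\hat C=[\hat V]_{1:n_y,:}=C$, and the first block column of $\mathcal{C}$ gives $\hat B=[\hat Z^T]_{:,1:n_u}=B$. For $\hat A$ I would use the shift-invariance $\mathcal{O}_{\uparrow}=\mathcal{O}_{\downarrow}A$, where $\mathcal{O}_{\downarrow},\mathcal{O}_{\uparrow}$ drop the last, resp.\ first, block row of $\mathcal{O}$; left-multiplying by $\mathcal{O}_{\downarrow}^{\dagger}$ returns $A=\mathcal{O}_{\downarrow}^{\dagger}\mathcal{O}_{\uparrow}$, which is the displayed expression for $\hat A$.

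I expect two steps to be load-bearing. The first is certifying that $\hat V\hat Z^T$ is genuinely block-Hankel of rank $\hat n_x$ at the optimum: the program constrains only $\mathcal{H}^{\dagger}(\hat V\hat Z^T)$, so one must use $\mathcal{H}\circ\mathcal{H}^{\dagger}={\sf id}$ together with the optimality certificate to transfer the rank statement onto the Hankel matrix itself. The second, and the real obstacle, is guaranteeing that the observability block entering the $\hat A$ formula has full column rank, so that $\mathcal{O}_{\downarrow}^{\dagger}\mathcal{O}_{\downarrow}=I_{\hat n_x}$ and the pseudoinverse recovers $A$ exactly rather than the projected surrogate $C^{\dagger}CA$. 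This is where minimality and a horizon $L$ exceeding the observability index are essential, and I would state that rank condition explicitly as the hypothesis under which the read-off for $\hat A$ is exact; I would likewise record that the recovered triple is determined only up to the similarity gauge $T$ inherited from the non-uniqueness of the factorization $\hat V\hat Z^T$.
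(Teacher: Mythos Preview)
Your approach is essentially the paper's: both argue that the global optimum of \eqref{eq:P2} coincides with the nuclear-norm minimizer, identify $\hat V$ and $\hat Z^T$ with extended observability and controllability matrices of a minimal realization (the paper via the SVD factors of $\mathcal{H}(H^*)$ up to an orthonormal gauge, you via a generic rank factorization up to an invertible gauge $T$), and then read off $(\hat A,\hat B,\hat C)$ Ho--Kalman style. One small slip worth tightening: the displayed $\hat A$ formula pseudoinverts only the \emph{first} block row $[\hat V]_{1:n_y,:}=\hat C$, not the full shifted stack $\mathcal{O}_{\downarrow}$ you invoke, so the recovered quantity is literally $\hat C^{\dagger}\hat C A$; you do flag in your final paragraph that this needs $\hat C$ to have full column rank (hence $n_y\ge \hat n_x$), a hypothesis the paper's own short proof leaves implicit as well.
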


\textbf{Remarks:}
From Proposition \ref{prop:prop_p2} we can utilize any
first-order algorithms such as gradient descent, Polyak's momentum method 
\citep{polyak-ussr64}, or Nesterov accelerated method \citep{nesterov-nag83}
to approximately reach stationary points 
in polynomial time, we assume exact convergence for technical convenience. 
Then from the Equation \ref{eq:polar_p2}
we need to verify the condition $\text{\textsf{Polar}}^{\eqref{eq:P2}}$ = $1$. For which
we need to compute the Hankel norm that is computationally expensive requiring 
$\mathcal{O}(Ln_yn_u)$
iterations \citep{cariow-gliszczy-er12}.
Moreover, $\text{\textsf{Polar}}^{\eqref{eq:P2}}$ can
never be strictly less than 1, 
see \cite{haeffele_global_2017} for further discussion. 
However, for  program \eqref{eq:P1} we will see that such optimality check
is faster. To estimate
the system parameters we can perform a pseudo inverse on the learned factors, $V, Z$ as presented in Corollary 
\ref{crl:crl_p2}. By the variational nuclear norm re-paramterization, we have eliminated
the need for separate Ho-Kalman procedure that was needed in many Sub-space recovery algorithms like N4SID \citep{van-automatica94}. 

Before we state our next results we define
$\gamma(a) := \sum_{t=0}^{L}a^{2t}$, $P(a)$ $\in$ $\R^{2(L+1) \times 2(L+1)}$
where
\be\label{eq:reg}
[P(a)]_{i, j} = \begin{cases}
a^{i-j-1}\text{ if }i>j \\
0\text{ otherwise }
\end{cases} \in \R
\text{, and }
\Theta(\a, B, C, D) := \sum_{j=1}^{n_x}\gamma(a_j)\nmm{\b_j}_2\nmm{\c_j}_2.
\ee
Regularization $\Theta(\a, B, C, D)$ resembles the atomic norm type norm
considered in matrix factorization problems \citep{bach-arxiv13, haeffele_global_2017}
whose optimality guarantees are well-studied. We next present optimality guarantees for 
applying this atomic-norm-type regularization directly to system parameters in the context 
of low-order SysID.

\begin{theoremA}\label{thm:thm_p1_opt}
Let $(U_i, Y_i)$ be $N$ roll-outs from the system \eqref{eq:ls} with $\x_0 = 0$. 
Consider the
estimator of system parameters via $\big({\textsf{diag}}\left(\a\right)$, $\begin{bmatrix}\b_1, \b_2, \dots, \b_{n_x}\end{bmatrix}^T, \begin{bmatrix}\c_1, \c_2, \dots, \c_{n_x}\end{bmatrix}\big)$, of order $n_x$ where $\a \in \R^{n_x}$, $\b_{\cdot} \in \R^{n_u}$, $\c_{\cdot} \in \R^{n_y}$. 
Define $U_i' := [U_i]_{1:2L+1}$, and 
\be\label{eq:polar_p1}
M(a)\hspace{-2pt}:= \frac{1}{2N(L+1)\l}\sum_{i=1}^{N}\left[\hspace{-1pt}Y_i - \sum_{j=1}^{n_x}\c_j\b_j^TU_i'P^T(a_j)\hspace{-2pt}\right]\hspace{-1pt}\frac{P(a)}{\gamma(a)}{U_i'}^{T}\text{, }
\text{\textsf{Polar}}^{\eqref{eq:P1}}\hspace{-2pt}:= \sup_{a \in {\R}} \nmm{M(a)}_2.
\ee
Let $\{a_j, \c_j, \b_j\}_{j=1}^{n_x}$ be any stationary points of program \eqref{eq:P1}.
If $\text{\sffamily Polar}^{\eqref{eq:P1}} = 1$, then $\{a_j, \c_j, \b_j\}$ are global optimal points. Otherwise, we can
reduce the objective by the parameters 
$\{a_j, \c_j, \b_j\}_{j=1}^{n_x}$ $\cup$ $\{\tau_1a^*$ $,\tau\c^*,$ $\tau\b^*\}$ for
some scalings, $\tau_1, \tau$, where
$a^*$ is the supremizer of $\kappa$ and $\c^*, \b^*$ are the top-singular vectors the matrix $M(\a^*)$.
\end{theoremA}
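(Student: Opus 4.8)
The plan is to recognize Theorem~\ref{thm:thm_p1_opt} as an instance of the general optimality framework for sums of positively homogeneous functions developed in \cite{haeffele-vidal-tpami20}, exactly as was done for Proposition~\ref{prop:prop_p2}. The first step is to cast the objective of \eqref{eq:P1} into the required structural form: a data-fidelity loss $\ell(\cdot)$ composed with a linear map of a ``factorized'' variable, plus a regularizer $\Theta$ that is a sum over the $n_x$ atoms $\{(a_j,\b_j,\c_j)\}$ of a function $\theta(a_j,\b_j,\c_j) := \gamma(a_j)\nmm{\b_j}_2\nmm{\c_j}_2$. I would verify that $\theta$ is positively homogeneous of the appropriate total degree in the jointly-scaled factors $(\b_j,\c_j)$ and that the map $(a_j,\b_j,\c_j)\mapsto \c_j\b_j^T U_i' P^T(a_j)$ entering the loss is homogeneous of the matching degree, so that the pair $(\text{loss map},\theta)$ satisfies the ``matched-degree'' homogeneity hypothesis of the general theorem. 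The key subtlety relative to the BM case is that the nonlinear parameter $a_j$ enters through $\gamma(a_j)$ and $P(a_j)$, so I would treat $a_j$ as a continuous atom index and $(\b_j,\c_j)$ as the homogeneous weight attached to that atom, which recovers the atomic-norm structure alluded to after \eqref{eq:reg}.

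Second, I would compute the polar (dual) gauge associated with $\Theta$. For a sum-of-atoms regularizer of this type, the general theory certifies global optimality of a stationary point precisely when the polar of the gradient of the loss equals $1$, and the polar is obtained by maximizing, over a single normalized atom, the inner product of that atom's image under the (adjoint of the) loss map against the loss gradient. Concretely, I would form the residual-weighted matrix that the loss gradient produces, project it through the adjoint of the map $\b\mapsto \c\b^T U_i' P^T(a)$, normalize by $\gamma(a)$ to account for the atom's intrinsic scale, and take the spectral norm — this is exactly the quantity $\nmm{M(a)}_2$, and supremizing over the scalar $a\in\R$ yields $\text{\sffamily Polar}^{\eqref{eq:P1}}$. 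Here the sup over $a$ plays the role that the top singular value played in the finite-dimensional BM setting, reflecting that the atoms form a continuum indexed by the real eigenvalue $a$.

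Third, I would invoke the two conclusions of the general framework. If $\text{\sffamily Polar}^{\eqref{eq:P1}}=1$ at a stationary point, the stationarity conditions together with the polar characterization certify that no descent direction exists, giving global optimality. If instead the polar exceeds $1$, the witnessing atom — the supremizing $a^*$ together with the leading singular vectors $\c^*,\b^*$ of $M(a^*)$ — supplies a strict descent direction: appending the scaled atom $(\tau_1 a^*,\tau\c^*,\tau\b^*)$ to the current collection and choosing the scalings $\tau_1,\tau$ to move along this direction strictly decreases the objective, which is the ``otherwise'' clause of the statement. I would verify the descent quantitatively by a first-order expansion of the loss against the appended atom, showing the directional derivative is controlled by $1-\text{\sffamily Polar}^{\eqref{eq:P1}}<0$.

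The main obstacle I anticipate is the dependence of both $P(a)$ and the normalization $\gamma(a)$ on the continuous parameter $a$, which breaks the clean bilinear factorization present in the pure matrix-factorization setting of \cite{haeffele-vidal-tpami20}. I would need to confirm that $\theta(a,\b,\c)=\gamma(a)\nmm{\b}_2\nmm{\c}_2$ is a valid atomic gauge — in particular that its induced polar is well-defined and that the supremum over $a$ in $\text{\sffamily Polar}^{\eqref{eq:P1}}$ is attained (or at least approached), so the witnessing atom $a^*$ exists. A secondary technical point is handling the scalar factor $a_j$ separately from the homogeneous pair $(\b_j,\c_j)$: since scaling $a_j$ is not a homogeneity direction, I would keep $a_j$ fixed within each atom and only exploit joint positive homogeneity in $(\b_j,\c_j)$ when applying the general theorem, introducing the separate scaling $\tau_1$ for the $a$-component to reconcile the statement's form with the framework's.
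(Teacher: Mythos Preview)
Your proposal is correct and follows essentially the same route as the paper: cast \eqref{eq:P1} into the sum-of-atoms framework of \cite{haeffele_global_2017}, note that homogeneity holds only in $(\b_j,\c_j)$ and not in $a_j$, compute the polar as $\sup_{a}\nmm{M(a)}_2$ via the adjoint calculation you describe, and invoke the global-optimality/descent dichotomy. The paper formalizes your ``treat $a$ as a continuous atom index'' fix by invoking the \emph{balanced homogeneity} extension of \cite{tadipatri-et-al-iclr25} (their Proposition establishing that the inner-product equality $\langle -\tfrac{1}{\lambda}\nabla\ell,\phi(W_j)\rangle=\theta(W_j)$ still holds at stationary points under partial homogeneity), which is precisely the content of your final paragraph; you would need to either cite that result or reprove it to make the argument complete.
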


\textbf{Remarks:}
Similar to the discussion in Proposition \ref{prop:prop_p2}, we can use any first-order
optimization methods to approximately reach stationary points in polynomial time. Once we reach stationary
points, this leaves us to check the condition $\textsf{Polar}^{\eqref{eq:P1}} = 1$, this requires
us to maximize the rational matrix polynomial, this can be done through line search
methods in one-dimension such as golden-section or Fibonacci search in polynomial time \citep{ben-nemirovski-lecture01}.

\textbf{Proof Strategy:}
Our proofs for Proposition \ref{prop:prop_p2}, and Theorem \ref{thm:thm_p1_opt} rely using the general framework
for global optimality by \cite{haeffele_global_2017}, who studied
global optimality guarantees for the objective of the form,
\vspace{-10pt}
\bd
\min_{r, \{W_j\}_{j=1}^{r}}\ell\left(Y, \sum_{j=1}^{r}\phi(W_j; X)\right) + \l \sum_{j=1}^{r}\theta(W_j).
\ed
where $\phi(\cdot)$, and $\theta(\cdot)$ are positively homogeneous with the same degree.

For program \eqref{eq:P2}, \eqref{eq:P1} we can re-write the
predictions and regularization as
\vspace{-10pt}
\be\label{eq:factor_maps}
{\Phi_{n_x}}^{\eqref{eq:P2}}(V, Z; U) := \sum_{j=1}^{n_x}\mathcal{H}^{\dagger}(\v_j\z_j^T)\vecc(U');
{\Theta_{n_x}}^{\eqref{eq:P2}}(V, Z) := \frac{1}{2}\sum_{j=1}^{n_x}\nmm{\v_j}_2^2 + \nmm{\z_j}_2^2, 
\ee
\be\label{eq:factor_maps_1}
{\Phi_{n_x}}^{\eqref{eq:P1}}(\a, B, C; U)\hspace{-2pt}:=\hspace{-2pt}\sum_{j=1}^{n_x}\left[P(a_j) \otimes \c_j\b_j^T\right]\vecc(U');
{\Theta_{n_x}}^{\eqref{eq:P1}}(\a, B, C)\hspace{-2pt}:=\hspace{-2pt}\sum_{j=1}^{n_x}\gamma(a_j)\nmm{\b_j}_2\nmm{\c_j}_2.
\ee
The pair $\left({\Phi_{n_x}}^{\eqref{eq:P2}}(\cdot), {\Theta_{n_x}}^{\eqref{eq:P2}}(\cdot)\right)$ satisfy homogeneity property which
enables use the results of \cite{haeffele_global_2017}. However, the pair $\left({\Phi_{n_x}}^{\eqref{eq:P1}}(\a, B, C; U),
{\Theta_{n_x}}^{\eqref{eq:P1}}(\a, B, C)\right)$ are positively homogeneous with respective to only $B$ and $C$ but
not $\a$. Nevertheless, this technical challenge was resolved in \cite{tadipatri-et-al-iclr25} by relaxing to a 
weaker positive homogeneity property.

\section{Statistical error rates and Sample complexities for nonconvex programs}\label{sec:stat_results}
In this section, we present statistical error rates and sample complexities
for each of the formulations \eqref{eq:P2}, and \eqref{eq:P1}. The high-level
proof strategy is same for Theorem \ref{thm:thm_p2_stat}, and \ref{thm:thm_p1_stat}
which is discussed near the end of this section for full proofs see \textsection\ref{sec:apdx_stat}.
To begin, we state few assumptions
that are required for statistical recovery of the system parameters. We impose
tail conditions on the inputs and noise.
\begin{assumption}[Data Model]\label{ass:data_model}
The control inputs, $\u_t \in \R^{n_u}$ are drawn independently
from sub-Gaussian distribution
with a proxy variance $\sigma_U^2/n_u$ and zero mean,
i.e., for any unit vector $\v \in \R^{n_u}$ and $\forall \l \geq 0$
$\mathbb{E}\left[\exp(\l\IP{\u_t}{\v})\right] \leq 
\exp\left({\l^2\sigma_U^2}/{2n_u}\right)$.
Let the covariance of $\u_t$ be $\Sigma_U \succeq 0$, and
define block matrix
$\tilde{\Sigma}_{U} = I_{2(L+1)} \otimes \Sigma_U$.
The outputs are governed by the system \eqref{eq:ls} with order $n_x^*$,
and $\zeta_{t} | \u_{1:t}$ is conditionally independent sub-Gaussian distribution
with a proxy variance $\sigma^2/n_y$. Denote the
impulse response for this linear system as $G^*$.
\end{assumption}
Our next assumption ensures compactness of the parameter class.
\begin{assumption}[Bounded parameters for \eqref{eq:P2}]\label{ass:local_lip_hankel}
The learned parameters lie in the parametric class defined by 
\be
\F_{\theta}^{\eqref{eq:P2}} := \left\{(V, Z): n_x \in \mathbb{N},
\nmm{\v_j}_2 \leq B_v, \nmm{\z_j}_2 \leq B_z\right\}.
\ee
\end{assumption}
Before we state Theorem \ref{thm:thm_p2_stat} we define optimal regularizer
\be\label{eq:opt_reg_P2}
\Omega^{\eqref{eq:P2}}(\hat{G}) := \inf_{n_x, V, Z \in \F_{\theta}^{\eqref{eq:P2}}}\frac{1}{2}\left[\nmF{V}^2 + \nmF{Z}^2\right],\text{ s.t. }G'(V, Z) = \hat{G},
\ee
and let $E_t \in \R^{n_y \times 2(L+1)n_y}$
such that $[E_t]_{1+(t-1)n_y:tn_y} = I_{n_y}$,
and rest of them to be zero. Now define condition number as 
\be
{\sf cond}_{\F_{\theta}^{P}} :=
\sup_{\theta \in \F_{\theta}^{P}}\sup_{t \in [2(L+1)]}
\frac{\sqrt{\mathbb{E}\left[\nmm{E_t\Phi_{n_x}^{P}(\theta; U)}_F^4\right]}}{\mathbb{E}\left[\nmm{E_t\Phi_{n_x}^{P}(\theta; U)}_F^2\right]}; P \in \{\eqref{eq:P2}, \eqref{eq:P1}\}.
\ee
Note that $\Phi_{n_x}^{P}(\theta; U)$ 
(see Equations \eqref{eq:factor_maps}, and \eqref{eq:factor_maps_1}) is linear in $U$. 
Therefore, if $U$ follows a Gaussian distribution, the 
condition number ${\sf cond}_{\F_{\theta}^{P}}$ evaluates to 
3. For sub-Gaussian distributions, this can be bounded 
using Proposition 6.1 from \cite{ziemann_tutorial_2024}.

\begin{theoremA}\label{thm:thm_p2_stat}
Let $(U_i, Y_i)$ be $N$ i.i.d roll-outs following the Assumptions \ref{ass:data_model}.
Fix a $\delta \in (0, 1]$. 
Suppose the regularization parameter is such that
$\l \leq \tilde{\mathcal{O}}\left(\frac{n_x(n_y + n_u)}{N} + \frac{\ln(1/\delta)}{NL}\right)$.
For any global optimal points $(n_x, \hat{V}, \hat{Z})$ of program \eqref{eq:P2} satisfying Assumption \ref{ass:local_lip_hankel}.
If $N/\ln(NL) \gtrsim {\sf cond}_{\F_{\theta}^{\eqref{eq:P2}}}^2 \times [Ln_x(n_y + n_u) + \ln(1/\delta)]$, then
w.p at-least $1-\delta$ we have that
\be\label{eq:stat_p2}
\begin{split}
\nmm{(G'(\hat{V}, \hat{Z})-G^*){\tilde{\Sigma}_{U}}^{1/2}}_{F}^2
\leq \tilde{\mathcal{O}}\left(\left(\sigma^2  + \Omega^{\eqref{eq:P2}}(G^*)\right)\left[\frac{n_x(n_y + n_u)}{N} + \frac{\ln(1/\delta)}{NL}\right]\right).
\end{split}
\ee
\end{theoremA}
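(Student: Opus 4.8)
The plan is to recast Program \eqref{eq:P2} as a penalized least-squares problem and then instantiate the localized excess-risk machinery of \cite{ziemann-tu-nips22}. Abbreviating the prediction map and regularizer ${\Phi_{n_x}}^{\eqref{eq:P2}}$, ${\Theta_{n_x}}^{\eqref{eq:P2}}$ of \eqref{eq:factor_maps} as $\Phi,\Theta$, the objective reads $\frac{1}{2N}\sum_i\nmF{y-\Phi(\cdot;U_i)}^2+\lambda\Theta(\cdot)$, and recall that $\Theta$ is the variational form of the Hankel nuclear norm whose infimal value realizing a prescribed impulse response is $\Omega^{\eqref{eq:P2}}$ of \eqref{eq:opt_reg_P2}. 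Write $\hat\theta:=(\hat V,\hat Z)$ for the global optimum and $\theta^*:=(V^*,Z^*)$ for any factorization realizing $G^*$ and attaining $\Omega^{\eqref{eq:P2}}(G^*)$, and set $\Delta:=G'(\hat V,\hat Z)-G^*$. Substituting $y=\Phi(\theta^*;U_i)+\zeta_i$ into the inequality $\text{obj}(\hat\theta)\leq\text{obj}(\theta^*)$ and cancelling the pure-noise term yields the basic inequality
\[
\frac{1}{2N}\sum_{i=1}^{N}\nmF{\Phi(\hat\theta;U_i)-\Phi(\theta^*;U_i)}^2 \leq \frac{1}{N}\sum_{i=1}^{N}\IP{\zeta_i}{\Phi(\hat\theta;U_i)-\Phi(\theta^*;U_i)} + \lambda\left[\Omega^{\eqref{eq:P2}}(G^*)-\Theta(\hat\theta)\right].
\]
The left side is the empirical prediction error in $\Delta$; the two right-hand terms are the noise multiplier process and the regularization slack.

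First I would control the multiplier term by nuclear-norm duality: the inner product is bounded by $\nmm{\mathcal{H}\!\left(\tfrac1N\sum_i \zeta_i \vecc(U_i')^T\right)}_2\cdot\nmm{\mathcal{H}(\Delta)}_*$, so the empirical ``score'' must be concentrated in operator norm. Using the sub-Gaussian tails of $\zeta$ and $U$ from Assumption \ref{ass:data_model} together with a matrix-concentration/covering estimate over the $(L+1)n_y\times(L+1)n_u$ Hankel structure, the score concentrates at a rate that fixes the admissible scale of $\lambda$; this is exactly the hypothesis $\lambda\leq\tilde{\mathcal{O}}(n_x(n_y+n_u)/N+\ln(1/\delta)/(NL))$, chosen so the penalty dominates the noise and keeps the error inside a nuclear-norm ball of radius $\asymp\Omega^{\eqref{eq:P2}}(G^*)$. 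After the rank-$\leq n_x$ localization (guaranteed at the optimum by Proposition \ref{prop:prop_p2} and Corollary \ref{crl:crl_p2}), this step produces the variance contribution $\tilde{\mathcal{O}}(\sigma^2[n_x(n_y+n_u)/N+\ln(1/\delta)/(NL)])$ and, in particular, the $\sigma^2$ and $\ln(1/\delta)/(NL)$ pieces of the final bound.

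Next I would transfer the empirical prediction error on the left to the population quantity $\nmF{\Delta\,\tilde{\Sigma}_U^{1/2}}^2$ via a one-sided uniform (lower-isometry) bound over the localized class $\F_{\theta}^{\eqref{eq:P2}}$ of Assumption \ref{ass:local_lip_hankel}. Since $\Phi$ is linear in $U$, this is a restricted-eigenvalue statement for the empirical second-moment operator against its population counterpart, uniformly over norm-bounded, rank-$\leq n_x$ Hankel matrices. Here I would invoke the hypercontractivity route of \cite{ziemann-tu-nips22}: the relevant fourth-to-second moment ratio is precisely ${\sf cond}_{\F_{\theta}^{\eqref{eq:P2}}}$, which Assumption \ref{ass:data_model} and Proposition 6.1 of \cite{ziemann_tutorial_2024} bound for sub-Gaussian inputs. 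Combining this with a Dudley/covering estimate for the localized nuclear-norm ball, whose metric entropy scales like $L\,n_x(n_y+n_u)$, yields a two-sided norm equivalence valid once $N/\ln(NL)\gtrsim{\sf cond}_{\F_{\theta}^{\eqref{eq:P2}}}^2[L\,n_x(n_y+n_u)+\ln(1/\delta)]$, which is the stated burn-in/sample-complexity condition.

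Finally, I would assemble an oracle inequality: inserting the lower-isometry bound into the left of the basic inequality, absorbing the multiplier term into the penalty through the choice of $\lambda$, and using $\Theta(\hat\theta)\geq0$, the population error becomes $\tilde{\mathcal{O}}(\lambda\,\Omega^{\eqref{eq:P2}}(G^*))$ plus the variance term, which collapses to the advertised $(\sigma^2+\Omega^{\eqref{eq:P2}}(G^*))$ prefactor in \eqref{eq:stat_p2}. I expect the main obstacle to be the third step: establishing the uniform restricted-eigenvalue/localization bound over a nuclear-norm ball whose effective dimension grows with the trajectory length $L$, while assuming only sub-Gaussian (hypercontractive) rather than Gaussian design and allowing temporal correlation within each roll-out. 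This is where the ${\sf cond}_{\F_{\theta}^{\eqref{eq:P2}}}^2$ factor and the logarithmic burn-in in $NL$ are forced, and where the localization of \cite{ziemann-tu-nips22} does the real work; by comparison the multiplier bound and the calibration of $\lambda$ are routine once the score's operator norm is under control.
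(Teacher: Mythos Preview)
Your route is genuinely different from the paper's, and the difference exposes a real gap. The paper does \emph{not} use nuclear-norm duality or operator-norm concentration of the score at all: it applies its general parametric result (Theorem~\ref{thm:apdx_timedep_risk} / Corollary~\ref{crl:excess_risk_recovery}) directly to the class $\F_{\theta}^{\eqref{eq:P2}}$, controlling the martingale term by a union bound over an $\epsilon$-net of the $(V,Z)$ parameter space of ambient dimension $(L+1)n_x(n_y+n_u)$. The hypothesis $\lambda\leq\tilde{\mathcal{O}}(\cdot)$ enters only at the very end, to ensure the regularization bias $\lambda\,\Omega^{\eqref{eq:P2}}(G^*)$ is of the same order as the variance term; it plays no role in controlling the noise.

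Your plan inverts this logic. In the standard dual argument you invoke, ``the penalty dominates the noise'' requires $\lambda \geq c\nmm{\mathcal{H}(\text{score})}_2$, i.e.\ a \emph{lower} bound on $\lambda$, so that the multiplier term can be absorbed into $\lambda\Theta(\hat\theta)$ and cancelled against the basic inequality. The theorem only gives you an \emph{upper} bound on $\lambda$; with $\lambda$ arbitrarily small (which the hypothesis allows) your absorption step in the final assembly simply fails. Moreover, the score's Hankel operator norm knows nothing about $n_x$ and does carry an $L$-dependent factor from the $(L+1)n_y\times(L+1)n_u$ dimensions, so the claim that its concentration rate ``is exactly the hypothesis $\lambda\leq\tilde{\mathcal{O}}(n_x(n_y+n_u)/N+\cdots)$'' cannot be right on dimensional grounds. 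If you want to salvage the duality route you would need to control the multiplier independently of $\lambda$ (e.g.\ via rank-$\leq n_x+n_x^*$ localization and an offset inequality of the form $\nmm{\text{score}}_2\sqrt{n_x}\nmm{\Delta}_F\leq \tfrac12\nmm{\Delta}_F^2 + C n_x\nmm{\text{score}}_2^2$), but then you must check that the resulting variance $n_x\nmm{\text{score}}_2^2$ actually lands at $n_x(n_y+n_u)/N$ rather than $L\,n_x(n_y+n_u)/N$; the paper avoids this accounting entirely by treating the problem as parametric and letting the $(L+1)$ in the dimension cancel against the $T=2(L+1)$ time steps in Theorem~\ref{thm:apdx_timedep_risk}.
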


\begin{corollaryA}\label{crl:crl_p2_stat}
Under the conditions of Theorem \ref{thm:thm_p2_stat},
the sample complexity for the recovery of $G^*$
through program \eqref{eq:P1} is
$NL \gtrsim (Ln_x(n_y + n_u) + \ln(1/\delta)) \cdot {\sf polylog}(NL)$.
\end{corollaryA}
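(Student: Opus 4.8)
The plan is to read the stated bound directly off the sample-complexity precondition that gates the recovery guarantee of Theorem \ref{thm:thm_p2_stat}. Indeed, the recovery of $G^*$ through program \eqref{eq:P1} is witnessed by the high-probability error bound \eqref{eq:stat_p2}, and Theorem \ref{thm:thm_p2_stat} certifies that bound (with probability at least $1-\delta$) exactly when
\[
N/\ln(NL) \gtrsim {\sf cond}_{\F_{\theta}^{\eqref{eq:P2}}}^2\,\bigl[L n_x(n_y+n_u)+\ln(1/\delta)\bigr].
\]
Hence recovery is active precisely in this regime, and it remains only to simplify this inequality into the claimed total-sample form.

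First I would discharge the condition-number factor. Because $\Phi_{n_x}^{\eqref{eq:P2}}(\theta;U)$ is linear in $U$ by \eqref{eq:factor_maps}, each block $E_t\Phi_{n_x}^{\eqref{eq:P2}}(\theta;U)$ is a fixed linear image of the sub-Gaussian input, so ${\sf cond}_{\F_{\theta}^{\eqref{eq:P2}}}$ is a ratio of the fourth to the squared second moment of a sub-Gaussian vector. For Gaussian inputs this ratio equals $3$ exactly; for the sub-Gaussian inputs of Assumption \ref{ass:data_model} it is controlled by the moment-equivalence (hypercontractivity) estimate of Proposition 6.1 in \cite{ziemann_tutorial_2024}. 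Taking the supremum over $\theta\in\F_{\theta}^{\eqref{eq:P2}}$ and $t\in[2(L+1)]$ then leaves ${\sf cond}_{\F_{\theta}^{\eqref{eq:P2}}}=\mathcal{O}(1)$, so the factor ${\sf cond}^2$ folds into the implicit constant hidden by $\gtrsim$.

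With the condition number a constant, the precondition reads $N \gtrsim \ln(NL)\,[L n_x(n_y+n_u)+\ln(1/\delta)]$. I would then absorb $\ln(NL)$ into ${\sf polylog}(NL)$ and pass to the total observed sample budget, which is $\mathcal{O}(NL)$: since $L\geq 1$ gives $NL\geq N$, the per-trajectory requirement propagates to $NL \gtrsim N \gtrsim [L n_x(n_y+n_u)+\ln(1/\delta)]\cdot{\sf polylog}(NL)$, which is exactly the asserted sample complexity for recovery.

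The one step deserving care — and the likely obstacle — is the uniform, $L$-free bound on ${\sf cond}_{\F_{\theta}^{\eqref{eq:P2}}}$: I must verify that the moment-equivalence constant supplied by \cite{ziemann_tutorial_2024} depends only on the sub-Gaussian proxy and not on the trajectory length $L$ or on $(n_x,n_y,n_u)$, since any hidden growth there would reinject trajectory scaling and invalidate the clean ${\sf polylog}(NL)$ factor. The remaining manipulations — substituting the constant bound, collapsing the logarithm into ${\sf polylog}(NL)$, and converting the per-trajectory count $N$ to the total $NL$ — are routine bookkeeping.
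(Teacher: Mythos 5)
There is a genuine gap: you read the sample complexity off the wrong object. The paper's proof derives the bound from the \emph{error rate} itself: the right side of \eqref{eq:stat_p2} is $\tilde{\mathcal{O}}\big((\sigma^2+\Omega^{\eqref{eq:P2}}(G^*))\big[\tfrac{n_x(n_y+n_u)}{N}+\tfrac{\ln(1/\delta)}{NL}\big]\big)$, and it decays asymptotically precisely when $N \gtrsim n_x(n_y+n_u)\cdot{\sf polylog}(NL)$ and $NL \gtrsim \ln(1/\delta)\cdot{\sf polylog}(NL)$; multiplying the first requirement by $L$ gives the stated $NL \gtrsim (Ln_x(n_y+n_u)+\ln(1/\delta))\cdot{\sf polylog}(NL)$. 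You instead extract the bound from the burn-in precondition $N/\ln(NL)\gtrsim {\sf cond}_{\F_{\theta}^{\eqref{eq:P2}}}^2[Ln_x(n_y+n_u)+\ln(1/\delta)]$, and your key step --- ``since $NL\geq N$, the requirement propagates to $NL \gtrsim N \gtrsim \cdots$'' --- runs the implication in the wrong direction. It shows that \emph{whenever the burn-in holds}, the claimed inequality on $NL$ follows as a consequence; a sample-complexity statement asserts the converse, namely that a total budget satisfying $NL \gtrsim (Ln_x(n_y+n_u)+\ln(1/\delta))\cdot{\sf polylog}(NL)$ suffices for recovery. The two are not equivalent: taking $L=N$, the claimed inequality reduces to $N \gtrsim n_x(n_y+n_u)\cdot{\sf polylog}(N)$, which is easily satisfiable, while the burn-in would demand $N \gtrsim N\,n_x(n_y+n_u)\ln(N^2)$, which is impossible --- so the claimed total-budget condition does not imply the regime your argument relies on.

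The gap also matters quantitatively. If you honestly converted the burn-in (a condition on $N$ alone) into a total-sample requirement by multiplying both sides by $L$, you would get $NL \gtrsim (L^2 n_x(n_y+n_u) + L\ln(1/\delta))\cdot{\sf polylog}(NL)$ --- an $L$-fold worse scaling than the corollary claims, which would erase the improvement over \cite{sun-et-al-ojcs22} advertised in Table \ref{tab:comparision_stat}. The corollary's $L$-linear scaling genuinely originates in the $n_x(n_y+n_u)/N$ term of the error rate, not in the burn-in, which is simply assumed via the clause ``under the conditions of Theorem \ref{thm:thm_p2_stat}.'' For the same reason, your paragraph bounding ${\sf cond}_{\F_{\theta}^{\eqref{eq:P2}}}$ --- correct in itself, and consistent with the paper's remark that Gaussian inputs give the constant $3$ and sub-Gaussian inputs are handled by Proposition 6.1 of \cite{ziemann_tutorial_2024} --- is tangential here: the condition number appears only in the assumed precondition and plays no role in the paper's one-line proof, which just observes that $NL \gtrsim (Ln_x(n_y+n_u)+\ln(1/\delta))\cdot{\sf polylog}(NL)$ makes the right side of \eqref{eq:stat_p2} decay.
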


\textbf{Remarks.} For a fixed failure rate in Equation \eqref{eq:stat_p2} we recover near tight
statistical error rates, na\"{\i}vely we have {\sffamily recovery error $\lesssim \sqrt{\text{\# parameters}/\text{\# samples}}$}.
This is optimal up-to a logarithmic factor in comparison to least-squares error. From Corollary \ref{crl:crl_p2_stat}
we infer that the sample complexity grows nearly linearly with the trajectory length, i.e.,
$NL \geq \tilde{\mathcal{O}}(Ln_x(n_y + n_u))$. This
recovers the
program \eqref{eq:P1_1} studied in \citep{sun-et-al-ojcs22}. 
In practice, dependency on the
trajectory length is undesirable because each trajectory in itself provides
data points albeit dependent ones.

However, this limitation does not apply to formulation \eqref{eq:P1}.
We next present Theorem \ref{thm:thm_p1_stat} which provide error rates and sample complexities for 
program \eqref{eq:P1}, effectively addressing the aforementioned issue. To set the stage, we 
begin by introducing a compactness assumption on the system parameters, similar to 
Assumption \ref{ass:local_lip_hankel}.

\begin{assumption}[Bounded parameters for \eqref{eq:P1}]\label{ass:local_lip}
The learned parameters lie in the parametric class defined by
\be
\F_{\theta}^{\eqref{eq:P1}} := \left\{(\a, B, C): n_x \in \mathbb{N}, |a_j| \leq B_a,
\nmm{\b_j}_2 \leq B_b, \nmm{\c_j}_2 \leq B_c\right\}.
\ee
Furthermore for all
$(\a', B', C'), (\a, B, C) \in \F_{\theta}^{\eqref{eq:P1}}$ and some constant $k_a$ dependent only on $B_a$ 
it holds true that
\be
\nmm{G({\sf diag}(\a'), B', C') - G({\sf diag}(\a), B, C)}_2 \leq k_a\left[\nmm{B'-B}_F + \nmm{C'-C}_F\right].
\ee
\end{assumption}

Next define optimal regularizer similar to Equation \eqref{eq:opt_reg_P2}
\be\label{eq:opt_reg_P1}
\Omega^{\eqref{eq:P1}}(\hat{G}) := \inf_{n_x, \a, B, C \in \F_{\theta}^{\eqref{eq:P1}}}\sum_{j=1}^{n_x}\gamma(a_j)\nmm{\b_j}_2\nmm{\c_j}_2,\text{ s.t. }
G({\sf diag}(\a), B, C) = \hat{G}.
\ee
Now we present the statistical recovery guarantee of program \eqref{eq:P1}.
\begin{theoremA}\label{thm:thm_p1_stat}
Let $(U_i, Y_i)$ be $N$ i.i.d roll-outs following the Assumptions \ref{ass:data_model}.
Fix a $\delta \in (0, 1]$. 
Suppose the regularization parameter is such that
$\l \leq \tilde{\mathcal{O}}\left(\frac{n_x(n_y + n_u) + \ln(1/\delta)}{NL}\right)$.
For any global optimal points $(n_x, \{\hat{a}_j\}, \hat{B}, \hat{C})$ of 
program \eqref{eq:P1} satisfies Assumption \ref{ass:local_lip}. 
If $N/\ln(NL) \gtrsim {\sf cond}_{\F_{\theta}^{\eqref{eq:P1}}}^2\times[n_x(n_y + n_u) + \ln(1/\delta)]$,
then w.p at-least $1-\delta$ we have that
\vspace{-5pt}
\be\label{eq:stat_p1}
\hspace{-5pt}\nmm{(G(\diag(\{\hat{a}_j\}), \hat{B}, \hat{C}) - G^*){\tilde{\Sigma}_U}^{1/2}}^2_{F}\hspace{-2pt}
\leq \tilde{\mathcal{O}}\left(\left(\sigma^2  + \Omega^{\eqref{eq:P1}}(G^*)\right)\left[\frac{n_x(n_y + n_u) + \ln(1/\delta)}{NL}\right]\right).
\ee
\end{theoremA}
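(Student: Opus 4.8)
The plan is to treat $\hat{G} := G(\diag(\hat{\a}), \hat{B}, \hat{C})$ as the global minimizer of the regularized objective in \eqref{eq:P1} (reachable by Theorem \ref{thm:thm_p1_opt}) and run a localized analysis of this regularized $M$-estimator, instantiating the dependent-data machinery of \cite{ziemann-tu-nips22}. Since $D=0$ and the truth has order $n_x^*$, Assumption \ref{ass:data_model} supplies the linear model $\vecc(Y_i) = G^*\vecc(U_i) + \boldsymbol{\zeta}_i$, where $\boldsymbol{\zeta}_i$ stacks the conditionally sub-Gaussian noises $\zeta_t^i$ and forms a martingale-difference sequence along the within-trajectory filtration. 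First I would write the basic inequality: optimality of $\hat{G}$ against a feasible comparator achieving $\Omega^{\eqref{eq:P1}}(G^*)$ yields, with $\Delta := \hat{G} - G^*$,
\[
\frac{1}{4N(L+1)}\sum_{i=1}^N \nmF{\Delta\,\vecc(U_i)}^2 \;\le\; \frac{1}{2N(L+1)}\sum_{i=1}^N \IP{\boldsymbol{\zeta}_i}{\Delta\,\vecc(U_i)} + \l\big(\Omega^{\eqref{eq:P1}}(G^*) - \Omega^{\eqref{eq:P1}}(\hat{G})\big).
\]
This isolates the two quantities to control: a random multiplier (noise) term and the regularization slack.

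Second, I would establish two concentration ingredients, both uniform over the difference class induced by $\F_\theta^{\eqref{eq:P1}}$. (a) \emph{Lower isometry}: relate the empirical quadratic form to the population norm $\nmF{\Delta\,\tilde{\Sigma}_U^{1/2}}^2$. Because $\Phi_{n_x}^{\eqref{eq:P1}}(\cdot;U)$ is linear in $U$, each summand is a quadratic form in the sub-Gaussian inputs, and the ratio that must be controlled is exactly the $L^4$--$L^2$ equivalence captured by ${\sf cond}_{\F_\theta^{\eqref{eq:P1}}}$; a Paley--Zygmund/hypercontractivity argument (Proposition 6.1 of \cite{ziemann_tutorial_2024}) gives a one-sided lower bound valid once $N/\ln(NL) \gtrsim {\sf cond}^2[n_x(n_y+n_u)+\ln(1/\delta)]$ — precisely the stated burn-in — with the Lipschitz bound of Assumption \ref{ass:local_lip} controlling the metric entropy of the parameter class by $n_x(n_y+n_u)$, independent of $L$. (b) \emph{Multiplier term}: conditioning on the inputs, the noise contribution is a self-normalized martingale sum, so conditional sub-Gaussianity with proxy $\sigma^2/n_y$ permits a self-normalized tail bound, and after localizing I expect
\[
\frac{1}{N(L+1)}\sum_{i=1}^N \IP{\boldsymbol{\zeta}_i}{\Delta\,\vecc(U_i)} \;\lesssim\; \sqrt{\frac{\sigma^2\,[n_x(n_y+n_u)+\ln(1/\delta)]}{N(L+1)}}\;\nmF{\Delta\,\tilde{\Sigma}_U^{1/2}},
\]
up to polylog factors, the complexity $n_x(n_y+n_u)$ again coming from the covering bound of step (a).

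Third, I would balance. Writing $r := \nmF{\Delta\,\tilde{\Sigma}_U^{1/2}}$ and $\mathsf{c} := n_x(n_y+n_u)+\ln(1/\delta)$, the lower isometry converts the left side of the basic inequality into a quantity $\gtrsim r^2$, while bounding the regularization slack by $\l\,\Omega^{\eqref{eq:P1}}(G^*)$ (since $\Omega^{\eqref{eq:P1}}(\hat{G}) \ge 0$) and inserting the multiplier bound gives the quadratic inequality $r^2 \lesssim \sqrt{\sigma^2\,\mathsf{c}/(NL)}\;r + \l\,\Omega^{\eqref{eq:P1}}(G^*)$. With the prescribed $\l = \tilde{\mathcal{O}}(\mathsf{c}/(NL))$ the last term is $\lesssim \Omega^{\eqref{eq:P1}}(G^*)\,\mathsf{c}/(NL)$, and solving the quadratic yields the claimed rate $r^2 \le \tilde{\mathcal{O}}((\sigma^2 + \Omega^{\eqref{eq:P1}}(G^*))\,\mathsf{c}/(NL))$. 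The localization (peeling over the radius $r$) is what makes the multiplier term scale linearly in $r$ rather than as a constant, thereby producing the fast rate.

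I expect the main obstacle to be the uniform control in the second step with only $NL$ — rather than $NL^2$ — in the denominator: one must exploit that every one of the $L+1$ time points per roll-out carries information despite the convolutional within-trajectory dependence, and that the effective parameter count $n_x(n_y+n_u)$ is genuinely $L$-independent for this intrinsic parameterization (unlike the Hankel factorization of \eqref{eq:P2}). This is exactly where the Ziemann--Tu hypercontractivity route and the $L$-free covering bound furnished by Assumption \ref{ass:local_lip} must be combined; arranging the localization radius and the self-normalized martingale bound so that the variance term scales as $\sigma^2\mathsf{c}/(NL)$ rather than its square root is the delicate accounting.
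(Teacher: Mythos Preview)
Your proposal is correct and follows essentially the same route as the paper: instantiate the dependent-data excess-risk machinery of \cite{ziemann-tu-nips22} (basic inequality from optimality, lower isometry via the $L^4$--$L^2$ condition number, martingale control of the noise term, and a covering argument over $\F_\theta^{\eqref{eq:P1}}$), with the crucial point being that Assumption \ref{ass:local_lip} makes the metric entropy scale as $n_x(n_y+n_u)$ rather than $L\cdot n_x(n_y+n_u)$. The paper packages these steps into a general theorem (their Theorem \ref{thm:apdx_timedep_risk} and Corollary \ref{crl:excess_risk_recovery}) and then simply notes that the proof of Theorem \ref{thm:thm_p1_stat} reduces to that of Theorem \ref{thm:thm_p2_stat} with ${\sf dim}(\F_\theta^{\eqref{eq:P1}}) = n_x(n_y+n_u+1)$ in place of $(L+1)n_x(n_y+n_u)$; your explicit unpacking of the basic inequality, the self-normalized noise bound, and the quadratic balancing in $r$ matches their underlying lemmas (Lemmas \ref{lemma:opt_lemma}, \ref{lemma:cond}, \ref{lemma:sup_hypo}) almost step for step.
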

\begin{corollaryA}\label{crl:crl_p1_stat}
Under the conditions of Theorem \ref{thm:thm_p1_stat}, the following statements holds true,
\begin{enumerate}
    \item If $A^*$ is not real, diagonalizable then the upper bound of Equation \ref{eq:stat_p1} evaluates to $\infty$.
    \item Otherwise, the sample complexity for the recovery of $G^*$ with program \eqref{eq:P1} is
    $NL \gtrsim [n_x(n_y + n_u) + \ln(1/\delta)] \cdot {\sf polylog}(NL)$.
\end{enumerate}
\end{corollaryA}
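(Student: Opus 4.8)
The plan is to obtain both claims directly from Theorem \ref{thm:thm_p1_stat} by tracking when the optimal regularizer $\Omega^{\eqref{eq:P1}}(G^*)$ appearing on the right-hand side of \eqref{eq:stat_p1} is finite, since that factor is the only quantity in the bound sensitive to the diagonalizability of $A^*$.

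For the first claim, I would show that the constraint set defining $\Omega^{\eqref{eq:P1}}(G^*)$ in \eqref{eq:opt_reg_P1} is empty whenever $A^*$ is not real-diagonalizable. By construction, the parameterization $G({\sf diag}(\a), B, C)$ only produces impulse responses of systems whose state-transition matrix ${\sf diag}(\a)$ is real-diagonalizable. Appealing to realization theory, the (sufficiently long) impulse response determines the minimal realization up to a similarity transformation, real-diagonalizability is a similarity-invariant property, and any higher-order realization of $G^*$ restricts to the minimal one on an invariant subspace, with restrictions of diagonalizable operators again diagonalizable. Hence if the minimal $A^*$ is not real-diagonalizable, no $(\a, B, C)$ can satisfy $G({\sf diag}(\a), B, C) = G^*$; the infimum in \eqref{eq:opt_reg_P1} is then over the empty set and equals $+\infty$ by convention. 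Since $\Omega^{\eqref{eq:P1}}(G^*)$ enters \eqref{eq:stat_p1} as a multiplicative factor, the bound evaluates to $\infty$.

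For the second claim, when $A^*$ is real-diagonalizable we may write $G^* = G({\sf diag}(\a^*), B^*, C^*)$ with real parameters lying in $\F_{\theta}^{\eqref{eq:P1}}$, so the feasible set is nonempty and $\Omega^{\eqref{eq:P1}}(G^*) < \infty$; Theorem \ref{thm:thm_p1_stat} then applies. Its precondition is $N/\ln(NL) \gtrsim {\sf cond}_{\F_{\theta}^{\eqref{eq:P1}}}^2 \cdot [n_x(n_y+n_u) + \ln(1/\delta)]$. Because $\Phi_{n_x}^{\eqref{eq:P1}}(\theta; U)$ is linear in $U$, the condition number is an absolute constant for Gaussian inputs and bounded for sub-Gaussian inputs via Proposition 6.1 of \cite{ziemann_tutorial_2024}, so ${\sf cond}^2 = \mathcal{O}(1)$ can be absorbed, giving $N \gtrsim [n_x(n_y+n_u) + \ln(1/\delta)] \cdot \ln(NL)$. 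Since $L \geq 1$ implies $NL \geq N$, I conclude $NL \gtrsim [n_x(n_y+n_u) + \ln(1/\delta)] \cdot {\sf polylog}(NL)$, mirroring the derivation of Corollary \ref{crl:crl_p2_stat}; the decisive point is that the bracket no longer carries the extra factor $L$ present for \eqref{eq:P2}.

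The hard part is making the first claim rigorous at finite horizon. For fixed $L$ the constraint matches only the first $2L+2$ Markov parameters, and a high-order real-diagonal system can interpolate any finite block sequence through an invertible Vandermonde system, so strictly the feasible set is nonempty for each finite $L$ and the ``empty set'' argument is only honest in the limit. The faithful statement is asymptotic: the oscillatory modes $r^k\cos(k\theta)$ of complex eigenvalues and the confluent modes $k\,r^k$ of nontrivial Jordan blocks cannot be expressed over a growing horizon by a fixed bank of real exponentials, so any sequence of feasible interpolants must have $\sum_{j}\gamma(a_j)\nmm{\b_j}_2\nmm{\c_j}_2 \to \infty$. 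I would close this gap either by lower-bounding $\Omega^{\eqref{eq:P1}}(G^*)$ through the ill-conditioning of real-exponential fitting of oscillatory and confluent sequences as $L \to \infty$, or by restricting the admissible order $n_x$ so that the sufficient-horizon realization argument above applies verbatim and the feasible set is genuinely empty.
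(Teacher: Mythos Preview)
Your approach matches the paper's own proof: for (i) you argue $\Omega^{\eqref{eq:P1}}(G^*)=\infty$ because the feasible set in \eqref{eq:opt_reg_P1} is empty when $A^*$ is not real-diagonalizable, and for (ii) you read off the sample complexity directly from the bound in Theorem~\ref{thm:thm_p1_stat}. The paper's proof is in fact terser than yours---it simply asserts infeasibility for (i) without the realization-theoretic justification you supply, and for (ii) it just states that the right-hand side of \eqref{eq:stat_p1} decays once $NL \gtrsim [n_x(n_y+n_u)+\ln(1/\delta)]\cdot{\sf polylog}(NL)$.

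Your ``hard part'' paragraph is worth highlighting: you have correctly identified a subtlety the paper glosses over. At any fixed finite horizon $L$, the constraint $G({\sf diag}(\a),B,C)=G^*$ only matches finitely many Markov parameters, and since $n_x\in\mathbb{N}$ is unbounded in $\F_\theta^{\eqref{eq:P1}}$, a sufficiently large bank of real exponentials can interpolate them via a Vandermonde system---so the feasible set is not literally empty for finite $L$. The paper's one-line assertion that ``there does not exist $(n_x,\a,B,C)$'' implicitly relies either on an infinite-horizon reading of $G^*$ or on an unstated restriction of $n_x$. Your proposed fixes (lower-bounding $\Omega^{\eqref{eq:P1}}(G^*)$ via ill-conditioning of real-exponential fits to oscillatory/confluent modes, or capping $n_x$ so the Ho--Kalman uniqueness argument bites) are the honest ways to make Statement~(i) rigorous; the paper does not carry this out.
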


\textbf{Remarks:} Statement 1 of Corollary \ref{crl:crl_p1_stat}
establishes that if the underlying system is not real, diagonalizable then
the statistical error obtained in Theorem \ref{thm:thm_p1_stat} becomes trivial.
Suppose that underlying system was real, diagonalizable then we would require
a total samples, $NL \gtrsim \tilde{\mathcal{O}}\left(n_x(n_y + n_x)\right)$. By 
na\"{\i}ve counting argument we have $n_x(n_y + n_u)$ parameters to estimate, therefore, we 
require
at-least those many samples to estimate all the parameters. This fact is indeed
reflected in Statement 2. In comparison
to Program \eqref{eq:P2} the statistical error rate and sample complexity
are $L$-folds tighter.

\textbf{Comparison with existing works.} Table~\ref{tab:comparision_stat} summarizes the comparison between our bounds and those from existing
key works. We observe that program~\eqref{eq:P2} achieves statistical error rates and sample complexity 
comparable to those of~\cite{lee2022improved}. 
However, unlike our approach, the method 
in~\cite{lee2022improved} assumes knowledge of 
the true system order, which we do not require.
For real, diagonalizable systems, program~\eqref{eq:P1} 
additionally removes the dependency on trajectory length present in program~\eqref{eq:P2} and prior methods. In 
terms of both statistical error rate and sample complexity, we observe at least an 
$L$-fold improvement.

\begin{table}[!ht]
    \centering
    \renewcommand{\arraystretch}{1.5}
    \resizebox{\textwidth}{!}{
    \begin{tabular}{|c|c|c|}
    \hline
    \rowcolor{gray!20}
       \textbf{Methods for low-order SysID} & \textbf{(Error rate$)^2$} & \textbf{Sample complexity}, $N_{\mathsf{tol}} \gtrsim$ \\
    \hline \hline
       Hankel nuclear norm minimization \citep{sun-et-al-ojcs22}  & ${{L^2n_x(n_y+n_u)}/{N_{\mathsf{tol}}}}$ & $L^2n_x(n_y+n_u)$ \\ 
       \hline
       SVD truncation \citep{lee2022improved} (\textit{$n_x^*$ is required})  & ${{Ln_x(n_y+n_u)}/{N_{\mathsf{tol}}}}$ & $Ln_x(n_y+n_u)$ \\  
       \hline
       BM re-parametrization \eqref{eq:P2} (Theorem \ref{thm:thm_p2_stat})  & ${{Ln_x(n_y+n_u)}/{N_{\mathsf{tol}}}}$ & $Ln_x(n_y+n_u)$ \\
       \hline
       System Parameters \eqref{eq:P1} (Theorem \ref{thm:thm_p1_stat}$)^{\ddagger}$  & ${{n_x(n_y+n_u)}/{N_{\mathsf{tol}}}}$ & $n_x(n_y+n_u)$ \\  \hline
    \end{tabular}
    }
    \vspace{-10pt}
    \caption{Summary of error rates and sample complexities for low-order SysID (up to log factors). Here, $N_{\mathbf{tol}} = N \cdot L$, and $\ddagger$ indicates restriction to real, diagonalizable systems.}
    \label{tab:comparision_stat}
\end{table}
\textbf{Proof Strategy.}
Although Programs \eqref{eq:P2} and \eqref{eq:P1} are nonconvex in the parameter space, the input-output map
remains linear. Thus, we can directly apply the time-dependent excess risk bounds from Theorem 6.1 of 
\cite{ziemann-tu-nips22}, under the additional condition that the impulse response is Lipschitz continuous with 
respect to the parameters and the parameter space is compact. These conditions are ensured by Assumptions 
\ref{ass:local_lip_hankel} and \ref{ass:local_lip}.

\section{Numerical Simulations}\label{sec:numerical_simulations}
\begin{figure}[!ht]
    \centering
    \subfigure[Recovery error]{
        \includegraphics[width=0.4\textwidth]{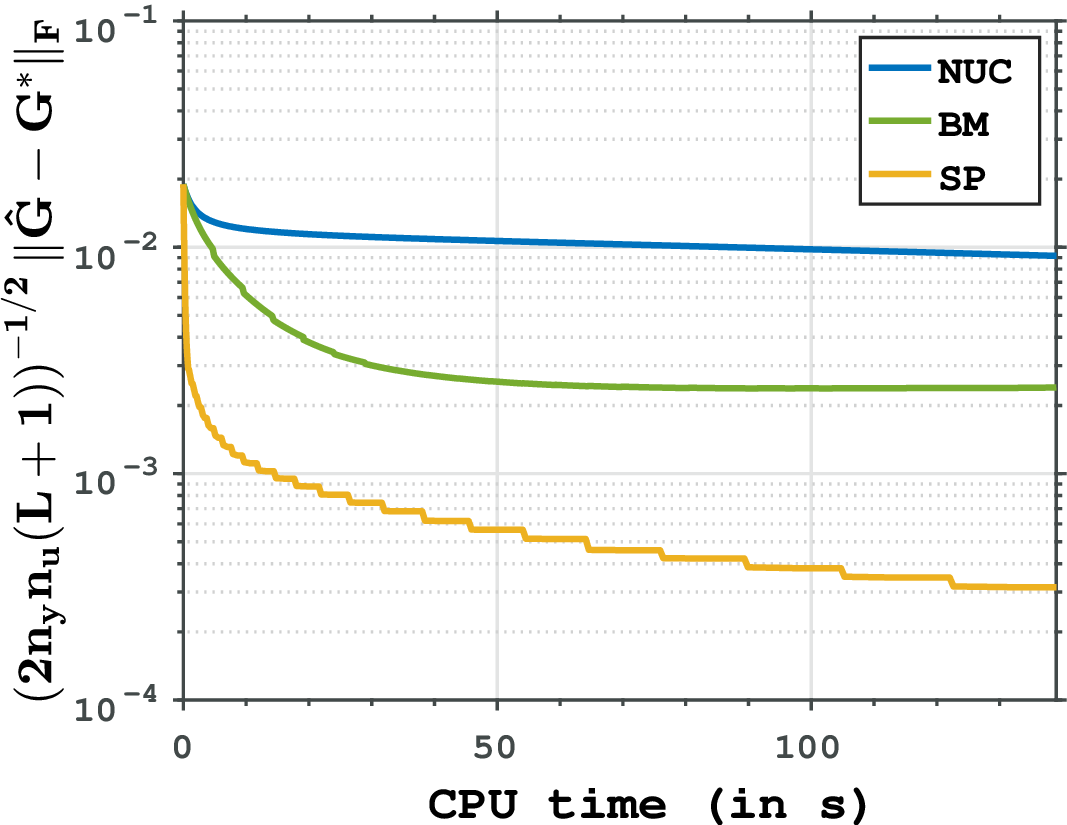}
        \label{fig:fig1}
    }
    \subfigure[Loss heuristics]{
        \includegraphics[width=0.4\textwidth]{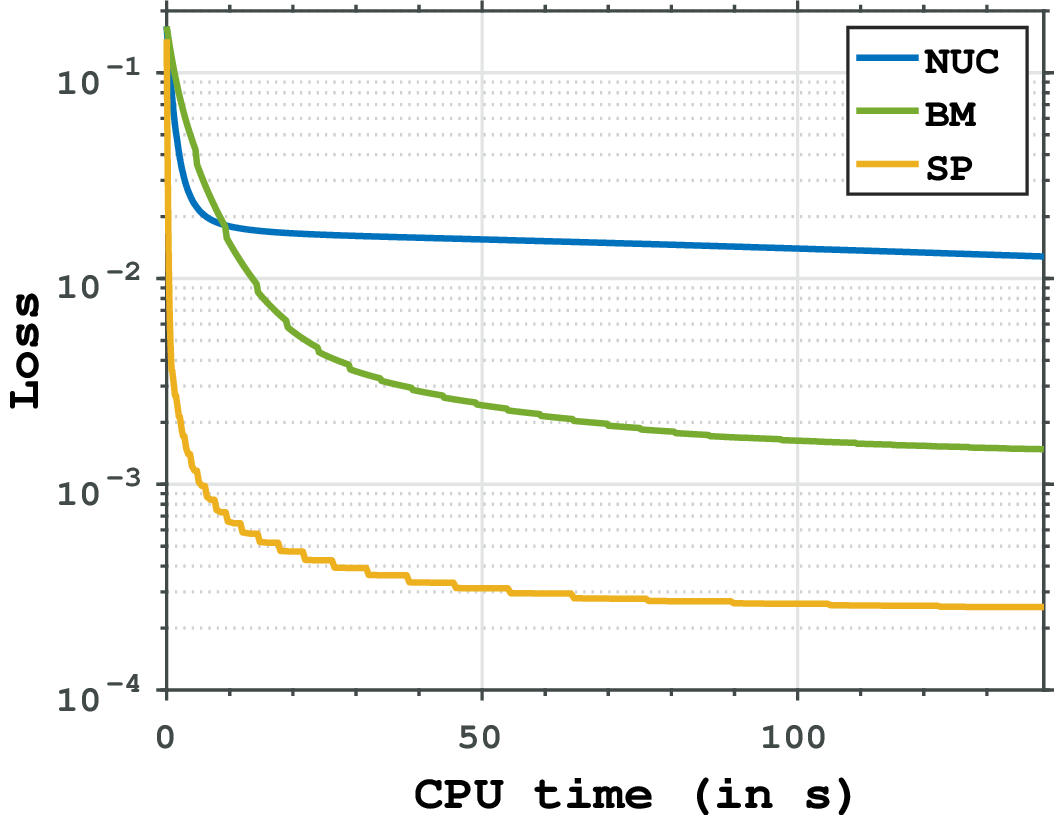}
        \label{fig:fig2}
    }
    \subfigure[Spectrum of Hankel matrix]{
        \includegraphics[width=0.4\textwidth]{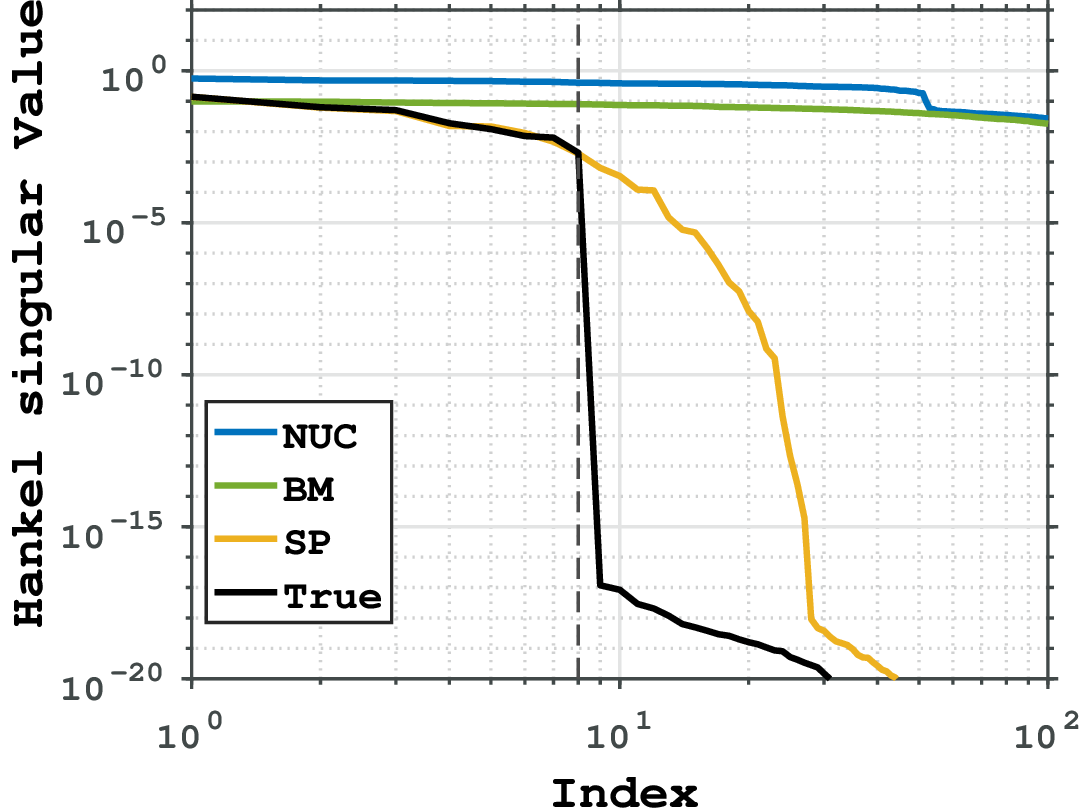}
        \label{fig:fig3}
    }
    \subfigure[{$\textsf{Polar}^{\eqref{eq:P2}/\eqref{eq:P1}}$}]{
        \includegraphics[width=0.4\textwidth]{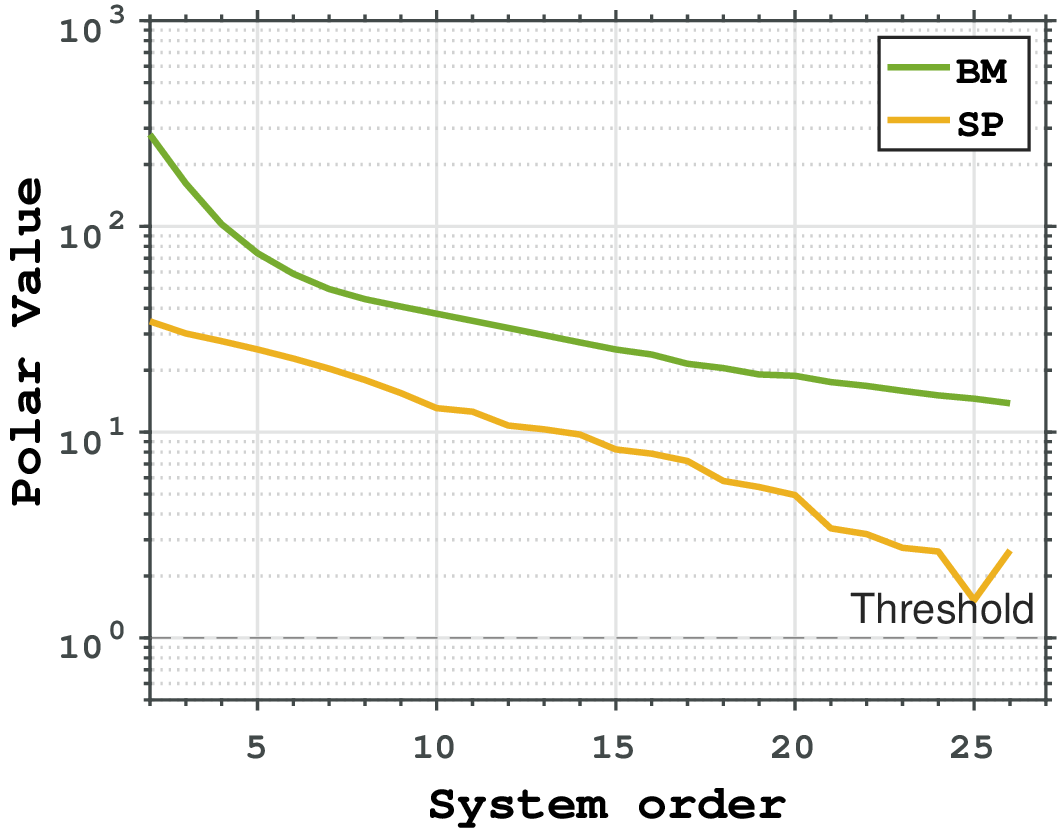}
        \label{fig:fig4}
    }

    \caption{Performance metrics for a real, diagnosable system. \textsf{NUC}, \textsf{BM},
    and \textsf{SP} corresponds to programs \eqref{eq:P1_1}, \eqref{eq:P2}, and \eqref{eq:P1}, respectively. Dashed
    line in \ref{fig:fig3} represents $n_x = 5$.}
    \label{fig:diag_metric}
\end{figure}
In this section, we compare low-order SysID using programs \eqref{eq:P1_1}, \eqref{eq:P2}, 
and \eqref{eq:P1} through numerical simulations, evaluating time complexity, sample 
complexity, and trajectory complexity.

\textbf{Data Generation.} We simulate linear trajectories
with system order $n_x^* = 5$ and dimensions $n_u = n_y = 8$. 
The system matrix $A$ is symmetric with Normal entries, 
while $B$ and $C$ are Normal matrices. Control 
inputs are sampled from $\mathcal{N}(0, I_{n_u}/n_u)$, and 
outputs are corrupted with Gaussian noise of variance $0.01$. 
We generate $N = 500$ rollouts, 
and trajectory length $2(L+1) = 102$.

\textbf{Algorithmic Implementation.} 
We use accelerated proximal gradient descent \eqref{alg:P1_1}\citep{becker-et-al-mpc11}
to solve convex program \eqref{eq:P1_1}. For programs 
\eqref{eq:P2}, and \eqref{eq:P1} we unroll Proposition \ref{prop:prop_p2}, 
and Theorem \ref{thm:thm_p1_opt} into algorithms \eqref{alg:P2}, \eqref{alg:P1} that uses Polyak's
gradient descent \citep{polyak-ussr64}. For each of the algorithm
we fixed regularization parameter, $\l = 0.001$
and choose the best learning rate, and momentum rate.
For fair comparison, recovery error, $\nmF{\hat{G}-G^*}/\sqrt{2n_yn_u(L+1)}$ was 
evaluated against CPU runtime rather than iterations of the algorithms. 
All algorithms are initialized such that
they share identical impulse response to ensure
consistent starting points.

\textbf{Ablation Studies.} We evaluate recovery loss across varying sample sizes,
trajectory lengths, and algorithm performance on real- and non-real-diagonalizable systems. 
From the results in Figures \ref{fig:diag_metric}, \ref{fig:vary_sample_trajectory_length}, 
and \ref{fig:non_diag_metric}, we conclude the following:
\begin{itemize}
    \item \textit{Computational Efficiency.} As shown in Figure \ref{fig:fig1}, program
    \eqref{eq:P1} outperforms program \eqref{eq:P2} and program 
    \eqref{eq:P1_1} in recovering the impulse response within a fixed CPU time budget. Figure 
    \ref{fig:fig2} highlights faster loss reduction for program \eqref{eq:P1}. Noise often 
    necessitates higher-order approximations for program \eqref{eq:P2} and 
    \eqref{alg:P1} to achieve global optima, as shown in Figure \ref{fig:fig3} and \ref{fig:fig4}.
    \item \textit{Statistical Efficiency.} For a fixed sample size or trajectory length 
    (Figure \ref{fig:vary_sample_trajectory_length}), program \eqref{eq:P1}
    consistently outperforms the others as discussed earlier in the Table \ref{tab:comparision_stat}.
\end{itemize}

\section{Conclusions}\label{sec:conclusion}
In this work, we focus on performing system identification with minimal state representation.
We propose two nonconvex reformulations: (i) a BM-style reparameterization of Hankel nuclear norm 
minimization, and (ii) direct optimization over the system parameters. Despite their nonconvex nature, we design 
algorithms with theoretical guarantees of 
global convergence using first-order optimization methods.
Additionally, we derive near tight statistical error rates and sample complexity bounds. Our findings 
suggest that directly estimating system parameters outperforms alternative approaches in both optimization 
efficiency and statistical recovery. However, our analysis is currently restricted to systems that are real,
diagonalizable. Extending this framework to handle non-diagonalizable systems is an avenue for future work.

\newpage
\acks{UKRT gratefully acknowledges Hancheng Min for his valuable feedback.
Other authors acknowledge the support of the Research Collaboration on the 
Mathematical and Scientific Foundations of Deep Learning 
(NSF grant 2031985 and Simons Foundation grant 814201).}

\bibliography{refs}

\newpage

\appendix

\counterwithin{theorem}{section}
\counterwithin{theoremA}{section}
\counterwithin{corollaryA}{section}
\counterwithin{problem}{section}
\counterwithin{assumption}{section}
\counterwithin{figure}{section}
\counterwithin{algorithm}{section}
\counterwithin{equation}{section}

\renewcommand{\thetheorem}{\Alph{section}.\arabic{theorem}}
\renewcommand{\theequation}{\Alph{section}.\arabic{equation}}
\renewcommand{\thetheoremA}{\Alph{section}.\arabic{theoremA}}
\renewcommand{\thecorollaryA}{\Alph{section}.\arabic{corollaryA}}
\renewcommand{\theproposition}{\Alph{section}.\arabic{proposition}}
\renewcommand{\theproblem}{\Alph{section}.\arabic{problem}}
\renewcommand{\theassumption}{\Alph{section}.\arabic{assumption}}
\renewcommand{\thefigure}{\Alph{section}.\arabic{figure}}
\renewcommand{\thealgorithm}{\Alph{section}.\arabic{algorithm}}

\section*{APPENDIX} \label{sec:appendix}

First in \textsection\ref{sec:apdx_opt}, we present the general framework of \cite{haeffele_global_2017} building upon this
later we discuss the proofs of Proposition \ref{prop:prop_p2},
Corollary \ref{crl:crl_p2}, and Theorem \ref{thm:thm_p1_opt} discussed in \textsection\ref{sec:opt_results}.
Then in \textsection\ref{sec:apdx_stat} we discuss relevant aspects of learning from \cite{ziemann-tu-nips22},
after-which we will discuss the proofs of Theorem \ref{thm:thm_p2_stat}, Corollary \ref{crl:crl_p2_stat}, 
Theorem \ref{thm:thm_p1_stat}, and Corollary \ref{crl:crl_p1_stat} that were discussed in
\textsection\ref{sec:stat_results}. 
Later, in \textsection\ref{sec:apdx_related} we discuss 
more related works in the area of low-order SysID.
Finally, in \textsection\ref{sec:apdx_numerical}
we present algorithms and more simulations that were discussed
in \textsection\ref{sec:numerical_simulations}.

\textbf{Extra notation.} 
$\langle f, g \rangle_q$ denotes the inner product between 
functions $f$ and $g$ with respect to the measure $q$, defined as 
$\int \langle f(x), g(x) \rangle \, dq(x)$. 
The notation ${\sf poly}(x)$ refers to 
$\mathcal{O}(x^{\alpha})$ for some $\alpha \geq 1$.

\section{Proofs for optimality certificates}\label{sec:apdx_opt}

Recall that we re-wrote the
Program \eqref{eq:P2}, \eqref{eq:P1} as the
predictions and regularization as
\be \label{eq:re_wrote}
\begin{split}
&{\Phi_{n_x}}^{\eqref{eq:P2}}(V, Z; U) := \sum_{j=1}^{n_x}\mathcal{H}^{\dagger}(\v_j\z_j^T)\vecc(U');
{\Theta_{n_x}}^{\eqref{eq:P2}}(V, Z) := \frac{1}{2}\sum_{j=1}^{n_x}\nmm{\v_j}_2^2 + \nmm{\z_j}_2^2, \\
&{\Phi_{n_x}}^{\eqref{eq:P1}}(\a, B, C; U)\hspace{-2pt}=\hspace{-2pt}\sum_{j=1}^{n_x}\left[P(a_j) \otimes \c_j\b_j^T\right]\vecc(U');
{\Theta_{n_x}}^{\eqref{eq:P1}}(\a, B, C)\hspace{-2pt}=\hspace{-2pt}\sum_{j=1}^{n_x}\gamma(a_j)\nmm{\b_j}_2\nmm{\c_j}_2.
\end{split}
\ee
\cite{haeffele_global_2017} provided global optimality
certificates for the objective of the form
\be\label{eq:gen}
\tag{GO}
\min_{r, \{W_j\}_{j=1}^{r}}\ell\left(Y, \sum_{j=1}^{r}\phi(W_j; X)\right) + \l \sum_{j=1}^{r}\theta(W_j),
\ee
where $\phi(\cdot)$, and $\theta(\cdot)$ are positively homogeneous with the same degree. More formally define polar for some measure $q$
\be\label{eq:polar}
\Omega_{q}^{\circ}(g) := \sup_{\theta(W) \leq 1}
\IP{g}{\phi(W)}_q
\ee
In practice, we solve Program~\eqref{eq:gen} for a 
fixed $r$ using first-order methods, 
such as gradient descent, Polyak's 
momentum~\citep{polyak-ussr64}, or 
Nesterov acceleration~\citep{nesterov-nag83}. 
Since the landscape is nonconvex, these methods 
typically converge to stationary points. 
Formally, convergence to a stationary point occurs only 
asymptotically; otherwise, the iterates 
remain within a bounded neighborhood of a stationary 
point. For simplicity of analysis, we assume access to a 
heuristic that allows exact convergence to a stationary 
point. Optimality guarantees for specific classes of 
objective functions were established 
by~\cite{haeffele_global_2017}, and we restate their 
result below.

\begin{theoremA}\label{thm:global}
For any stationary points $\{W_j\}$ which satisfies
the condition $\langle{-\frac{1}{\l}\nabla\ell(Y, \sum_{j=1}^{r}\phi(W_j))}$, ${\sum_{j=1}^{r}\phi(W_j)}\rangle_q$
= $\sum_{j=1}^{r}\theta(W_j)$. If  
$\Omega_{q}^{\circ}\left(-\frac{1}{\l}\nabla\ell(Y, \sum_{j=1}^{r}\phi(W_j))\right) = 1$, then
the program \eqref{eq:gen} is solved. Otherwise,
the objective can be reduced further by augmenting 
supremizer of the polar $W^*$ to $\{W_j\} \cup \tau W^*$, where $\tau$ is some scaling.
\end{theoremA}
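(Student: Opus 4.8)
The plan is to route the factorized, nonconvex problem \eqref{eq:gen} through a "lifted" convex surrogate on the prediction space and to certify optimality via the subdifferential of a gauge. First I would define the induced regularizer
\[
\Omega(X) := \inf_{r,\{W_j\}} \Big\{ \sum_{j=1}^{r}\theta(W_j) \;:\; \sum_{j=1}^{r}\phi(W_j) = X \Big\},
\]
and observe that, because $\phi$ and $\theta$ are positively homogeneous of a common degree $p$ and the number of atoms $r$ is unconstrained, $\Omega$ is a convex gauge, namely the gauge of the convex hull of the atomic set $\{\phi(W):\theta(W)\le 1\}$. I would then check that the polar $\Omega_q^{\circ}$ of \eqref{eq:polar} is exactly the polar gauge of $\Omega$, so that the Fenchel--Young inequality $\langle g,X\rangle_q \le \Omega_q^{\circ}(g)\,\Omega(X)$ holds for all $g,X$, and that the standard gauge subdifferential characterization applies: $g\in\partial\Omega(X)$ precisely when $\Omega_q^{\circ}(g)\le 1$ and $\langle g,X\rangle_q=\Omega(X)$. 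By the very definition of $\Omega$, any factorization satisfies $\sum_j\theta(W_j)\ge\Omega(\sum_j\phi(W_j))$, so the factorized objective always upper bounds the convex surrogate $\min_X \ell(Y,X)+\l\,\Omega(X)$, with equality exactly when the factorization attains the defining infimum.

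For the case $\Omega_q^{\circ}(g)=1$, where $g:=-\tfrac1\l\nabla\ell(Y,X)$ and $X=\sum_j\phi(W_j)$, the stated stationarity condition gives $\langle g,X\rangle_q=\sum_j\theta(W_j)$, and chaining it with Fenchel--Young produces
\[
\sum_j\theta(W_j) = \langle g,X\rangle_q \le \Omega_q^{\circ}(g)\,\Omega(X) = \Omega(X) \le \sum_j\theta(W_j).
\]
The chain collapses to equalities, forcing both $\Omega(X)=\sum_j\theta(W_j)$ (the factorization is $\Omega$-optimal) and $\langle g,X\rangle_q=\Omega(X)$ (complementary slackness); together with $\Omega_q^{\circ}(g)\le 1$ this yields $g\in\partial\Omega(X)$, i.e. $0\in\partial\big(\ell(Y,\cdot)+\l\Omega\big)(X)$, so $X$ minimizes the convex surrogate. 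Since the factorized objective at $\{W_j\}$ equals the convex value at $X$ and no factorization can beat that value, $\{W_j\}$ is globally optimal for \eqref{eq:gen}.

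For the complementary case $\Omega_q^{\circ}(g)>1$, I would exhibit an explicit descent direction. Let $W^*$ be the supremizer in \eqref{eq:polar}, normalized so that $\theta(W^*)=1$ and hence $\langle g,\phi(W^*)\rangle_q=\Omega_q^{\circ}(g)>1$. Augmenting the current iterate by $\tau W^*$ shifts the prediction by $\tau^p\phi(W^*)$ and the regularizer by $\l\tau^p\theta(W^*)$, using degree-$p$ homogeneity. Writing $s=\tau^p$, the directional derivative of the objective at $s=0$ equals $\langle\nabla\ell(Y,X),\phi(W^*)\rangle_q+\l\theta(W^*)=\l\big[\theta(W^*)-\langle g,\phi(W^*)\rangle_q\big]=\l\big[1-\Omega_q^{\circ}(g)\big]<0$, so a sufficiently small $\tau>0$ strictly decreases the objective, which both shows $\{W_j\}$ is suboptimal and justifies the claimed improvement by $\{W_j\}\cup\tau W^*$.

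The main obstacle I anticipate lies in the structural groundwork rather than the case analysis: rigorously proving that $\Omega$ is convex (which relies on the unbounded number of atoms and the shared homogeneity degree, via a Carath\'eodory-type argument bounding the atoms needed) and that the $(\phi,\theta)$-defined polar \eqref{eq:polar} genuinely coincides with the polar gauge of $\Omega$, so that the gauge subdifferential characterization is legitimate. One also needs $\ell(Y,\cdot)$ to be convex and differentiable---satisfied by the squared losses in \eqref{eq:P2} and \eqref{eq:P1}---for the subgradient optimality condition and the directional-derivative computation to be valid.
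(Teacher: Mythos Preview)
The paper does not supply its own proof of this theorem: it is explicitly restated from \cite{haeffele_global_2017} (see the sentence immediately preceding the statement in \S\ref{sec:apdx_opt}). Your proposal is correct and is precisely the argument underlying that reference---lifting \eqref{eq:gen} to the convex surrogate $\min_X \ell(Y,X)+\lambda\Omega(X)$ via the atomic gauge $\Omega$, collapsing the Fenchel--Young chain $\sum_j\theta(W_j)=\langle g,X\rangle_q\le\Omega_q^{\circ}(g)\,\Omega(X)\le\sum_j\theta(W_j)$ when $\Omega_q^{\circ}(g)=1$ to obtain $g\in\partial\Omega(X)$, and producing strict descent via the first-order expansion in $s=\tau^p$ when $\Omega_q^{\circ}(g)>1$. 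The structural points you flag as obstacles (convexity of $\Omega$ and identification of \eqref{eq:polar} with the polar gauge of $\Omega$) are exactly the preparatory lemmas established in \cite{haeffele_global_2017} and \cite{bach-arxiv13}, so there is no divergence in approach.
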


Although, Theorem \ref{thm:global} provides
global optimality check and ability to progress
in the optimization. However, 
these guarantees hold true for the stationary
points that satisfy $\langle{-\frac{1}{\l}\nabla\ell(Y, \sum_{j=1}^{r}\phi(W_j))}$, ${\sum_{j=1}^{r}\phi(W_j)}\rangle_q$
= $\sum_{j=1}^{r}\theta(W_j)$. Next, we re-state 
result from \cite{haeffele_global_2017}
that proves that for a specific regularizer choices
every stationary point satisfies the earlier equality. 

\begin{proposition}[More general version of Proposition 3
\citep{haeffele_global_2017}]\label{prop:eql}\\
Given a function $\ell(Y, X)$ is convex in $X$ and once 
differentiable w.r.t. $X$; a constant $\l > 0$; and gauge 
functions ($\sigma_{w_1}(\w_1)$, $\dots$, $\sigma_{w_p}(\w_p)$), 
then for 
$\theta(W) = \sigma_{w_1}(\w_1)\times\dots\times\sigma_{w_p}(\w_p)$ or 
$\theta(W) = \frac{1}{p}\sum_{i=1}^{p}\left[\sigma^2_{w_i}(\w_i)\right]$,
any first-order optimal points $\{W_j\}$ of the program 
\eqref{eq:gen} satisfies the equality that
\be
\IP{-\frac{1}{\l}\nabla\ell(Y, \sum_{j=1}^{r}\phi(W_j))} {\sum_{j=1}^{r}\phi(W_j)}_q
= \sum_{j=1}^{r}\theta(W_j)
\ee
\end{proposition}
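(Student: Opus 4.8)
The plan is to reduce everything to a single per-factor identity and then sum. Write $X_\star := \sum_{j=1}^r \phi(W_j)$ and $G := -\frac{1}{\l}\nabla\ell(Y,X_\star)$ (gradient in the second argument), so that the target identity is $\IP{G}{X_\star}_q = \sum_{j=1}^r \theta(W_j)$. It suffices to show, for each factor $k$, the local identity $\IP{G}{\phi(W_k)}_q = \theta(W_k)$, since summing over $k$ and using linearity of the $q$-inner product recovers the claim.

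The central tool is the radial (Euler) directional derivative. By hypothesis $\phi(\cdot;X)$ and $\theta(\cdot)$ are positively homogeneous of a common degree $d$; one checks directly that the product form $\theta(W)=\prod_{i=1}^p\sigma_{w_i}(\w_i)$ is homogeneous of degree $p$ (each gauge being degree one) and the sum-of-squares form $\theta(W)=\frac1p\sum_{i=1}^p\sigma_{w_i}^2(\w_i)$ is homogeneous of degree two, each matching the degree of the associated $\phi$. The key observation is that, although $\theta$ is in general nondifferentiable (and, for the product form, nonconvex), its restriction to the radial ray $s\mapsto\theta((1+s)W_k)=(1+s)^d\theta(W_k)$ is \emph{smooth}; hence the radial derivative exists and equals $d\,\theta(W_k)$. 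The same homogeneity applied to the smooth map $\phi$ gives the classical Euler identity $D_{W_k}\phi(W_k)[W_k]=d\,\phi(W_k)$.

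Next I would invoke first-order optimality. At a stationary point $\{W_j\}$ one has $0$ in the (regular/Fr\'echet) subdifferential of the objective $F$, so the lower directional derivative of $F$ is nonnegative in every direction; along the radial direction it is moreover an ordinary two-sided derivative by the smoothness just noted. Taking the direction $V=(0,\dots,W_k,\dots,0)$ and its negative, the chain rule (using differentiability of $\ell$ in its second argument) together with the two Euler identities yields
\[
F'(\{W_j\};\,V) = d\big[\IP{\nabla\ell(Y,X_\star)}{\phi(W_k)}_q + \l\,\theta(W_k)\big]\ge 0,
\]
and the reversed inequality for $-V$; since $d>0$ the bracket vanishes, which is exactly $\IP{G}{\phi(W_k)}_q=\theta(W_k)$. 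Summing over $k$ gives $\IP{G}{X_\star}_q=\sum_j\theta(W_j)$. Equivalently, one can pair the stationarity inclusion $[D_{W_k}\phi(W_k)]^{\dagger}G\in\partial_{W_k}\theta(W_k)$ with $W_k$ and use that, for any regular subgradient $g\in\partial\theta(W_k)$ of a degree-$d$ homogeneous function, $\IP{g}{W_k}=d\,\theta(W_k)$.

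The main obstacle is precisely the nonsmoothness and (in the product case) nonconvexity of $\theta$: ordinary gradient calculus fails, and a block-coordinate convex-subdifferential argument is delicate because a product of gauges need not be jointly convex. The resolution exploited above is that radial scaling never leaves a smooth one-dimensional path, so the only structural facts actually needed are (i) the shared homogeneity degree of $\phi$ and $\theta$ and (ii) that the assumed notion of stationary point certifies a two-sided vanishing of the radial derivative. I would take care to confirm that the stationarity guaranteed by the first-order method indeed delivers $0\in\partial F$ in this regular sense, and that convexity and differentiability of $\ell$ in its second argument legitimize the chain-rule expression for $F'$ in the $q$-inner product.
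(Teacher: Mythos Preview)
The paper does not supply its own proof of this proposition; it is stated as a restatement (slightly generalized) of Proposition~3 of \cite{haeffele_global_2017} and invoked as a black box in the proofs of Proposition~\ref{prop:prop_p2} and Theorem~\ref{thm:thm_p1_opt}. So there is nothing in the paper to compare your argument against directly.

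That said, your argument is correct and is the standard route to such identities. The shared positive homogeneity of degree $d$ makes the restriction of the objective to the radial path $s\mapsto(1+s)W_k$ a smooth one-variable function near $s=0$, whose derivative there equals $d\big[\langle\nabla\ell(Y,X_\star),\phi(W_k)\rangle_q+\lambda\,\theta(W_k)\big]$; first-order optimality forces this two-sided derivative to vanish, and summing over $k$ gives the claim. Your observation that the nonsmoothness (and, for the product form, nonconvexity) of $\theta$ is irrelevant along radial rays is exactly the right way to sidestep the apparent obstacle. The only implicit assumption you add is differentiability of $\phi$, which is not stated in the proposition but holds in every instantiation the paper considers (where $\phi$ is multilinear in the scalable blocks); in fact radial differentiability of $\phi$ already follows from positive homogeneity alone, and that is all your argument actually uses.
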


Now we are equipped with the background
to be able to analyze the global optimality 
guarantees for Programs \eqref{eq:P2}, and \eqref{eq:P1}.

\subsection{Proof of Proposition \ref{prop:prop_p2}}\label{sec:apdx_prop_p2}
Recall Proposition \ref{prop:prop_p2}.
\begin{proposition}
Let $(U_i, Y_i)$ be $N$ roll-outs of the system \eqref{eq:ls}. Consider the
estimator via factors $\hat{V}$ $\in$ $\R^{(L+1)n_y \times n_x}$, $\hat{Z}$ $\in$ $\R^{(L+1)n_u \times n_x}$. Define $U_i' := [U_i]_{1:2L+1}$, and 
\be
M := \frac{1}{N\l}\sum_{i=1}^{N}\left(\y_{2L+2}^{(i)} - \mathcal{H}^{\dagger}(\hat{V}\hat{Z}^T)\vecc(U_i')\right){\vecc(U_i')}^T;
\text{\textsf{Polar}}^{\eqref{eq:P2}}:=\nmm{\mathcal{H}(M)}_2.
\ee
Suppose $\hat{V}, \hat{Z}$ are any stationary points of Program \eqref{eq:P2}. 
If $\text{\textsf{Polar}}^{\eqref{eq:P2}} = 1$, then $\hat{V}$ and $\hat{Z}$ are globally optimal. 
Otherwise, the objective can be reduced 
by augmenting the top-singular vectors $(\v^*, \z^*)$
of $\mathcal{H}(M)$ to 
$\begin{bmatrix}\hat{V} & \tau^*\v^*\end{bmatrix}$ $\in$ $\mathbb{R}^{(L+1)n_y \times (n_x+1)}$ and 
$\begin{bmatrix}\hat{Z} & \tau^*\z^*\end{bmatrix}$ $\in$ 
$\mathbb{R}^{(L+1)n_u \times (n_x+1)}$, for some scaling factor $\tau^*$.
\end{proposition}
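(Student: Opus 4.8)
The plan is to recognize Program~\eqref{eq:P2} as an instance of the positively homogeneous objective~\eqref{eq:gen} and then invoke Theorem~\ref{thm:global} together with Proposition~\ref{prop:eql}. Concretely, I would set the per-factor map $\phi(W_j; U) = \mathcal{H}^{\dagger}(\v_j\z_j^T)\vecc(U')$ with $W_j = (\v_j,\z_j)$, the regularizer $\theta(W_j) = \frac{1}{2}(\nmm{\v_j}_2^2 + \nmm{\z_j}_2^2)$, and $\ell$ the squared loss against $\y^{(i)}_{2L+2}$ (with the feed-through term set to $D=0$ as in the optimality analysis), matching exactly the decomposition~\eqref{eq:factor_maps}. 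The first routine check is homogeneity: $\phi$ is bilinear in $(\v_j,\z_j)$, hence positively homogeneous of degree two, and $\theta$ is likewise degree two, so the structural hypotheses of~\eqref{eq:gen} are met.

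Next I would discharge the stationarity hypothesis of Theorem~\ref{thm:global}. Since $\ell$ is convex and differentiable in its prediction argument and $\theta(W_j)=\frac{1}{2}(\nmm{\v_j}_2^2+\nmm{\z_j}_2^2)$ is of the form $\frac{1}{p}\sum_i \sigma_{w_i}^2(\w_i)$ with Euclidean gauges and $p=2$, Proposition~\ref{prop:eql} guarantees that \emph{every} first-order stationary point $(\hat V,\hat Z)$ automatically satisfies the inner-product equality required by Theorem~\ref{thm:global}. This reduces the global-optimality question to checking whether the polar $\Omega_q^{\circ}(-\frac{1}{\l}\nabla\ell)$ equals one, and it identifies the reduction direction in the non-optimal case as the polar supremizer.

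The crux is therefore to evaluate the polar and identify it with $\nmm{\mathcal{H}(M)}_2$. I would start from $-\frac{1}{\l}\nabla\ell$, whose $i$-th component is the scaled residual $r_i/(N\l)$ with $r_i = \y_{2L+2}^{(i)} - \mathcal{H}^{\dagger}(\hat V\hat Z^T)\vecc(U_i')$, and write the polar objective as $\frac{1}{N\l}\sum_i \IP{r_i}{\mathcal{H}^{\dagger}(\v\z^T)\vecc(U_i')}$. The key manipulation is to move the operator across the inner product: writing each summand as $\IPF{r_i\vecc(U_i')^T}{\mathcal{H}^{\dagger}(\v\z^T)}$ and using that $\mathcal{H}$ is the adjoint of $\mathcal{H}^{\dagger}$, each term becomes $\IPF{\mathcal{H}(r_i\vecc(U_i')^T)}{\v\z^T}$. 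Summing and using linearity collapses this to $\IPF{\mathcal{H}(M)}{\v\z^T}$ by the very definition of $M$. Finally, since the bilinear form is invariant under the rescaling $(\v,\z)\mapsto(s\v,\z/s)$, AM--GM lets me relax $\theta(W)\leq 1$ to $\nmm{\v}_2\nmm{\z}_2\leq 1$, so that $\sup_{\nmm{\v}_2\nmm{\z}_2\leq 1}\IPF{\mathcal{H}(M)}{\v\z^T} = \nmm{\mathcal{H}(M)}_2$, attained by the leading left/right singular vectors $(\v^*,\z^*)$ of $\mathcal{H}(M)$.

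I expect the adjoint-transfer step to be the main obstacle, since it requires carefully tracking the block structure of $\mathcal{H}$ and $\mathcal{H}^{\dagger}$ and the vectorization conventions so that the residual-times-input outer products assemble into exactly the matrix $M$ stated in the proposition; the homogeneity and AM--GM balancing steps are routine. Once the polar is identified, the two conclusions follow directly from Theorem~\ref{thm:global}: if $\nmm{\mathcal{H}(M)}_2 = 1$ the stationary point is globally optimal, and otherwise augmenting $(\hat V,\hat Z)$ by the rank-one direction $(\tau^*\v^*,\tau^*\z^*)$ strictly decreases the objective, with $\tau^*$ chosen as the minimizing scaling along that augmented direction.
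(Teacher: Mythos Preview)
Your proposal is correct and follows essentially the same route as the paper: instantiate Program~\eqref{eq:P2} in the framework~\eqref{eq:gen}, invoke Proposition~\ref{prop:eql} to verify the stationarity equality, compute the polar by moving $\vecc(U_i')$ across the inner product and applying the adjoint relation $\mathcal{H}=(\mathcal{H}^{\dagger})^{\dagger}$ to reduce to $\nmm{\mathcal{H}(M)}_2$, and finish with Theorem~\ref{thm:global}. The only cosmetic difference is that the paper passes from the constraint $\nmm{\v}_2^2+\nmm{\z}_2^2\leq 2$ to $\nmm{\v}_2\leq 1,\ \nmm{\z}_2\leq 1$ via homogeneity, whereas you relax to $\nmm{\v}_2\nmm{\z}_2\leq 1$ via AM--GM; both yield the same supremum.
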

\begin{proof}
Recall that our program consists of the predictor
and the regularizer of the form,
\be 
{\Phi_{n_x}}^{\eqref{eq:P2}}(V, Z; U) := \sum_{j=1}^{n_x}\mathcal{H}^{\dagger}(\v_j\z_j^T)\vecc(U');
{\Theta_{n_x}}^{\eqref{eq:P2}}(V, Z) := \frac{1}{2}\sum_{j=1}^{n_x}\nmm{\v_j}_2^2 + \nmm{\z_j}_2^2.
\ee
In order to apply the Theorem \ref{thm:global} we need to

(i) check the equality 
$\langle{-\frac{1}{\l}\nabla\ell(Y, \sum_{j=1}^{r}\phi(W_j))}$, ${\sum_{j=1}^{r}\phi(W_j)}\rangle_q$
= $\sum_{j=1}^{r}\theta(W_j)$, and

(ii) compute the polar, 
$\Omega_{q}^{\circ}(\cdot)$.

First, we note that
${\Theta_{n_x}}^{\eqref{eq:P2}}$ is a sum of two gauge
functions then for any stationary points $\hat{V}, \hat{Z}$
from Proposition \ref{prop:eql} we have that,
\be
\IP{Y - \mathcal{H}^{\dagger}(\hat{V}\hat{Z}^T)\vecc(U')}
{\mathcal{H}^{\dagger}(\hat{V}\hat{Z}^T)\vecc(U')}_{\mu_N} =
\frac{\l}{2}\left[\nmF{\hat{V}}^2 + \nmF{\hat{Z}}^2\right],
\ee
here $\mu_N$ being the empirical measure with $N$-samples.\\
Second, we compute the polar for the empirical measure, $\mu_N$,
from the definition in Equation \eqref{eq:polar},
\be
\Omega^{\circ}_{\mu_N}(Z) = 
\sup_{\nmm{\v}_2^2 + \nmm{\z}_2^2 \leq 2} \IP{Z}{\mathcal{H}^{\dagger}(\v\z^T)\vecc(U')}_{\mu_N},
\ee
supremum over the set $\{\nmm{\v}_2^2 + \nmm{\z}_2^2 \leq 2\}$,
and $\{\nmm{\v}_2 \leq 1, \nmm{\z}_2 \leq 1\}$
by the homogeneity, then we have,
\be
\Omega^{\circ}_{\mu_N}(Z) = 
\sup_{\nmm{\v}_2 \leq 1, \nmm{\z}_2 \leq 1} \IP{Z}{\mathcal{H}^{\dagger}(\v\z^T)\vecc(U')}_{\mu_N},
\ee
by moving $\vecc(U')$ to the other side of the product we obtain,
\be
\Omega^{\circ}_{\mu_N}(Z) = 
\sup_{\nmm{\v}_2 \leq 1, \nmm{\z}_2 \leq 1} \IP{Z\vecc(U')^T}{\mathcal{H}^{\dagger}(\v\z^T)}_{\mu_N},
\ee
now we apply the adjoint of the $\mathcal{H}^{\dagger}$ that
leads us to
\be
\Omega^{\circ}_{\mu_N}(Z) = 
\sup_{\nmm{\v}_2 \leq 1, \nmm{\z}_2 \leq 1} \IP{\mathcal{H}\left(Z\vecc(U')^T\right)}{\v\z^T}_{\mu_N}
= \nmm{\frac{1}{N}\sum_{j=1}^{N}\mathcal{H}\left(Z_i\vecc(U'_i)^T\right)}_2.
\ee
Now we apply Theorem \ref{thm:global} for 
$Z = \frac{1}{\l}(\y_{2L+2} - \mathcal{H}^{\dagger}(\hat{V}\hat{Z}^T)\vecc(U'))$ which concludes 
our results.
\end{proof}

\subsection{Proof of Corollary \ref{crl:crl_p2}}\label{sec:apdx_crl_p2}
Recall the Corollary \ref{crl:crl_p2}.
\begin{corollaryA}
Under the conditions of Proposition \ref{prop:prop_p2}, if $(\hat{n}_x, \hat{V}, \hat{Z})$ are the globally
optimal points of program \eqref{eq:P2}, then the system has the order 
$\hat{n}_x$ and system parameters take the form
\be
\hat{A} = \left(\left[\hat{V}\right]_{1:n_y, :}\right)^{\dagger}\left[\hat{V}\right]_{1+n_y:2n_y, :},
\hat{B} = \left[\hat{Z}^T\right]_{:, 1:n_u},
\hat{C} = \left[\hat{V}\right]_{1:n_y, :}.
\ee
\end{corollaryA}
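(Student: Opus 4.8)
The plan is to reduce the statement to classical Ho--Kalman realization theory, by first showing that at a global optimum the product $\hat V\hat Z^T$ is a rank-$\hat n_x$ Hankel matrix, and then reading off $(\hat A,\hat B,\hat C)$ from the shift-invariance structure of its factors.

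First I would argue that $\hat V\hat Z^T$ lies in the range of $\mathcal H$, i.e.\ it equals $\mathcal H(\hat K)$ for the recovered Markov parameters $\hat K:=\mathcal H^{\dagger}(\hat V\hat Z^T)$. The objective of \eqref{eq:P2} depends on $(V,Z)$ only through $\mathcal H^{\dagger}(VZ^T)$ in the data-fit term, while $\half(\nmF{V}^2+\nmF{Z}^2)$, after minimizing over factorizations of a fixed product, equals the nuclear norm $\nmm{VZ^T}_*$. Among all matrices with a prescribed value of $\mathcal H^{\dagger}(\cdot)$, the Hankel-structured representative $\mathcal H(\hat K)$ keeps the data-fit unchanged (using $\mathcal H^{\dagger}\circ\mathcal H={\sf id}$) while not increasing the nuclear norm under the Hankel projection, so a global optimizer can be taken with $\hat V\hat Z^T=\mathcal H(\hat K)$ of rank $\hat n_x$.

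Next I would invoke realization theory. A rank-$\hat n_x$ Hankel matrix factors as $\mathcal H(\hat K)=\mathcal O\,\mathcal C$, where $\mathcal O$ has $i$-th block $C_0A_0^{i-1}$ and $\mathcal C$ has $j$-th block $A_0^{j-1}B_0$ for a minimal order-$\hat n_x$ realization $(A_0,B_0,C_0)$. Since $\hat V$ and $\hat Z$ furnish another rank-$\hat n_x$ factorization of the same matrix, there is an invertible $T$ with $\hat V=\mathcal O T$ and $\hat Z^T=T^{-1}\mathcal C$. Writing $\hat C:=C_0T$, $\hat A:=T^{-1}A_0T$, $\hat B:=T^{-1}B_0$, the $i$-th block of $\hat V$ is $C_0A_0^{i-1}T=\hat C\hat A^{\,i-1}$ and the $j$-th block of $\hat Z^T$ is $T^{-1}A_0^{j-1}B_0=\hat A^{\,j-1}\hat B$. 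Reading off the first blocks gives $\hat C=[\hat V]_{1:n_y,:}$ and $\hat B=[\hat Z^T]_{:,1:n_u}$, while the one-step shift $[\hat V]_{1+n_y:2n_y,:}=\hat C\hat A=[\hat V]_{1:n_y,:}\hat A$ yields $\hat A=([\hat V]_{1:n_y,:})^{\dagger}[\hat V]_{1+n_y:2n_y,:}$. I would then verify that $(\hat A,\hat B,\hat C)$ realizes $\hat K$, so that $G(\hat A,\hat B,\hat C)=G'(\hat V,\hat Z)$ and the recovered order is $\hat n_x$.

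The main obstacle is the first step: certifying that the global optimizer's product is genuinely Hankel and of minimal rank $\hat n_x$, since $\mathcal H^{\dagger}$ as defined is the block-extraction left inverse rather than the Hilbert-space adjoint, so $\mathcal H\circ\mathcal H^{\dagger}$ is an oblique projection and the nuclear-norm non-increase must be handled with care. A secondary technical point is that the shift formula recovers $\hat A$ exactly only when the first block $\hat C=[\hat V]_{1:n_y,:}$ has full column rank $\hat n_x$, so that $\hat C^{\dagger}\hat C=I$; otherwise one must use the full shifted observability matrix rather than the first block alone. I expect the remaining realization-theoretic bookkeeping and the consistency check to be routine.
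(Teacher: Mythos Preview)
Your approach is essentially the same as the paper's: both identify $\hat V$ with the observability matrix and $\hat Z^T$ with the controllability matrix of a minimal realization, then extract $(\hat A,\hat B,\hat C)$ from the first blocks and the one-step shift. The paper's argument is terser and goes through the equivalence of \eqref{eq:P2} with the nuclear-norm program \eqref{eq:P1_1}: it takes the SVD $\mathcal H(H^*)=L\Sigma R^T$ of the \eqref{eq:P1_1} optimum and asserts the balanced factorization $\hat V=L\Sigma O$, $\hat Z=R\Sigma^{T} O$ for some orthonormal $O$, whereas you work directly from an arbitrary rank-$\hat n_x$ factorization and the similarity transform $T$. Both routes are standard Ho--Kalman bookkeeping. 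The two technical concerns you flag---that $\hat V\hat Z^T$ is genuinely Hankel (the paper's $\mathcal H^{\dagger}$ is block extraction, not the Hilbert-space adjoint) and that $[\hat V]_{1:n_y,:}$ must have full column rank for the stated $\hat A$-formula to hold---are legitimate, and the paper's proof does not address them either; you are simply being more careful than the original.
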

\begin{proof}
As $\hat{V}, \hat{Z}$ are the global optimal points of the program
\eqref{eq:P2} whose minima matches with nuclear norm minimization
problem. Suppose that $\mathcal{H}(H^*) = L\Sigma R^T$, where
$H^*$ is the optima of Program \eqref{eq:P1_1}, then we have 
that for some orthonormal matrix, $O \in \R^{\hat{n}_x \times \hat{n}_x}$,
\be
\hat{V} = L\Sigma O, \hat{Z} = R \Sigma ^T O.
\ee
Now implicitly we have computed the SVD of the Hankel matrix
now we use $\hat{V}$ as observability matrix, i.e.,
\be
\hat{V} = L\Sigma O = \begin{bmatrix}
\hat{C} \\ \hat{C}\hat{A} \\ \vdots \\ \hat{C}\hat{A}^{L}
\end{bmatrix},
\ee
and similarly $\hat{Z}^T$ as the controllability matrix,
\be
\hat{Z}^T = \begin{bmatrix}\hat{B} & \hat{A}\hat{B} & \dots & \hat{A}^L\hat{B}\end{bmatrix}.
\ee
Now $(\hat{A}, \hat{B}, \hat{C})$ can be recovered by the
least-squares solution concluding our proof.
\end{proof}

\subsection{Proof of Theorem \ref{thm:thm_p1_opt}}\label{sec:apdx_thm_p1_out}

Now we apply the similar techniques for the Program
\eqref{eq:P1}. However, one main technical difficulty in
doing so is the fact the predictor is not
positively homogeneous with respective to parameter $A$.
We cannot directly apply Theorem \ref{thm:global}.
To circumvent this, \cite{tadipatri-et-al-iclr25} 
relaxed the assumption strict positive homogeneity 
to a weaker notion described as the following assumption.

\begin{assumption}[Balanced Homogeneity of $\phi$ and $\theta$]\label{ass:bal}
The factor map $\phi$ and the regularization map $\theta$ can be scaled equally by non-negative scaling of (a subset of) the parameters. 
Formally, we assume that there exists sub-parameter spaces
$(\mathcal{K}, \H)$ from the parameter space $\mathcal{W}$
such that $\mathcal{K} \times \H = \mathcal{W}$, $\forall ({\mathbf{k}}, {\mathbf{h}}) \in (\mathcal{K}, \H)$, and $\beta \geq 0$ we have 
$\phi((\beta {\mathbf{k}}, {\mathbf{h}})) = \beta^{p}\phi(({\mathbf{k}}, {\mathbf{h}}))$ and 
$\theta((\beta {\mathbf{k}}, {\mathbf{h}})) = \beta^{p}\theta(({\mathbf{k}}, {\mathbf{h}}))$ for some $p > 0$.
Further, we assume that for bounded input $X$ the set $\{ \phi(W)(X) : \forall W \in \mathcal{W}\text{ s.t. }\theta(W) \leq 1 \}$ is bounded.
\end{assumption}

For this weaker notion of homogeneity it was proven that
results of Proposition \ref{prop:eql} still hold true. Formally,
\begin{proposition}\label{prop:stat_points}
Under Assumption \ref{ass:bal}, if $\{W_j\}$ are stationary points of the Program \eqref{eq:gen}, then for all $j \in [r],$
\be
\forall j \in [r]; \IP{-\frac{1}{\l}\nabla \ell(Y, \sum_{j=1}^{r}\phi(W_j))}{\phi(W_j)}_{\mu_N} = \theta(W_j).
\ee
\end{proposition}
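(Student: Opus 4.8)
The plan is to prove the per-index identity directly by an Euler-type scaling argument, exploiting the balanced homogeneity of Assumption \ref{ass:bal} along only the $\mathcal{K}$-coordinates. First I would decompose each stationary point as $W_j = (\mathbf{k}_j, \mathbf{h}_j) \in \mathcal{K} \times \H$ following the splitting $\mathcal{W} = \mathcal{K} \times \H$, fix an index $j$, and freeze every other block $W_i$ ($i \neq j$) together with the $\H$-component $\mathbf{h}_j$. I would then scale only the $\mathcal{K}$-part and define the scalar curve $g_j(\beta) := \ell\big(Y, \beta^p\phi(W_j) + \sum_{i\neq j}\phi(W_i)\big) + \l\beta^p\theta(W_j) + c_j$, where $c_j = \l\sum_{i\neq j}\theta(W_i)$ is constant in $\beta$. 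The two homogeneity relations of Assumption \ref{ass:bal} are exactly what let me pull the factor $\beta^p$ out of both $\phi(\beta\mathbf{k}_j, \mathbf{h}_j)$ and $\theta(\beta\mathbf{k}_j, \mathbf{h}_j)$, so the claim reduces to a one-dimensional version of Euler's homogeneous-function identity.

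Next I would differentiate $g_j$ at $\beta = 1$. Using the differentiability of $\ell$ in its prediction argument (the standing hypothesis inherited from Proposition \ref{prop:eql}), the chain rule gives $g_j'(1) = p\,\IP{\nabla\ell(Y, \sum_i\phi(W_i))}{\phi(W_j)}_{\mu_N} + \l p\,\theta(W_j)$. Because the curve $\beta\mapsto(\beta\mathbf{k}_j,\mathbf{h}_j)$ is a smooth perturbation passing through the stationary point at $\beta = 1$, stationarity of $\{W_j\}$ forces the restriction $g_j$ to have vanishing derivative there, i.e. $g_j'(1) = 0$. Dividing by $p > 0$ and rearranging yields the claimed identity $\IP{-\frac{1}{\l}\nabla\ell(Y, \sum_i\phi(W_i))}{\phi(W_j)}_{\mu_N} = \theta(W_j)$ for each $j$; summing over $j \in [r]$ then recovers the aggregate equality required by Theorem \ref{thm:global}, so this stronger per-term statement subsumes the hypothesis of that theorem.

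The hard part will be justifying $g_j'(1) = 0$ when $\theta$ is a genuine gauge and therefore nondifferentiable, which is precisely the regime of interest since our regularizers are products and sums of gauge functions. The resolution, and the reason Assumption \ref{ass:bal} suffices in place of the full positive homogeneity used by \cite{haeffele_global_2017}, is that smoothness of the scalar map $g_j$ in $\beta$ comes entirely from the homogeneity identity rather than from smoothness of $\theta$: along the scaling ray the regularizer collapses to the monomial $\beta^p\theta(W_j)$, which is smooth on $(0,\infty)$. Hence the two-sided derivative $g_j'(1)$ exists, and I only need the one-sided directional derivatives of the objective along this curve to coincide; stationarity then pins their common value to zero. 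The degenerate case $\mathbf{k}_j = 0$ is trivial, since homogeneity with $p > 0$ gives $\phi(W_j) = 0$ and $\theta(W_j) = 0$, and the identity holds with both sides vanishing.
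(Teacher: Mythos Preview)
The paper does not supply its own proof of Proposition \ref{prop:stat_points}; the result is stated as imported from \cite{tadipatri-et-al-iclr25}. Your Euler-type scaling argument is correct and is the natural way to establish such identities: restricting the objective to the one-parameter family $\beta \mapsto (\beta\mathbf{k}_j, \mathbf{h}_j)$ and using balanced homogeneity to collapse both $\phi$ and $\theta$ to monomials in $\beta$ is precisely what Assumption \ref{ass:bal} is designed to enable. Your treatment of the nonsmoothness of $\theta$ is the key insight---the composition $\beta\mapsto\theta((\beta\mathbf{k}_j,\mathbf{h}_j)) = \beta^p\theta(W_j)$ is differentiable on $(0,\infty)$ regardless of whether $\theta$ is, so the two one-sided directional derivatives along $\pm\mathbf{k}_j$ coincide up to sign and stationarity forces both to vanish. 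This is exactly why partial homogeneity along $\mathcal{K}$ suffices in place of the full positive homogeneity used in Proposition \ref{prop:eql}. Your handling of the degenerate case $\mathbf{k}_j = 0$ is also correct. Since the paper provides no argument to compare against, I can only say your approach is sound and is almost certainly the proof the cited reference gives, being the direct adaptation of the classical Euler identity behind Proposition \ref{prop:eql} to the partially homogeneous setting.
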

Recall the Theorem \ref{thm:thm_p1_opt}.
\begin{theoremA}
Let $(U_i, Y_i)$ be $N$ roll-outs from the system \eqref{eq:ls} with $\x_0 = 0$. 
Consider the
estimator of system parameters via $\big({\textsf{diag}}\left(\a\right)$, $\begin{bmatrix}\b_1, \b_2, \dots, \b_{n_x}\end{bmatrix}^T, \begin{bmatrix}\c_1, \c_2, \dots, \c_{n_x}\end{bmatrix}\big)$, of order $n_x$ where $\a \in \R^{n_x}$, $\b_{\cdot} \in \R^{n_u}$, $\c_{\cdot} \in \R^{n_y}$. 
Define $U_i' := [U_i]_{1:2L+1}$, and 
\be
M(a)\hspace{-2pt}:= \frac{1}{2N(L+1)\l}\sum_{i=1}^{N}\left[\hspace{-1pt}Y_i - \sum_{j=1}^{n_x}\c_j\b_j^TU_i'P^T(a_j)\hspace{-2pt}\right]\hspace{-1pt}\frac{P(a)}{\gamma(a)}{U_i'}^{T}\text{, }
\text{\textsf{Polar}}^{\eqref{eq:P1}}\hspace{-2pt}:= \sup_{a \in {\R}} \nmm{M(a)}_2.
\ee
Let $\{a_j, \c_j, \b_j\}_{j=1}^{n_x}$ be any stationary points of program \eqref{eq:P1}.
If $\text{\sffamily Polar}^{\eqref{eq:P1}} = 1$, then $\{a_j, \c_j, \b_j\}$ are global optimal points. Otherwise, we can
reduce the objective by the parameters 
$\{a_j, \c_j, \b_j\}_{j=1}^{n_x}$ $\cup$ $\{\tau_1a^*$ $,\tau\c^*,$ $\tau\b^*\}$ for
some scalings, $\tau_1, \tau$, where
$a^*$ is the supremizer of $\kappa$ and $\c^*, \b^*$ are the top-singular vectors the matrix $M(\a^*)$.
\end{theoremA}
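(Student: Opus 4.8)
The plan is to cast Program~\eqref{eq:P1} into the general template~\eqref{eq:gen} using the factor map $\Phi_{n_x}^{\eqref{eq:P1}}$ and regularizer $\Theta_{n_x}^{\eqref{eq:P1}}$ from~\eqref{eq:factor_maps_1}, so that the claimed certificate drops out of Theorem~\ref{thm:global} once I (i)~establish the per-atom stationarity equality and (ii)~evaluate the polar $\Omega_q^\circ$ of~\eqref{eq:polar} in closed form. The obstacle flagged in the text is that the atom $P(a)\otimes\c\b^T$ and the atomic regularizer $\gamma(a)\nmm{\b}_2\nmm{\c}_2$ are positively homogeneous in $(\b,\c)$ but \emph{not} in $a$, so Proposition~\ref{prop:eql} does not apply directly. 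I would therefore invoke the weaker balanced-homogeneity framework of~\cite{tadipatri-et-al-iclr25}: verify Assumption~\ref{ass:bal} and then quote Proposition~\ref{prop:stat_points} to recover the stationarity equality.

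For Step~1 I would partition each atom's parameter $W_j=(a_j,\b_j,\c_j)$ into the scalable block $\mathbf{k}=(\b_j,\c_j)$ and the held-fixed block $\mathbf{h}=a_j$, noting that scaling $(\b_j,\c_j)\mapsto(\beta\b_j,\beta\c_j)$ sends both $P(a_j)\otimes\c_j\b_j^T$ and $\gamma(a_j)\nmm{\b_j}_2\nmm{\c_j}_2$ to $\beta^2$ times themselves, so balanced homogeneity holds with common degree $p=2$. The substantive part of Assumption~\ref{ass:bal} is boundedness of $\{\phi(W)(X):\theta(W)\le1\}$, and this is exactly what the choice $\gamma(a)=\sum_{t=0}^{L}a^{2t}$ is designed to buy, since $a$ ranges over all of $\R$. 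I would establish the operator-norm bound $\nmm{P(a)}_2\lesssim\gamma(a)$ — via the triangle inequality applied to the subdiagonal (shift) decomposition of the lower-triangular Toeplitz matrix $P(a)$ — so that on the sublevel set $\gamma(a)\nmm{\b}_2\nmm{\c}_2\le1$ the atom obeys $\nmm{P(a)\otimes\c\b^T}_2=\nmm{P(a)}_2\nmm{\b}_2\nmm{\c}_2\lesssim1$ uniformly in $a$. With Assumption~\ref{ass:bal} in force, Proposition~\ref{prop:stat_points} gives $\IP{-\frac1\l\nabla\ell}{\phi(W_j)}_{\mu_N}=\theta(W_j)$ at every stationary point, which aggregates to the hypothesis needed by Theorem~\ref{thm:global}.

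For Step~2, writing $g=-\frac1\l\nabla\ell$ as the residual field with residual matrices $R_i=Y_i-\sum_j\c_j\b_j^TU_i'P^T(a_j)$, I would compute $\Omega^{\circ}_{\mu_N}(g)=\sup_{\gamma(a)\nmm{\b}_2\nmm{\c}_2\le1}\IP{g}{[P(a)\otimes\c\b^T]\vecc(U')}_{\mu_N}$. Using $[P(a)\otimes\c\b^T]\vecc(U_i')=\vecc(\c\b^TU_i'P^T(a))$ and cycling the trace, the inner product collapses to a bilinear form $\gamma(a)\,\c^T M(a)\b$ whose contracted matrix is $\gamma(a)M(a)=\frac{1}{2N(L+1)\l}\sum_iR_iP(a){U_i'}^T$. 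For fixed $a$, maximizing over $\nmm{\b}_2\nmm{\c}_2\le1/\gamma(a)$ returns the top singular value, namely $\nmm{M(a)}_2$, and the outer $\sup_{a\in\R}$ yields $\Omega^{\circ}_{\mu_N}(g)=\text{\sffamily Polar}^{\eqref{eq:P1}}$ as in~\eqref{eq:polar_p1}. Because $P(a)/\gamma(a)$ stays bounded and converges as $|a|\to\infty$, the map $a\mapsto\nmm{M(a)}_2$ is continuous with finite limits, so its supremum is finite and attained at some $a^*$, the claimed supremizer.

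Feeding the Step-1 equality and the Step-2 polar into Theorem~\ref{thm:global} yields the dichotomy: if $\text{\sffamily Polar}^{\eqref{eq:P1}}=1$ the stationary point is globally optimal, and otherwise the supremizing atom $(a^*,\c^*,\b^*)$ — with $\c^*,\b^*$ the leading singular vectors of $M(a^*)$ — is a descent direction, so appending a suitably scaled copy $\{\tau_1a^*,\tau\c^*,\tau\b^*\}$ strictly decreases the objective. I expect the crux of the argument to be Step~1: nailing the operator-norm inequality $\nmm{P(a)}_2\lesssim\gamma(a)$ that makes the $\theta\le1$ sublevel set bounded across all real $a$ (and hence guarantees the polar supremum is finite and attained) is precisely the feature that distinguishes this non-homogeneous case from the clean positively homogeneous computation in Proposition~\ref{prop:prop_p2}.
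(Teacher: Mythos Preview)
Your proposal is correct and follows essentially the same route as the paper's proof: verify the balanced-homogeneity assumption so that Proposition~\ref{prop:stat_points} supplies the stationarity equality, then compute the polar via the Kronecker identity $[P(a)\otimes\c\b^T]\vecc(U')=\vecc(\c\b^TU'P^T(a))$ and trace cycling to reduce it to $\sup_{a}\nmm{M(a)}_2$, and finally invoke Theorem~\ref{thm:global}. You are in fact more thorough than the paper on one point: the paper simply asserts Proposition~\ref{prop:stat_points} applies, whereas you explicitly argue the boundedness clause of Assumption~\ref{ass:bal} via the inequality $\nmm{P(a)}_2\lesssim\gamma(a)$, and you also address attainment of the supremum over $a\in\R$---both details the paper leaves implicit.
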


\begin{proof}
Recall that our program consists of the predictor
and the regularizer of the form,
\be 
{\Phi_{n_x}}^{\eqref{eq:P1}}(\a, B, C; U)\hspace{-2pt}=\hspace{-2pt}\sum_{j=1}^{n_x}\left[P(a_j) \otimes \c_j\b_j^T\right]\vecc(U);
{\Theta_{n_x}}^{\eqref{eq:P1}}(\a, B, C)\hspace{-2pt}=\hspace{-2pt}\sum_{j=1}^{n_x}\gamma(a_j)\nmm{\b_j}_2\nmm{\c_j}_2.
\ee
For any stationary points, $\{a_j, \c_j, \b_j\}$ from
Proposition \ref{prop:stat_points} we have
\be
\IP{(\vecc(Y) - {\Phi_{n_x}}^{\eqref{eq:P1}}(\a, B, C; U))}{{\Phi_{n_x}}^{\eqref{eq:P1}}(\a, B, C; U)}_{\mu_N}
= 2\l (L+1) {\Theta_{n_x}}^{\eqref{eq:P1}}(\a, B, C).
\ee
The rest of the proof is just computing the polar
for this program \eqref{eq:P1}. Recall the definition,
\be
\Omega_{\mu_N}^{\circ}(\z) = \sup_{\gamma(a)\nmm{\c}_2\nmm{\b}_2 \leq 1}\IP{\z}{\left[P(a) \otimes \c\b^T \right] \vecc{(U)}}_{\mu_N}.
\ee
By de-vectorizing we have,
\be
\Omega_{\mu_N}^{\circ}(\z) = \sup_{\gamma(a)\nmm{\c}_2\nmm{\b}_2 \leq 1}\IP{\vecc^{\dagger}(\z)}{\vecc^{\dagger}\left(\left[P(a) \otimes \c\b^T \right] \vecc{(U)}\right)}_{\mu_N}.
\ee
Now use the identity $(C \otimes B)\vecc(X) = \vecc(BXC^T)$, then we
obtain,
\be
\Omega_{\mu_N}^{\circ}(\z) = \sup_{\gamma(a)\nmm{\c}_2\nmm{\b}_2 \leq 1}\IP{\vecc^{\dagger}(\z)}{\c\b^TUP^T(a)}_{\mu_N}.
\ee
Now by rearranging the terms across the inner product we obtain,
\be
\Omega_{\mu_N}^{\circ}(\z) = \sup_{\gamma(a)\nmm{\c}_2\nmm{\b}_2 \leq 1}\IP{\vecc^{\dagger}(\z)P(a)U^T}{\c\b^T}_{\mu_N}
= \sup_{a \in \R}\nmm{\frac{1}{N}\sum_{i=1}^{N}\vecc^{\dagger}(\z_i)\frac{P(a)}{\gamma(a)}U_i^T}_2.
\ee
Now set $\z = \frac{1}{2\l(L+1)}\vecc(Y) - \sum_{j=1}^{n_x}\left[P(a_j) \otimes \c_j\b_j^T\right]\vecc(U)$ that is same as
\be
\vecc^{\dagger}(\z) = \frac{1}{2\l(L+1)}\left[Y - \sum_{j=1}^{n_x}\c_j\b_j^TUP^T(a_j)\right].
\ee
Now plugging the polar in Theorem \eqref{thm:global} concludes the
proof.
\end{proof}

\section{Proofs for statistical bounds}\label{sec:apdx_stat}
First we present Theorem \ref{thm:apdx_timedep_risk} that
provides excess risk or recovery error bound
for time dependent data with compact hypothesis class.
These results take ideas from 
\cite{ziemann-tu-nips22, ziemann-et-al-icml24}. While
their analysis was catered to more general hypothesis class.
However, our settings is applied to simple linear hypothesis
classes.

Now we state our assumptions on the time-dependent data 
and the hypothesis class.
\begin{assumption}[Data assumption]\label{ass:data}
The sequence $\{\x_t\}_{t \in [T]}$ are random variables such that 
there exists a $r_1 \geq 0$ such that for any $r \geq r_1$ we have
that
\be
\mathbb{P}\left(\{\x_t\}_{t \in [T]} \notin \mathbb{B}(r)\right)
\leq \mathcal{O}\left(\exp\left(-r^2/\sigma^2\right)\right).
\ee
\end{assumption}
This assumption trivially holds if $\x_t$ is a sub-Gaussian random 
variable.
\begin{assumption}[Hypothesis class]\label{ass:hypo}
The parametric class $\F_{\theta}$, and function $f_{\cdot}: \R^{n_x} \to \R^{n_y}$ are such that for any fixed function $f^*: \R^{n_x} \to \R^{n_y}$ and a semi-metric $d$
there exists a $r_2 \geq 0$ such that for any $r \geq r_2$
the following holds true
\begin{enumerate}
    \item $\forall \theta, \theta' \in \F_{\theta}$ we have that
    $\sup_{\z \in \mathbb{B}(r)}\nmm{f_{\theta}(\z)-f_{\theta'}(\z)}_2 \leq {\sf poly}(r)d(\theta, \theta')$,
    \item $\sup_{\z \in \mathbb{B}(r), \theta \in \F_{\theta}}
    \nmm{f_{\theta}(\z) - f^*(\z)} = {\sf poly}(r)$,
    \item $\N(\F_{\theta}, d, \epsilon) \leq \mathcal{O}(1/\epsilon^{{\sf dim}(\F_{\theta})})$.
\end{enumerate}
\end{assumption}
If the function $f_{\theta}$ is linear in its inputs, then
Assumption~\ref{ass:hypo} holds, which is our primary concern.\\
Define `condition' number for the hypothesis class $\F_{\theta}$, 
and predictor $f_{\theta}: \R^{n_x} \to \R^{n_y}$; 
$\forall \theta \in \F_{\theta}$ as
\be\label{eq:cond}
{\mathsf cond}_{\F_{\theta}} := \max_{\theta \in \F_{\theta}}
\max_{t \in [T]}\frac{\sqrt{\mathbb{E}\left[\nmm{f_{\theta}(\x_t)}_2^4\right]}}{\mathbb{E}\left[\nmm{f_{\theta}(\x_t)}_2^2\right]}.
\ee
Now we state recovery error bounds for time dependent data.
\begin{theoremA}\label{thm:apdx_timedep_risk}
Let $\y_t = f^*(\x_t) + \v_t$ where $\x_t$ satisfies Assumption~\ref{ass:data}, and 
$\v_t \,|\, \x_{1:t}$ are conditionally independent 
$\sigma^2$-sub-Gaussian vectors in $\mathbb{R}^{n_y}$. 
Assume that the predictor function $f_{\theta}$, for all $\theta \in \F_{\theta}$, satisfies Assumption~\ref{ass:hypo}. 
Consider the parametric estimator $\hat{\theta}$, 
based on access to $N$ i.i.d. rollouts 
$\{(\x_{t}^{i}, \y_{t}^{i})\}$, obtained by solving the program
\begin{equation}\label{eq:emp_obj}
\hat{\theta} \in \argmin_{\theta \in \F_{\theta}} \frac{1}{NT} \sum_{i=1}^{N} \sum_{t=1}^{T} 
\nmm{\y_t^i - f_{\theta}(\x_t^i)}_2^2 + \lambda \Omega(f_{\theta}),
\end{equation}
where $\Omega: \F \to \mathbb{R}_+$ is a regularization function and $\lambda \in \mathbb{R}_+$.

Then, for any fixed $\delta \in (0,1]$, if
\[
\frac{N}{\ln(NT)} \gtrsim {\sf cond}_{\F_{\theta}}^2 \times 
\left[{\sf dim}(\F_{\theta}) + \ln\left(\frac{1}{\delta}\right)\right],
\]
w.p at least $1-\delta$ we have that
\begin{equation}\label{eq:excess_risk}
\nmm{f_{\hat{\theta}} - f^*}_{L_2}^2
\leq \tilde{\mathcal{O}}\left(\sigma^2 \frac{{\sf dim}(\F_{\theta}) + \ln(1/\delta)}{NT}\right) + \lambda \Omega(f^*).
\end{equation}
\end{theoremA}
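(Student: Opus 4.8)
The plan is to instantiate the dependent-data excess-risk machinery of \cite{ziemann-tu-nips22} by checking that the regularized least-squares estimator \eqref{eq:emp_obj} fits their framework, and then to track how the three problem-specific quantities --- the covering dimension ${\sf dim}(\F_\theta)$, the sub-Gaussian proxy $\sigma^2$, and the hypercontractivity ratio ${\sf cond}_{\F_\theta}$ of \eqref{eq:cond} --- propagate into the final rate. The starting point is the \emph{basic inequality}: since $\hat\theta$ minimizes the regularized empirical objective, comparing its value to that at $f^*$ and substituting $\y_t = f^*(\x_t)+\v_t$ cancels the pure-noise terms $\nmm{\v_t^i}_2^2$ and, after discarding $-\lambda\Omega(f_{\hat\theta})\le 0$, leaves
\[
\frac{1}{NT}\sum_{i=1}^{N}\sum_{t=1}^{T}\nmm{f_{\hat\theta}(\x_t^i)-f^*(\x_t^i)}_2^2 \;\leq\; \frac{2}{NT}\sum_{i=1}^{N}\sum_{t=1}^{T}\IP{\v_t^i}{f_{\hat\theta}(\x_t^i)-f^*(\x_t^i)} + \lambda\,\Omega(f^*).
\]
This reduces everything to bounding, uniformly over $\theta\in\F_\theta$, the martingale cross-term between the noise and the prediction error.

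Next I would control that cross-term by a \emph{localized} (offset) empirical-process argument. Because $\v_t\mid\x_{1:t}$ is conditionally independent and $\sigma^2$-sub-Gaussian, the summands form a martingale-difference sequence, so a self-normalized / Freedman-type tail bound dominates the cross-term by $\sigma$ times the empirical error norm. Combining this with a chaining argument over $\F_\theta$ --- whose metric entropy satisfies $\ln\N(\F_\theta,d,\e)\lesssim {\sf dim}(\F_\theta)\ln(1/\e)$ by part~3 of Assumption~\ref{ass:hypo}, and whose predictions are Lipschitz in $\theta$ and uniformly bounded on $\mathbb{B}(r)$ by parts~1--2 --- yields a bound of order $\sigma\sqrt{({\sf dim}(\F_\theta)+\ln(1/\delta))/(NT)}$ times the empirical error norm, while Assumption~\ref{ass:data} bounds the negligible contribution of the event $\{\x_t\notin\mathbb{B}(r)\}$. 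Absorbing this product into the left-hand side via $2ab\le\tfrac12 a^2+2b^2$ and solving the resulting self-bounding inequality gives an excess-risk bound in the \emph{empirical} norm with leading term $\sigma^2({\sf dim}(\F_\theta)+\ln(1/\delta))/(NT)$ plus the bias $\lambda\Omega(f^*)$.

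The main obstacle is the last step: upgrading the empirical-norm guarantee to the population $L_2$ norm under temporal dependence, since the $T$ summands within each of the $N$ i.i.d.\ rollouts are \emph{not} independent and an i.i.d.\ concentration inequality cannot be applied directly. This is exactly where ${\sf cond}_{\F_\theta}$ enters: the $L^4/L^2$ ratio in \eqref{eq:cond} is a trajectory norm-equivalence constant that bounds the variance of the per-sample squared prediction error relative to its mean, playing the role a mixing-time would in classical analyses. The key insight of \cite{ziemann-tu-nips22} is that this hypercontractivity alone suffices: a Paley--Zygmund/small-ball argument together with a Bernstein bound and a union bound over an $\e$-net of $\F_\theta$ (with linearity of $f_\theta$ ensuring the differences $f_\theta-f^*$ lie in the same linear class, so \eqref{eq:cond} controls them too) shows that, once $N/\ln(NT)\gtrsim {\sf cond}_{\F_\theta}^2[{\sf dim}(\F_\theta)+\ln(1/\delta)]$, the empirical second moment is within a constant factor of $\nmm{f_\theta-f^*}_{L_2}^2$ simultaneously for all $\theta\in\F_\theta$. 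This is the precise origin of the stated sample-complexity threshold.

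Finally I would combine the two estimates: substituting the two-sided empirical-to-population equivalence into the empirical excess-risk bound from the second paragraph and re-solving the self-bounding inequality for $\nmm{f_{\hat\theta}-f^*}_{L_2}^2$ yields the claimed rate $\tilde{\mathcal{O}}\!\big(\sigma^2({\sf dim}(\F_\theta)+\ln(1/\delta))/(NT)\big) + \lambda\Omega(f^*)$, with the ${\sf polylog}(NT)$ factor absorbing the $\ln(NT)$ terms generated both by chaining and by the net-discretization of the supremum. Specializing $f_\theta$ to the linear input-output maps $\Phi_{n_x}^{\eqref{eq:P2}}$ and $\Phi_{n_x}^{\eqref{eq:P1}}$, setting ${\sf dim}(\F_\theta)$ to the respective parameter counts, and identifying $\Omega$ with the optimal regularizers then recovers Theorems~\ref{thm:thm_p2_stat} and \ref{thm:thm_p1_stat} as corollaries.
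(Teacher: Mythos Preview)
Your proposal is correct and follows essentially the same architecture as the paper's proof: the basic inequality (Lemma~\ref{lemma:opt_lemma}), uniform control of the offset noise term via truncation to $\mathbb{B}(r)$ and discretization of $\F_\theta$ (Lemma~\ref{lemma:sup_hypo}), and the empirical-to-population upgrade via the $L^4/L^2$ condition number (Lemma~\ref{lemma:cond}). The only cosmetic differences are that the paper uses a single-scale $\epsilon$-net rather than chaining, and works directly with the offset functional $\IP{f_\theta-f^*}{\v_t-(f_\theta-f^*)}$ so that the self-normalization is built into the Cram\'er--Chernoff step rather than obtained afterwards by AM--GM.
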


We will defer the proof to \textsection\ref{sec:apdx_timedep_risk_proof}.
The ideas for the proof of Theorem \ref{thm:apdx_timedep_risk} are
directly inspired from
\cite{ziemann-tu-nips22, ziemann-et-al-icml24}.
Although, their bounds were restricted to un-regularized problem in this work we extend these proofs
to regularized versions in \textsection\ref{sec:apdx_timedep_risk_proof}. 
Observe that due to regularization term objective \eqref{eq:emp_obj},
the obtained solution will be bias. This bias can be observed in the bound of Equation \eqref{eq:excess_risk}. Therefore, if we choose $\l$ to decay at the appropriate rate then we can
achieve bounds that decay with more samples.\\
Now we state a Corollary that chooses appropriate $\l$.
\begin{corollaryA}\label{crl:excess_risk_recovery}
Under the conditions of Theorem \ref{thm:apdx_timedep_risk}, suppose $\l \leq \tilde{\mathcal{O}}\left(\frac{{\sf dim}(\F_{\theta}) + \ln(1/\delta)}{NT}\right)$ then we have that
\be\label{eq:excess_recovery}
\mathbb{P}\left(\nmm{f_{\hat{\theta}} - f^*}_{L_2}^2 \leq \tilde{\mathcal{O}}\left({\left[\sigma^2 + \Omega(f^*)\right]\frac{{\sf dim}(\F_{\theta}) + \ln(1/\delta)}{NT}}\right)\right) \geq 1 - \delta.
\ee
\end{corollaryA}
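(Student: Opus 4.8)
The plan is to treat the corollary as a direct specialization of Theorem \ref{thm:apdx_timedep_risk} in which the free regularization parameter $\l$ is pinned to a specific decay rate, so that the (deterministic) bias contribution is absorbed into the statistical term. First I would invoke Theorem \ref{thm:apdx_timedep_risk} verbatim: its hypotheses are precisely the standing assumptions here, so on an event of probability at least $1-\delta$ we have
\be
\nmm{f_{\hat{\theta}} - f^*}_{L_2}^2 \leq \tilde{\mathcal{O}}\left(\sigma^2 \frac{{\sf dim}(\F_{\theta}) + \ln(1/\delta)}{NT}\right) + \l\,\Omega(f^*).
\ee

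Next I would substitute the prescribed choice $\l \leq \tilde{\mathcal{O}}\!\left(\frac{{\sf dim}(\F_{\theta}) + \ln(1/\delta)}{NT}\right)$ into the bias summand, which gives
\be
\l\,\Omega(f^*) \leq \tilde{\mathcal{O}}\left(\Omega(f^*)\,\frac{{\sf dim}(\F_{\theta}) + \ln(1/\delta)}{NT}\right).
\ee
Since both summands now carry the common factor $\frac{{\sf dim}(\F_{\theta}) + \ln(1/\delta)}{NT}$, they collapse into a single $\tilde{\mathcal{O}}$ expression with prefactor $\sigma^2 + \Omega(f^*)$, which is exactly \eqref{eq:excess_recovery}. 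The underlying event is unchanged from Theorem \ref{thm:apdx_timedep_risk}, so the confidence level $1-\delta$ carries over without modification.

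Analytically this is a one-line substitution, so I do not expect any genuine obstacle; the entire quantitative content lives in Theorem \ref{thm:apdx_timedep_risk}. The only point deserving a word of care is semantic rather than computational: the bound is informative only when $\Omega(f^*) < \infty$, i.e., when the target map $f^*$ is realizable within the parametric class $\F_{\theta}$ (this is precisely where the real, diagonalizable restriction enters once the corollary is instantiated for \eqref{eq:P1}). When $f^*$ is not representable, the infimum defining $\Omega(f^*)$ ranges over the empty set and the right-hand side is $+\infty$, so the statement holds trivially but conveys nothing, consistent with Statement 1 of Corollary \ref{crl:crl_p1_stat}.
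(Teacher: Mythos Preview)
Your proposal is correct and matches the paper's own treatment: the paper states only that the corollary ``is straight a direct consequence of Theorem \ref{thm:apdx_timedep_risk},'' which is exactly the substitution argument you wrote out. If anything, your write-up is more explicit than the paper's, and your closing remark about $\Omega(f^*)=\infty$ anticipates precisely how the corollary is later instantiated in Corollary \ref{crl:crl_p1_stat}.
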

This corollary is straight a direct consequence of Theorem \ref{thm:apdx_timedep_risk}. As $N\to \infty$, the upper bound
of Equation \ref{eq:excess_recovery} goes to 0 thereby achieving 
recovery. Next we apply Theorem \ref{thm:apdx_timedep_risk}
to Programs \eqref{eq:P2}, and \eqref{eq:P1}.

\subsection{Proof of Theorem \ref{thm:thm_p2_stat}}\label{sec:apdx_thm_p2_stat}
Recall the statistical recovery of 
the program \eqref{eq:P2}.

\begin{theoremA}\label{thm:apdx_thm_p2_stat}
Let $(U_i, Y_i)$ be $N$ i.i.d roll-outs following the Assumptions \ref{ass:data_model}.
Fix a $\delta \in (0, 1]$. 
Suppose the regularization parameter is such that
$\l \leq \tilde{\mathcal{O}}\left(\frac{n_x(n_y + n_u)}{N} + \frac{\ln(1/\delta)}{NL}\right)$.
For any global optimal points $(n_x, \hat{V}, \hat{Z})$ of program \eqref{eq:P2} satisfying Assumption \ref{ass:local_lip_hankel}.
If $N/\ln(NL) \gtrsim {\sf cond}_{\F_{\theta}^{\eqref{eq:P2}}}^2 \times [Ln_x(n_y + n_u) + \ln(1/\delta)]$, then
w.p at-least $1-\delta$ we have that
\be\label{eq:apdx_stat_p2}
\begin{split}
\nmm{(G'(\hat{V}, \hat{Z})-G^*){\tilde{\Sigma}_{U}}^{1/2}}_{F}^2
\leq \tilde{\mathcal{O}}\left(\left(\sigma^2  + \Omega^{\eqref{eq:P2}}(G^*)\right)\left[\frac{n_x(n_y + n_u)}{N} + \frac{\ln(1/\delta)}{NL}\right]\right).
\end{split}
\ee
\end{theoremA}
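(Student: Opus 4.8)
The plan is to instantiate the general time-dependent recovery bound of Corollary~\ref{crl:excess_risk_recovery} (the $\l$-tuned form of Theorem~\ref{thm:apdx_timedep_risk}) to Program~\eqref{eq:P2}. The decisive observation is that although $(V,Z)$ enters the objective nonconvexly through the bilinear product $\mathcal{H}^{\dagger}(VZ^T)$, for a \emph{fixed} parameter $\theta=(V,Z)$ the input--output map $U\mapsto G'(V,Z)\vecc(U)$ is \emph{linear} in $U$. I therefore set up the dictionary in which the predictor $f_{\theta}$ sends the trajectory $U$ to the stacked output $G'(V,Z)\vecc(U)$, the $t$-th output block (extracted by $E_t$) plays the role of the response at time step $t$, the number of time steps is the trajectory length $T=2(L+1)$, the regularization functional is the variational surrogate $\Omega^{\eqref{eq:P2}}$ of Equation~\eqref{eq:opt_reg_P2}, and the noise level is $\sigma$. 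Under this dictionary the regularized empirical program~\eqref{eq:emp_obj} becomes the full-trajectory fitting problem associated with~\eqref{eq:P2} (summing the $2(L+1)$ per-block squared residuals reconstitutes $\nmF{\vecc(Y_i)-G'(V,Z)\vecc(U_i)}^2$), and ${\sf cond}_{\F_{\theta}^{\eqref{eq:P2}}}$ is exactly the condition number~\eqref{eq:cond} with readout $E_t$ ranging over $t\in[2(L+1)]$.

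Next I would verify Assumption~\ref{ass:hypo} for this class. Because $f_{\theta}$ is linear in $U$, conditions (1) and (2) reduce to Lipschitz continuity and boundedness of $(V,Z)\mapsto G'(V,Z)$ over the compact set $\F_{\theta}^{\eqref{eq:P2}}$ of Assumption~\ref{ass:local_lip_hankel}; both follow since $\mathcal{H}^{\dagger}$ is a bounded linear operator and $(V,Z)\mapsto VZ^T$ is locally Lipschitz on a bounded domain, so condition (1) holds with a factor linear in $r$. For the metric entropy (condition (3)) I would cover the $(L+1)n_x(n_y+n_u)$ coordinates of $(V,Z)$ directly, giving ${\sf dim}(\F_{\theta}^{\eqref{eq:P2}})=(L+1)n_x(n_y+n_u)$, which is of order $Ln_x(n_y+n_u)$. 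Substituting ${\sf dim}$ and $T=2(L+1)$ into the burn-in requirement of Theorem~\ref{thm:apdx_timedep_risk} reproduces $N/\ln(NL)\gtrsim {\sf cond}_{\F_{\theta}^{\eqref{eq:P2}}}^2[Ln_x(n_y+n_u)+\ln(1/\delta)]$, while the admissible range $\l\le\tilde{\mathcal{O}}(({\sf dim}+\ln(1/\delta))/(NT))$ collapses to the stated $\l\le\tilde{\mathcal{O}}(n_x(n_y+n_u)/N+\ln(1/\delta)/(NL))$ once one factor of $L$ cancels between ${\sf dim}$ and $T$.

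It remains to convert the abstract $L_2$ error into the weighted Frobenius error of~\eqref{eq:apdx_stat_p2} and to identify the bias term. For the norm I would expand
\[
\nmm{f_{\hat{\theta}}-f^*}_{L_2}^2
=\mathbb{E}_U\,\nmeusq{(G'(\hat{V},\hat{Z})-G^*)\vecc(U)}
=\tr\!\big((G'(\hat{V},\hat{Z})-G^*)^T(G'(\hat{V},\hat{Z})-G^*)\,\mathbb{E}[\vecc(U)\vecc(U)^T]\big),
\]
and use that, by Assumption~\ref{ass:data_model}, the inputs are independent across the $2(L+1)$ blocks with common covariance $\Sigma_U$, so $\mathbb{E}[\vecc(U)\vecc(U)^T]=I_{2(L+1)}\otimes\Sigma_U=\tilde{\Sigma}_U$; hence the $L_2$ error equals $\nmF{(G'(\hat{V},\hat{Z})-G^*)\tilde{\Sigma}_U^{1/2}}^2$. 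For the bias, $\Omega(f^*)=\Omega^{\eqref{eq:P2}}(G^*)$ by definition~\eqref{eq:opt_reg_P2} (the true order-$n_x^*$ system makes $G^*$ realizable, so this infimum is finite). Plugging ${\sf dim}=(L+1)n_x(n_y+n_u)$ and $NT$ of order $NL$ into Equation~\eqref{eq:excess_recovery} then yields
\[
\nmF{(G'(\hat{V},\hat{Z})-G^*)\tilde{\Sigma}_U^{1/2}}^2
\le\tilde{\mathcal{O}}\!\left((\sigma^2+\Omega^{\eqref{eq:P2}}(G^*))\left[\tfrac{n_x(n_y+n_u)}{N}+\tfrac{\ln(1/\delta)}{NL}\right]\right),
\]
which is precisely~\eqref{eq:apdx_stat_p2}.

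I expect the main obstacle to be bookkeeping the trajectory-length dependence correctly. The favorable rate $n_x(n_y+n_u)/N$, with the $L$ removed, arises only because the analysis treats all $T=2(L+1)$ output blocks of each rollout as distinct time steps, so that the effective sample size is $NT$ (of order $NL$) and one factor of $L$ in ${\sf dim}=(L+1)n_x(n_y+n_u)$ cancels; using a single terminal output would instead leave the inferior rate $Ln_x(n_y+n_u)/N$. Making this rigorous requires that the effective sample count in Theorem~\ref{thm:apdx_timedep_risk} genuinely be $NT$, which hinges on the per-block condition number ${\sf cond}_{\F_{\theta}^{\eqref{eq:P2}}}$ remaining of order one uniformly over $t\in[2(L+1)]$ and over the class. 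Controlling this uniform fourth-to-second moment ratio is the one genuinely quantitative step; it is immediate (equal to $3$) for Gaussian inputs and is handled for sub-Gaussian inputs by the hypercontractivity estimate in Proposition~6.1 of~\cite{ziemann_tutorial_2024}.
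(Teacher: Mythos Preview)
Your proposal is correct and follows essentially the same approach as the paper: instantiate Corollary~\ref{crl:excess_risk_recovery} with $f_{\theta}(\cdot)=E_t G'(V,Z)\vecc(\cdot)$, $T=2(L+1)$, ${\sf dim}(\F_{\theta}^{\eqref{eq:P2}})=(L+1)n_x(n_y+n_u)$, then identify the $L_2$ error with $\nmF{(G'(\hat V,\hat Z)-G^*)\tilde{\Sigma}_U^{1/2}}^2$ via $\mathbb{E}[\vecc(U)\vecc(U)^T]=\tilde{\Sigma}_U$. Your one-line trace computation for that last identification is a bit cleaner than the paper's longer chain of Kronecker manipulations, and your explicit discussion of why the $L$ in ${\sf dim}$ cancels against $T$ (and would not if only the terminal output were used) is a clarification the paper leaves implicit.
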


\begin{proof}
This result is straight forward application of Corollary 
\ref{crl:excess_risk_recovery} (which in-turn is a direct
consequence of Theorem \ref{thm:apdx_timedep_risk}).
First let us convert \eqref{eq:ls} to the problem setting
of Theorem \ref{thm:apdx_timedep_risk}.
Define $E_t \in \R^{n_y \times 2(L+1)n_y}$ where only
$[E_t]_{1 + (t-1)n_y: tn_y} = I_{n_y}$.
\be
\y_t = E_tG^*\vecc(U) + \zeta_t 
     = \left[\vecc(U)^T \otimes E_t\right]\vecc(G^*) + \zeta_t.
\ee

Define $\tilde{U}_t := \vecc(U)^T \otimes E_t \in \R^{n_y \times 4(L+1)^2n_yn_u}$, then
we have
\be
\y_t = \tilde{U}_t\vecc(G^*) + \zeta_t.
\ee

This concludes the algebraic manipulation to resemble
the problem setting in Theorem \ref{thm:apdx_timedep_risk}.
We now instantiate Corollary \ref{crl:excess_risk_recovery}
with parametric class $\F_{\theta}^{\eqref{eq:P2}}$,
$f^*(X_t) = \tilde{U}_t\vecc(G^*)$, and $f_{\theta}(X_t)
= \tilde{U}_t\vecc(G'(V, Z))$, where $G'(V, Z)$ is 
impulse response matrix of the system with 
Markov parameters $\H^{\dagger}(VZ^T)$.

\be
\begin{split}
{\sf cond}_{\F_{\theta}^{\eqref{eq:P2}}}
= \max_{(V, Z) \in \F_{\theta}^{\eqref{eq:P2}}}
\max_{t \in [2(L+1)]}\frac{\sqrt{\mathbb{E}\left[\nmm{ \tilde{U}_t\vecc(G'(V, Z)-G^*)}_2^4\right]}}{\mathbb{E}\left[\nmm{\tilde{U}_t\vecc(G'(V, Z)-G^*)}_2^2\right]}
\end{split}
\ee
The condition number
${\sf cond}_{\F_{\theta}^{\eqref{eq:P2}}}$ is constant
if $U$ follows Gaussian distribution. For the cases,
when $U$ following sub-Gaussian we can upper bound this
object through Proposition 6.1 from 
\cite{ziemann_tutorial_2024}. Finally,
we have that ${\sf dim}(\F_{\theta}^{\eqref{eq:P2}})$
is the ambient dimension of $(V, Z)$ that amounts to
$(L+1)n_x(n_u + n_y)$. 
With these properties applied to Corollary 
\ref{crl:excess_risk_recovery} we obtain,
\be
\begin{split}
\mathbb{P}\Big(&\left(\frac{1}{T}\sum_{t=1}^{T}\mathbb{E}'\left[\nmm{\tilde{U_t}\vecc(G'(\hat{V}, \hat{Z})-G^*)}_2^2\right]\right)^{1/2} \\
&\leq \tilde{\mathcal{O}}\left(\sqrt{\left[\sigma^2 + \Omega^{\eqref{eq:P2}}(G^*)\right]\frac{(L+1)n_x(n_u + n_y) + \ln(1/\delta)}{N2(L+1)}}\right)\Big)\\
&\geq 1 - \delta.
\end{split}
\ee
We simplify the left side term to Hankel norm via linearity
of expectation we obtain
\be
\begin{split}
&\frac{1}{T}\sum_{t=1}^{T}\mathbb{E}'\left[\nmm{\tilde{U_t}\vecc(G'(\hat{V}, \hat{Z})-G^*)}_2^2\right]\\
&= \mathbb{E}\left[\IP{\vecc(G'(\hat{V}, \hat{Z})-G^*)\vecc(G'(\hat{V}, \hat{Z})-G^*)^T}{\frac{1}{T}\sum_{t=1}^{T}\tilde{U}_t^T\tilde{U}_t}\right].
\end{split}
\ee
From the definition of $\tilde{U}_t^T\tilde{U}_t = (\vecc(U) \otimes E_t^T)
(\vecc(U)^T \otimes E_t)$ = $(\vecc(U)\vecc(U)^T \otimes E_t^TE_t)$. Using this identity we have
\be
\begin{split}
&\frac{1}{T}\sum_{t=1}^{T}\mathbb{E}'\left[\nmm{\tilde{U_t}\vecc(G'(\hat{V}, \hat{Z})-G^*)}_2^2\right]\\
&= \mathbb{E}'\left[\IP{\vecc(G'(\hat{V}, \hat{Z})-G^*)\vecc(G'(\hat{V}, \hat{Z})-G^*)^T}{\vecc(U)\vecc(U)^T \otimes \frac{1}{T}\sum_{t=1}^{T}E_t^TE_t}\right].
\end{split}
\ee
We have that 
$\frac{1}{T}\sum_{t=1}^{T}E_t^TE_t = I_{2(L+1)n_y}$ this simplifies our object of interest to
\be
\begin{split}
&\frac{1}{T}\sum_{t=1}^{T}\mathbb{E}'\left[\nmm{\tilde{U_t}\vecc(G'(\hat{V}, \hat{Z})-G^*)}_2^2\right]\\
&= \mathbb{E}'\left[\IP{\vecc(G'(\hat{V}, \hat{Z})-G^*)\vecc(G'(\hat{V}, \hat{Z})-G^*)^T}{\vecc(U)\vecc(U)^T \otimes I_{2(L+1)n_y}}\right].
\end{split}
\ee
Now we move the expectation inside and use the property 
that $\mathbb{E}'[\vecc(U)\vecc(U)^T]$ = $I_{2(L+1)} \otimes \Sigma_U$ to obtain
\be
\begin{split}
&\frac{1}{T}\sum_{t=1}^{T}\mathbb{E}'\left[\nmm{\tilde{U_t}\vecc(G'(\hat{V}, \hat{Z})-G^*)}_2^2\right]\\
&= \IP{\vecc(G'(\hat{V}, \hat{Z})-G^*)\vecc(G'(\hat{V}, \hat{Z})-G^*)^T}{(I_{2(L+1)} \otimes \Sigma_{U}) \otimes I_{2(L+1)n_y}}.
\end{split}
\ee

We can factorize the right argument in the inner product 
to a quadratic form
\be
\begin{split}
&\frac{1}{T}\sum_{t=1}^{T}\mathbb{E}'\left[\nmm{\tilde{U_t}\vecc(G'(\hat{V}, \hat{Z})-G^*)}_2^2\right]\\
= &\langle{\vecc(G'(\hat{V}, \hat{Z})-G^*)\vecc(G'(\hat{V}, \hat{Z})-G^*)^T},\\
&{((I_{2(L+1)} \otimes \Sigma_{U}^{1/2}) \otimes I_{2(L+1)n_y})((I_{2(L+1)} \otimes \Sigma_{U}^{1/2}) \otimes I_{2(L+1)n_y})^T} \rangle.
\end{split}
\ee

Now re-arrange the terms in the inner product we obtain
\be
\begin{split}
&\frac{1}{T}\sum_{t=1}^{T}\mathbb{E}'\left[\nmm{\tilde{U_t}\vecc(G'(\hat{V}, \hat{Z})-G^*)}_2^2\right]\\
&= \mathbb{E}'\left[\nmm{((I_{2(L+1)} \otimes \Sigma_{U}^{1/2}) \otimes I_{2(L+1)n_y})\vecc(G'(\hat{V}, \hat{Z})-G^*)}_2^2\right].
\end{split}
\ee
Observe that $((I_{2(L+1)} \otimes \Sigma_{U}^{1/2}) \otimes I_{2(L+1)n_y})$ is the same as $I_{4(L+1)^2n_y} \otimes \Sigma_{U}^{1/2}$ using this property we obtain
\be
\begin{split}
&\frac{1}{T}\sum_{t=1}^{T}\mathbb{E}'\left[\nmm{\tilde{U_t}\vecc(G'(\hat{V}, \hat{Z})-G^*)}_2^2\right]\\
&= \nmm{(I_{4(L+1)^2n_y} \otimes \Sigma_{U}^{1/2})\vecc(G'(\hat{V}, \hat{Z})-G^*)}_2^2.
\end{split}
\ee
Now we use the Kronecker product property again to obtain
\be
\begin{split}
&\frac{1}{T}\sum_{t=1}^{T}\mathbb{E}'\left[\nmm{\tilde{U_t}\vecc(G'(\hat{V}, \hat{Z})-G^*)}_2^2\right]\\
&= \nmm{(G'(\hat{V}, \hat{Z})-G^*)^T}_{I_{2(L+1)} \otimes \Sigma_U}^2.
\end{split}
\ee
This concludes the proof.
\end{proof}

\subsection{Proof of Corollary \ref{crl:crl_p2_stat}}\label{sec:apdx_crl_p2_stat}
\begin{corollaryA}
Under the conditions of Theorem \ref{thm:thm_p2_stat},
the sample complexity for the recovery of $G^*$
through program \eqref{eq:P1} is
$NL \gtrsim (Ln_x(n_y + n_u) + \ln(1/\delta)) \cdot {\sf polylog}(NL)$.
\end{corollaryA}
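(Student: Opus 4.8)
The plan is to read the sample complexity off directly from the excess-risk bound of Theorem~\ref{thm:thm_p2_stat}, whose hypotheses this corollary inherits in full, so that no new probabilistic argument is required. The relevant guarantee is
\be
\nmm{(G'(\hat{V}, \hat{Z})-G^*){\tilde{\Sigma}_{U}}^{1/2}}_{F}^2
\leq \tilde{\mathcal{O}}\left(\left(\sigma^2 + \Omega^{\eqref{eq:P2}}(G^*)\right)\left[\frac{n_x(n_y + n_u)}{N} + \frac{\ln(1/\delta)}{NL}\right]\right),
\ee
which holds with the prescribed decaying choice of $\l$ once the burn-in condition of the theorem is in force; I treat the latter as a standing hypothesis.

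First I would place the two error terms over the common denominator $N_{\sf tol} := NL$, using
\be
\frac{n_x(n_y + n_u)}{N} + \frac{\ln(1/\delta)}{NL} = \frac{Ln_x(n_y+n_u) + \ln(1/\delta)}{NL}.
\ee
This rewrites the bound as $\tilde{\mathcal{O}}\big((\sigma^2+\Omega^{\eqref{eq:P2}}(G^*))\,(Ln_x(n_y+n_u)+\ln(1/\delta))/(NL)\big)$ and isolates the ``effective parameter count'' $Ln_x(n_y+n_u)$, reflecting that the Hankel factorization in \eqref{eq:P2} carries an extra factor of $L$ relative to the true number of system parameters.

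Next, treating the problem constants $\sigma^2$ and $\Omega^{\eqref{eq:P2}}(G^*)$ as fixed, recovery of $G^*$ amounts to forcing this right-hand side below a prescribed target accuracy, i.e.\ requiring the ratio $(Ln_x(n_y+n_u)+\ln(1/\delta))/(NL)$ to be at most a constant. Rearranging for $N_{\sf tol}=NL$ and folding the hidden $\tilde{\mathcal{O}}$ logarithmic factors (powers of $\ln(NL)$ inherited from the concentration step in Theorem~\ref{thm:apdx_timedep_risk}) into the ${\sf polylog}(NL)$ term gives exactly $NL \gtrsim (Ln_x(n_y+n_u) + \ln(1/\delta)) \cdot {\sf polylog}(NL)$, which is the claim.

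I do not expect a substantive obstacle, since the statement is essentially bookkeeping on the already-established error rate. The one mild subtlety I would address is that the resulting inequality is implicit in $NL$, because the ${\sf polylog}(NL)$ factor itself depends on $NL$; this is resolved in the standard way by noting that $\ln(NL)$ grows slowly enough for the bound to be satisfiable self-consistently, so no circularity arises.
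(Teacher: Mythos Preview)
Your proposal is correct and follows essentially the same approach as the paper: read the sample complexity directly off the excess-risk bound of Theorem~\ref{thm:thm_p2_stat} by requiring the right-hand side to decay. The paper's own proof is a one-liner to this effect, so your additional bookkeeping (common denominator $NL$, remark on the implicit ${\sf polylog}(NL)$) is more explicit than what the authors provide but not a different argument.
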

\begin{proof}
From Equation \eqref{eq:apdx_stat_p2} it is sufficient
to have
\be
NL \gtrsim (Ln_x(n_y + n_u) + \ln(1/\delta)) \cdot {\sf polylog}(NL),
\ee
in-order for the right side term in Equation \eqref{eq:apdx_stat_p2} to decay asymptotically.
\end{proof}

\subsection{Proof of Theorem \ref{thm:thm_p1_stat}}\label{sec:apdx_thm_p1_stat}
Recall the statistical recovery of 
the program \eqref{eq:P1}.

\begin{theoremA}
Let $(U_i, Y_i)$ be $N$ i.i.d roll-outs following the Assumptions \ref{ass:data_model}.
Fix a $\delta \in (0, 1]$. 
Suppose the regularization parameter is such that
$\l \leq \tilde{\mathcal{O}}\left(\frac{n_x(n_y + n_u) + \ln(1/\delta)}{NL}\right)$.
For any global optimal points $(n_x, \{\hat{a}_j\}, \hat{B}, \hat{C})$ of 
program \eqref{eq:P1} satisfies Assumption \ref{ass:local_lip}. 
If $N/\ln(NL) \gtrsim {\sf cond}_{\F_{\theta}^{\eqref{eq:P1}}}^2\times[n_x(n_y + n_u) + \ln(1/\delta)]$,
then w.p at-least $1-\delta$ we have that
\be
\nmm{(G(\diag(\{\hat{a}_j\}), \hat{B}, \hat{C}) - G^*){\tilde{\Sigma}_U}^{1/2}}^2_{F}\hspace{-2pt}
\leq \tilde{\mathcal{O}}\left(\left(\sigma^2  + \Omega^{\eqref{eq:P1}}(G^*)\right)\left[\frac{n_x(n_y + n_u) + \ln(1/\delta)}{NL}\right]\right).
\ee
\end{theoremA}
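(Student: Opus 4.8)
The plan is to mirror the proof of Theorem \ref{thm:thm_p2_stat} by reducing \eqref{eq:P1} to an application of Corollary \ref{crl:excess_risk_recovery}, itself a consequence of the time-dependent excess risk bound in Theorem \ref{thm:apdx_timedep_risk}. First I would cast the system \eqref{eq:ls} into the linear-in-data form required there: writing $\y_t = E_t G^* \vecc(U) + \zeta_t = [\vecc(U)^T \otimes E_t]\vecc(G^*) + \zeta_t$ and setting $\tilde{U}_t := \vecc(U)^T \otimes E_t$, so that $\y_t = \tilde{U}_t \vecc(G^*) + \zeta_t$. The predictor is then $f_\theta(X_t) = \tilde{U}_t \vecc(G(\diag(\a), B, C))$ with parameter $\theta = (\a, B, C)$, and the target is $f^*(X_t) = \tilde{U}_t \vecc(G^*)$. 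Since the output map is linear in the inputs, Assumption \ref{ass:data_model} makes the condition number ${\sf cond}_{\F_\theta^{\eqref{eq:P1}}}$ a bounded constant (equal to $3$ in the Gaussian case, and controlled via Proposition 6.1 of \cite{ziemann_tutorial_2024} otherwise).

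The decisive difference from the \eqref{eq:P2} analysis lies in the dimension count. Here the parameter $(\a, B, C)$ has ambient dimension $n_x + n_x n_u + n_x n_y = \mathcal{O}(n_x(n_y + n_u))$, with \emph{no} dependence on the trajectory length $L$; this is precisely what produces the $L$-fold improvement recorded in Table \ref{tab:comparision_stat}. I would then verify that Assumption \ref{ass:hypo} holds for $\F_\theta^{\eqref{eq:P1}}$: compactness and the resulting $\epsilon^{-{\sf dim}(\F_\theta)}$ covering bound follow directly from the constraints $|a_j| \leq B_a$, $\nmm{\b_j}_2 \leq B_b$, $\nmm{\c_j}_2 \leq B_c$ defining the class, while the Lipschitz requirement (item 1 of Assumption \ref{ass:hypo}) is supplied by the explicit local Lipschitz bound of Assumption \ref{ass:local_lip}. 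With these in hand I would invoke Corollary \ref{crl:excess_risk_recovery} with $\Omega(f^*) = \Omega^{\eqref{eq:P1}}(G^*)$ and $\lambda \leq \tilde{\mathcal{O}}((n_x(n_y+n_u) + \ln(1/\delta))/(NL))$ to bound the population $L_2$ risk $\frac{1}{T}\sum_{t=1}^T \mathbb{E}[\nmm{\tilde{U}_t \vecc(G - G^*)}_2^2]$.

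The final step is purely algebraic and reuses the chain of Kronecker identities from the proof of Theorem \ref{thm:apdx_thm_p2_stat}: using $\tilde{U}_t^T \tilde{U}_t = \vecc(U)\vecc(U)^T \otimes E_t^T E_t$, the averaging identity $\frac{1}{T}\sum_t E_t^T E_t = I$, and the covariance identity $\mathbb{E}[\vecc(U)\vecc(U)^T] = I_{2(L+1)} \otimes \Sigma_U$, the population risk collapses to the weighted Frobenius norm $\nmm{(G(\diag(\{\hat{a}_j\}), \hat{B}, \hat{C}) - G^*)\tilde{\Sigma}_U^{1/2}}_F^2$, which is exactly the left-hand side of the claim. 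Substituting ${\sf dim}(\F_\theta^{\eqref{eq:P1}}) = \mathcal{O}(n_x(n_y + n_u))$ into the rate then completes the argument.

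I expect the main obstacle to be the verification of the Lipschitz hypothesis (item 1 of Assumption \ref{ass:hypo}). Unlike the \eqref{eq:P2} parametrization, where $G'(V,Z)$ is bilinear in the factors, here $G(\diag(\a), B, C)$ depends \emph{polynomially} on the spectrum through the powers $a^{i-j-1}$ entering $P(a)$, whose exponents range up to $2L$. A naive Lipschitz constant in $\a$ would therefore grow with these powers and with $L$, so the delicate point is to show that it stays bounded uniformly over the admissible parameter ball $\mathbb{B}(r)$. This is exactly the content of Assumption \ref{ass:local_lip}, and it is where the restriction to real, diagonalizable systems with bounded spectral radius $|a_j| \leq B_a$ becomes essential; without this control the rate could degrade and, when $A^*$ is not real diagonalizable, the term $\Omega^{\eqref{eq:P1}}(G^*)$ becomes infinite and the bound vacuous (cf. Statement 1 of Corollary \ref{crl:crl_p1_stat}).
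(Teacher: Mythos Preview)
Your proposal is correct and follows essentially the same route as the paper: reduce to Corollary \ref{crl:excess_risk_recovery} via the linearization $\y_t = \tilde{U}_t\vecc(G^*) + \zeta_t$, note that the only substantive change from the \eqref{eq:P2} argument is the dimension count ${\sf dim}(\F_\theta^{\eqref{eq:P1}}) = n_x(n_y + n_u + 1)$ (independent of $L$), and then reuse the Kronecker-product chain verbatim to convert the population risk into the weighted Frobenius norm. Your discussion of how Assumption \ref{ass:local_lip} furnishes the Lipschitz control needed for Assumption \ref{ass:hypo} is in fact more explicit than the paper's own proof, which simply states that the argument is ``very similar'' to \textsection\ref{sec:apdx_thm_p2_stat} with the modified dimension.
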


\begin{proof}
The proof technique is very similar to \textsection\ref{sec:apdx_thm_p2_stat}. With an exception that
the dimension of the parameters ${\sf dim}(\F_{\theta}^{\eqref{eq:P1}})$ with assumption 
\ref{ass:local_lip}
is
$n_x(n_y+n_x + 1)$. 
\end{proof}

\subsection{Proof of Corollary \ref{crl:crl_p1_stat}}\label{sec:apdx_crl_p1_stat}
\begin{corollaryA}
Under the conditions of Theorem \ref{thm:thm_p1_stat}, the following statements holds true,
\begin{enumerate}
    \item If $A^*$ is not real, diagonalizable then the upper bound of Equation \ref{eq:stat_p1} evaluates to $\infty$.
    \item Otherwise, the sample complexity for the recovery of $G^*$ with program \eqref{eq:P1} is
    $NL \gtrsim [n_x(n_y + n_u) + \ln(1/\delta)] \cdot {\sf polylog}(NL)$.
\end{enumerate}
\end{corollaryA}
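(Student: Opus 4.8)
The plan is to read both statements directly off the right-hand side of the error bound \eqref{eq:stat_p1}, whose only data-independent ingredient is the oracle regularizer $\Omega^{\eqref{eq:P1}}(G^*)$. Statement 1 amounts to showing this oracle term is infinite, while Statement 2 is the routine ``decay of the bound'' computation once it is finite.

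For Statement 1 I would unfold the definition \eqref{eq:opt_reg_P1}: $\Omega^{\eqref{eq:P1}}(G^*)$ is an infimum of $\sum_{j}\gamma(a_j)\nmm{\b_j}_2\nmm{\c_j}_2$ over admissible triples $(\a,B,C)\in\F_{\theta}^{\eqref{eq:P1}}$ subject to the exact-realization constraint $G(\diag(\a),B,C)=G^*$. Because program \eqref{eq:P1} optimizes only over the real spectrum $\a$ of a diagonal state matrix, every impulse response in the image of this parametrization is the response of a real, diagonalizable system, with modal (partial-fraction) expansion built from \emph{simple, real} poles $a_j$ and \emph{rank-one} residues $\c_j\b_j^{T}$. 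I would then argue that when $A^*$ is not real diagonalizable this constraint set is empty: a genuine Jordan block contributes a repeated pole (a residue attached to $(z-a)^{-2}$), and a complex-conjugate eigenpair contributes complex poles, neither of which can be produced by a superposition of simple real poles with rank-one residues. With an empty feasible set, the convention $\inf\emptyset=+\infty$ gives $\Omega^{\eqref{eq:P1}}(G^*)=+\infty$, so the factor $\sigma^{2}+\Omega^{\eqref{eq:P1}}(G^*)$ in \eqref{eq:stat_p1} is infinite and the bound is vacuous, which is exactly Statement 1.

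For Statement 2 I would assume $A^*$ is real diagonalizable, so that $\Omega^{\eqref{eq:P1}}(G^*)<\infty$ and \eqref{eq:stat_p1} is informative, and then mirror the proof of Corollary \ref{crl:crl_p2_stat}. Using ${\sf dim}(\F_{\theta}^{\eqref{eq:P1}})=n_x(n_y+n_u+1)$ (as computed in the proof of Theorem \ref{thm:thm_p1_stat}), the right-hand side of \eqref{eq:stat_p1} is, up to polylogarithmic factors, proportional to $[n_x(n_y+n_u)+\ln(1/\delta)]/(NL)$. Requiring this quantity to vanish asymptotically is equivalent to $NL\gtrsim[n_x(n_y+n_u)+\ln(1/\delta)]\cdot{\sf polylog}(NL)$, which is the claimed sample complexity; the key qualitative point is that, in contrast to Corollary \ref{crl:crl_p2_stat}, no additional factor of $L$ appears, giving the advertised $L$-fold improvement.

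The step I expect to be the main obstacle is the emptiness claim in Statement 1. The delicate issue is that over a \emph{finite} horizon only the first $2L+1$ Markov parameters are constrained, and a priori a non-diagonalizable system's finite Markov sequence could be interpolated by a higher-order real-diagonalizable realization (e.g.\ by Vandermonde interpolation with many real nodes). To rule this out rigorously one must argue at the level of the full modal structure rather than a finite prefix: either invoke uniqueness of the minimal realization together with the fact that sub-quotients of a real-diagonalizable operator are themselves real-diagonalizable, or show that any finite real-simple-pole interpolant forces residues or poles that cannot remain within the bounded class $\F_{\theta}^{\eqref{eq:P1}}$ while matching $G^*$, so that the admissible feasible set is genuinely empty and $\Omega^{\eqref{eq:P1}}(G^*)=+\infty$.
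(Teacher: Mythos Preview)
Your approach is essentially the same as the paper's. For Statement~1 the paper also unfolds the definition of $\Omega^{\eqref{eq:P1}}(G^*)$ and asserts that when $A^*$ is not real diagonalizable the constraint set $\{(n_x,\a,B,C)\in\F_{\theta}^{\eqref{eq:P1}}:G(\diag(\a),B,C)=G^*\}$ is empty, so the infimum over the empty set is $+\infty$; for Statement~2 the paper reads the sample-complexity condition directly off the decay of the right-hand side of \eqref{eq:stat_p1}, exactly as you do.

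The one substantive difference is that you go further than the paper does. The paper's proof of Statement~1 consists of the single sentence ``Suppose $A^*$ was not real, diagonalizable then there does not exist a $(n_x,\a,B,C)\in\F_{\theta}^{\eqref{eq:P1}}$ that can make $G(\diag(\a),B,C)=G^*$,'' offered without further justification. Your modal/partial-fraction argument already supplies more detail than the paper, and the finite-horizon interpolation issue you flag in your final paragraph is a genuine subtlety that the paper simply does not engage with: since $G^*$ encodes only the first $2L{+}1$ Markov parameters, a Vandermonde-style interpolant with many real nodes could in principle match them, and one must appeal to the boundedness constraints in $\F_{\theta}^{\eqref{eq:P1}}$ (or to minimal-realization uniqueness) to rule this out. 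So your proposal is not missing anything relative to the paper; if anything, you have identified a point the paper leaves implicit.
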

\begin{proof}

\textbf{Statement (i):} Recall the definition of the optimal regularizer,
\be
\Omega^{\eqref{eq:P1}}(\hat{G}) := \inf_{n_x, \a, B, C \in \F_{\theta}^{\eqref{eq:P1}}: G(\diag(\a), B, C) = \hat{G}}\Theta_{n_x}^{\eqref{eq:P1}}(\a, B, C, D),
\ee
Suppose $A^*$ was not real, diagonizable then there does not exist a 
$(n_x, \a, B, C) \in \F_{\theta}^{\eqref{eq:P1}}$ that can make $G(\diag(\a), B, C) = G^*$. This make the infimum infeasible
making $\Omega^{\eqref{eq:P1}}(G^*) = \infty$. Therefore, the right
side term of Equation \ref{eq:stat_p1} is infinity.

\textbf{Statement (ii):} From the Equation \ref{eq:stat_p1} we can observe that it is sufficient,
\be
NL \gtrsim [n_x(n_y + n_u) + \ln(1/\delta)] \cdot {\sf polylog}(NL)
\ee
to ensure the right side term decays asymptotically.
\end{proof}

\subsection{Proof of Theorem \ref{thm:apdx_timedep_risk}}\label{sec:apdx_timedep_risk_proof}

Before diving into the proof of Theorem \ref{thm:apdx_timedep_risk} let us 
state few key lemmas the will help construct our result.
First we link empirical recovery error with the noise term
and regularizer via Lemma \ref{lemma:opt_lemma}.
\begin{lemma}\label{lemma:opt_lemma}
Suppose $\hat{\theta} \in \F_{\theta}$ is the global minimizer of the Equation \eqref{eq:emp_obj}, then we have
\be\label{eq:basic_ineq}
\frac{1}{NT}\sum_{i=1}^{N}\sum_{t=1}^{T}\nmm{f_{\hat{\theta}}(\x_t^i) - f^*(\x_t^i)}_{2}^2  \leq
\frac{2}{NT}\sum_{i=1}^{N}\sum_{t=1}^{T}\IP{\v_t^i}{f_{\hat{\theta}}(\x_t^i) - f^*(\x_t^i)} + \l \left[\Omega(f^*) - \Omega(f_{\hat{\theta}})\right].
\ee
\end{lemma}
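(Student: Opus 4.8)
The plan is to derive this \emph{basic inequality} directly from the global optimality of $\hat{\theta}$, which is the standard opening move in any regularized least-squares analysis; the argument is purely algebraic once optimality is invoked. The only structural fact I need is that the target $f^*$ is a feasible competitor in the minimization \eqref{eq:emp_obj}, i.e.\ that there is some $\theta^* \in \F_\theta$ with $f_{\theta^*} = f^*$ and $\Omega(f_{\theta^*}) = \Omega(f^*)$. This is implicit in the statement, since $\Omega(f^*)$ is defined (cf.\ \eqref{eq:opt_reg_P2} and \eqref{eq:opt_reg_P1}) precisely as the infimal regularizer value over all parameterizations reproducing $f^*$; hence $\theta^*$ realizes $f^*$ while attaining that value and is a legitimate point in the feasible set.

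First I would use global minimality of $\hat\theta$ against the competitor $\theta^*$, which gives
\[
\frac{1}{NT}\sum_{i=1}^N\sum_{t=1}^T \nmm{\y_t^i - f_{\hat\theta}(\x_t^i)}_2^2 + \l\,\Omega(f_{\hat\theta}) \;\le\; \frac{1}{NT}\sum_{i=1}^N\sum_{t=1}^T \nmm{\y_t^i - f^*(\x_t^i)}_2^2 + \l\,\Omega(f^*).
\]
Next I would substitute the data model $\y_t^i = f^*(\x_t^i) + \v_t^i$ on both sides. On the right-hand side the residual collapses to $\nmm{\v_t^i}_2^2$, while on the left I would expand the square as
\[
\nmm{\v_t^i - \big(f_{\hat\theta}(\x_t^i)-f^*(\x_t^i)\big)}_2^2 = \nmm{\v_t^i}_2^2 - 2\IP{\v_t^i}{f_{\hat\theta}(\x_t^i)-f^*(\x_t^i)} + \nmm{f_{\hat\theta}(\x_t^i)-f^*(\x_t^i)}_2^2.
\]

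The crucial cancellation is that the noise-energy terms $\frac{1}{NT}\sum_{i,t}\nmm{\v_t^i}_2^2$ appear identically on both sides and drop out. Moving the cross term to the right and collecting the two regularizers then yields exactly the claimed inequality. There is essentially no analytic obstacle here, since the lemma is a deterministic consequence of optimality; the one point deserving care is the feasibility of $\theta^*$ as a competitor. If one wished to relax exact realizability to approximate realizability, the comparison would instead be made against a near-minimizer of $\Omega$ subject to $f_{\theta^*}\approx f^*$, at the cost of carrying an extra approximation-error term, but under the stated assumptions (and the convention $\Omega(f^*)=\infty$ when $f^*$ is not realizable) this refinement is unnecessary.
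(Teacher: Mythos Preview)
Your proposal is correct and follows essentially the same approach as the paper: invoke global optimality of $\hat\theta$ against the competitor realizing $f^*$, substitute $\y_t^i = f^*(\x_t^i)+\v_t^i$, expand the squares, cancel the common $\nmm{\v_t^i}_2^2$ terms, and rearrange. Your explicit remark about the feasibility of $\theta^*$ (and the convention $\Omega(f^*)=\infty$ in the non-realizable case) is a welcome clarification that the paper's proof leaves implicit.
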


\begin{proof}
By the definition of the global minimizer of ERM, we have that,
\be
\frac{1}{NT}\sum_{i=1}^{N}\sum_{t=1}^{T}\nmm{\y_t^i - f_{\hat{\theta}}(\x_t^i)}_2^2 + \l \Omega(f_{\hat{\theta}})
\leq \frac{1}{NT}\sum_{i=1}^{N}\sum_{t=1}^{T}\nmm{\y_t^i - f^*(\x_t^i)}_2^2 + \l \Omega(f^*).
\ee
From data the data generating mechanism we have that,
\be
\begin{split}
\frac{1}{NT}&\sum_{i=1}^{N}\sum_{t=1}^{T}\nmm{f^*(\x_t^i) + \v_t^i - f_{\hat{\theta}}(\x_t^i)}_2^2 + \l \Omega(f_{\hat{\theta}})\\
&\leq \frac{1}{NT}\sum_{i=1}^{N}\sum_{t=1}^{T}\nmm{f^*(\x_t^i) + \v_t^i - f^*(\x_t^i)}_2^2 + \l \Omega(f^*),
\end{split}
\ee
upon simplifying it further we have,
\be
\begin{split}
\frac{1}{NT}\sum_{i=1}^{N}\sum_{t=1}^{T}\nmm{f^*(\x_t^i) - f_{\hat{\theta}}(\x_t^i)}_2^2 &+ \nmm{\v_t^i}_2^2 + 2\IP{\v_t^i}{f^*(\x_t^i) - f_{\hat{\theta}}(\x_t^i)}\\
&\leq \frac{1}{NT}\sum_{i=1}^{N}\sum_{t=1}^{T}\nmm{\v_t^i}_2^2 + \l \left[\Omega(f^*) - \Omega(f_{\hat{\theta}})\right],
\end{split}
\ee
which evaluates to,
\be
\begin{split}
\frac{1}{NT}\sum_{i=1}^{N}\sum_{t=1}^{T}&\nmm{f^*(\x_t^i) - f_{\hat{\theta}}(\x_t^i)}_2^2 \\
&\leq \frac{2}{NT}\sum_{i=1}^{N}\sum_{t=1}^{T}\IP{\v_t^i}{f_{\hat{\theta}}(\x_t^i)-f^*(\x_t^i)}
+ \l \left[\Omega(f^*) - \Omega(f_{\hat{\theta}})\right].
\end{split}
\ee
This concludes the proof.
\end{proof}
Next we link empirical recovery loss with population loss
for a fixed hypothesis class. We will directly apply
the result from \cite{ziemann_tutorial_2024}.
\begin{lemma}\label{lemma:cond_fixed}\cite[Lemma F.2]{ziemann_tutorial_2024}
Let $\x_{t}^{i}$ be $N$ iid samples. Define
\be
{\mathsf cond}_{\F_{\theta}} := \max_{\theta \in \F_{\theta}}
\max_{t \in [T]}\frac{\sqrt{\mathbb{E}\left[\nmm{f_{\theta}(\x_t)}_2^4\right]}}{\mathbb{E}\left[\nmm{f_{\theta}(\x_t)}_2^2\right]}.
\ee
For any $\theta \in \F_{\theta}$ and $u \geq 0$ we have 
\be
\mathbb{P}\left(\sum_{i=1}^{N}\sum_{t=1}^{T}\nmm{f_{\theta}(\x_t^i)}_2^2
< \frac{1}{2}\sum_{i=1}^{N}\sum_{t=1}^{T}\mathbb{E}\left[\nmm{f_{\theta}(\x_t^i)}_2^2\right]\right)
\leq \exp\left(\frac{-N}{4{\mathsf{cond}}_{\F_{\theta}}^2}\right).
\ee
\end{lemma}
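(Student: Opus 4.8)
The plan is to reduce the doubly-indexed sum to a sum of i.i.d. nonnegative scalars indexed by the rollout, and then to run a one-sided (lower-tail) Chernoff argument. First I would set, for each rollout $i \in [N]$, the scalar $S_i := \sum_{t=1}^{T}\nmm{f_\theta(\x_t^i)}_2^2 \ge 0$, so the quantity of interest is $\sum_{i=1}^{N} S_i$, its mean is $N\mu$ with $\mu := \sum_{t=1}^{T}\mathbb{E}[\nmm{f_\theta(\x_t)}_2^2] = \mathbb{E}[S_1]$, and the target event is $\{\sum_i S_i < \tfrac12 N\mu\}$. Because the rollouts are i.i.d., the $S_i$ are i.i.d.; this is the \emph{only} independence the argument uses, and in particular I would deliberately avoid assuming anything about independence across the time index $t$.

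The key preliminary estimate is a fourth-to-second-moment equivalence lifted to the rollout level, namely $\mathbb{E}[S_i^2] \le {\sf cond}_{\F_\theta}^2\,\mu^2$. I would obtain it by expanding $\mathbb{E}[S_i^2] = \sum_{t,s}\mathbb{E}[\nmm{f_\theta(\x_t)}_2^2\,\nmm{f_\theta(\x_s)}_2^2]$, bounding each cross term by Cauchy--Schwarz as $\mathbb{E}[\nmm{f_\theta(\x_t)}_2^2\,\nmm{f_\theta(\x_s)}_2^2] \le \sqrt{\mathbb{E}[\nmm{f_\theta(\x_t)}_2^4]}\sqrt{\mathbb{E}[\nmm{f_\theta(\x_s)}_2^4]}$, and then invoking the definition of ${\sf cond}_{\F_\theta}$ to replace each $\sqrt{\mathbb{E}[\nmm{f_\theta(\x_t)}_2^4]}$ by ${\sf cond}_{\F_\theta}\,\mathbb{E}[\nmm{f_\theta(\x_t)}_2^2]$. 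Summing over $t,s$ then gives ${\sf cond}_{\F_\theta}^2\big(\sum_t\mathbb{E}[\nmm{f_\theta(\x_t)}_2^2]\big)^2 = {\sf cond}_{\F_\theta}^2\,\mu^2$. The content of this step is that the per-step $L^4$--$L^2$ control transfers to the rollout level with no mixing or independence-in-time hypothesis.

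Next I would run the lower-tail Chernoff bound. For any $\lambda>0$, Markov's inequality applied to $e^{-\lambda\sum_i S_i}$ together with independence of the $S_i$ gives
\[
\mathbb{P}\Big(\textstyle\sum_{i=1}^{N} S_i < \tfrac12 N\mu\Big) \;\le\; e^{\lambda N\mu/2}\big(\mathbb{E}[e^{-\lambda S_1}]\big)^N .
\]
To control the single-rollout moment generating function I would use $e^{-x}\le 1-x+\tfrac12 x^2$ for $x\ge 0$ with the moment bounds above, yielding $\mathbb{E}[e^{-\lambda S_1}]\le 1-\lambda\mu+\tfrac12\lambda^2{\sf cond}_{\F_\theta}^2\mu^2 \le \exp\!\big(-\lambda\mu+\tfrac12\lambda^2{\sf cond}_{\F_\theta}^2\mu^2\big)$. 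Writing $a:=\lambda\mu$, the base is at most $\exp\!\big(-\tfrac{a}{2}+\tfrac12{\sf cond}_{\F_\theta}^2 a^2\big)$; choosing $a=\Theta(1/{\sf cond}_{\F_\theta}^2)$ makes it at most $\exp(-c/{\sf cond}_{\F_\theta}^2)$ for an absolute constant $c$, and raising to the $N$-th power produces the stated exponential bound.

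The main obstacle is not the concentration mechanics but the moment-transfer step: the whole point is that temporal correlations within a rollout do no harm, so that the exponential rate is governed by the number of independent rollouts $N$ and the per-step condition number, with the within-rollout dependence absorbed harmlessly through Cauchy--Schwarz. A secondary issue is pinning the exact constant in the exponent; the crude inequality $e^{-x}\le 1-x+\tfrac12 x^2$ delivers a constant of the correct order but slightly weaker than the stated $1/4$, and matching it exactly requires either a sharper one-sample bound or the precise optimization performed in the source. Since the result is quoted verbatim as Lemma~F.2 of \cite{ziemann_tutorial_2024}, I would ultimately defer the optimal constant to that reference while supplying the self-contained argument above (the parameter $u$ appearing in the statement is vestigial and plays no role here).
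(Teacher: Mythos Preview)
The paper does not supply its own proof of this lemma; it simply imports it verbatim as Lemma~F.2 of \cite{ziemann_tutorial_2024}. Your self-contained argument is correct: the key step---lifting the per-time-step $L^4$--$L^2$ control to the rollout level via Cauchy--Schwarz to obtain $\mathbb{E}[S_i^2]\le{\sf cond}_{\F_\theta}^2\,\mu^2$ without any temporal independence---is exactly the right idea, and the subsequent lower-tail Chernoff computation is standard. As you already note, the crude bound $e^{-x}\le 1-x+\tfrac12 x^2$ and the optimal choice $a=\lambda\mu=1/(2\,{\sf cond}_{\F_\theta}^2)$ yield $\exp(-N/(8\,{\sf cond}_{\F_\theta}^2))$ rather than the stated $1/4$; this factor-of-two slack is immaterial for every downstream use in the paper (the condition $N/\ln(NT)\gtrsim{\sf cond}^2\cdot[\cdots]$ absorbs constants), and matching the sharper constant would require the exact optimization carried out in the cited reference.
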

Lemma~\ref{lemma:cond_fixed} applies to a fixed function. 
Although~\cite{ziemann_tutorial_2024} provides a uniform bound, it 
is restricted to finite hypothesis classes. The work 
of~\cite{ziemann-et-al-icml24} extends this to infinite classes but 
focuses on Orlicz spaces and non-realizable settings. In contrast, 
our analysis is simpler; we establish a uniform bound for sub-Gaussian random variables, 
directly linking empirical and population 
excess risk.
\begin{lemma}\label{lemma:cond}
Let $\x_{t}^{i}$ be $N$ iid samples such
that for a convex set $\C_x \subseteq {\R^{n_x}}^{\otimes T}$ it holds 
true that $\mathbb{P}\left(\cap_{i\in[N]}\x_t^i\right) \leq \delta_{\C_x}$. 
Define
\be
{\mathsf cond}_{\F_{\theta}} := \max_{\theta \in \F_{\theta}}
\max_{t \in [T]}\frac{\sqrt{\mathbb{E}\left[\nmm{f_{\theta}(\x_t)}_2^4\right]}}{\mathbb{E}\left[\nmm{f_{\theta}(\x_t)}_2^2\right]},
\ee
and for a fixed $f^*: \R^{n_x} \to \R^{n_y}$
\be
D_2(\C_x) := \sup_{\theta \in \F_{\theta}}
\left|\nmm{f_{\theta} - f^*}_{L_2}^2 -
\nmm{f_{\theta}\circ \P_{\C_x} - f^* \circ \P_{\C_x}}_{L_2}^2\right|.
\ee
If for all $\x \in \C_{x}$ we have
that for some metric $d$ in $\F_{\theta}$
\be
\forall \theta, \theta' \in \F_{\theta}:
\nmm{f_{\theta}(\x) - f_{\theta'}(\x)} \leq 
K_{\C_x}d(\theta, \theta').
\ee
and $\nmm{f_{\theta}(\z)-f^*(\z)}_2 \leq B_{\C_x}$.
Then for any $u \geq 0$ we have that
\be
\begin{split}
&\mathbb{P}\left(\frac{1}{NT}\sum_{i=1}^{N}\sum_{t=1}^{T}\nmm{f_{\theta}(\x_t)}_2^2
- \frac{1}{2T}\sum_{t=1}^{T}\mathbb{E}\left[\nmm{f_{\theta}(\x_t)}_2^2\right] \leq u + D_2(\C)\right)\\
&\leq \exp\left(\ln(\N(\F_{\theta}, d, u/4K_{\C_x}B_{\C_x}))-\frac{N}{4{\sf cond}_{\F_{\theta}}^2}\right).
\end{split}
\ee
\end{lemma}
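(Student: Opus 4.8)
The plan is to lift the fixed-function small-ball estimate of Lemma~\ref{lemma:cond_fixed} to a bound that holds uniformly over $\F_{\theta}$, using a covering argument in the metric $d$, while the projection $\P_{\C_x}$ onto the convex set $\C_x$ serves to reduce everything to bounded, Lipschitz functions. First I would condition on the ``good'' event $\cap_{i}\{\x_t^i \in \C_x\}$, whose complement has probability at most $\delta_{\C_x}$; on this event the samples are unaffected by the projection, so the empirical average of $\nmm{f_{\theta}(\x_t^i)}_2^2$ coincides with that of $\nmm{(f_{\theta}\circ\P_{\C_x})(\x_t^i)}_2^2$. The only price paid is at the population level, and it is exactly the quantity $D_2(\C_x)$, which by definition uniformly controls the gap between $\nmm{f_{\theta}-f^*}_{L_2}^2$ and its projected counterpart. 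After this step I may work throughout with $f_{\theta}\circ\P_{\C_x}$, which inherits the Lipschitz constant $K_{\C_x}$ and the uniform bound $B_{\C_x}$ postulated on $\C_x$.

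Next I would fix a minimal $\epsilon$-net $\{\theta_k\}$ of $\F_{\theta}$ in the metric $d$ at radius $\epsilon = u/(4K_{\C_x}B_{\C_x})$, so that its cardinality is $\N(\F_{\theta}, d, \epsilon)$. Applying Lemma~\ref{lemma:cond_fixed} to each net point and taking a union bound shows that, outside an event of probability at most $\N(\F_{\theta}, d, \epsilon)\exp(-N/(4{\sf cond}_{\F_{\theta}}^2))$, every net point $\theta_k$ obeys $\tfrac{1}{NT}\sum_{i,t}\nmm{f_{\theta_k}(\x_t^i)}_2^2 \geq \tfrac{1}{2T}\sum_t \mathbb{E}[\nmm{f_{\theta_k}(\x_t)}_2^2]$. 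This single union bound is the source of both exponents in the claimed estimate.

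To pass from the net to an arbitrary $\theta$, I would control the discretization error in the squared norms through the identity $\nmm{f_{\theta}(\x)}_2^2 - \nmm{f_{\theta_k}(\x)}_2^2 = \IP{f_{\theta}(\x)-f_{\theta_k}(\x)}{f_{\theta}(\x)+f_{\theta_k}(\x)}$. Cauchy--Schwarz, the Lipschitz bound $\nmm{f_{\theta}(\x)-f_{\theta_k}(\x)}\leq K_{\C_x}d(\theta,\theta_k)$, and the triangle-inequality consequence $\nmm{f_{\theta}(\x)+f_{\theta_k}(\x)}\leq 2B_{\C_x}$ then give a pointwise deviation of at most $2K_{\C_x}B_{\C_x}\,\epsilon$. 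Averaging over $(i,t)$ and repeating the same estimate under the expectation on the population side, the two discretization errors together absorb into the single slack $u$, which is precisely why the net radius carries the factor $4K_{\C_x}B_{\C_x}$. Combining the net bound, the Lipschitz transfer on both empirical and population averages, and the projection correction $D_2(\C_x)$ yields the stated inequality.

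The main obstacle I anticipate is the bookkeeping of the two simultaneous approximations: the restriction to $\C_x$ (tracked by $D_2(\C_x)$ and the tail $\delta_{\C_x}$) must be interleaved with the covering argument so that the boundedness and Lipschitz hypotheses, assumed only on $\C_x$, are genuinely in force at every point where they are invoked. The delicate part is ensuring the empirical and population discretization errors each collapse into the one slack $u$ without inflating the exponent $N/(4{\sf cond}_{\F_{\theta}}^2)$; once the constants are aligned, the remainder is a routine net-plus-union-bound argument built on Lemma~\ref{lemma:cond_fixed}.
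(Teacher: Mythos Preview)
Your proposal is correct and follows essentially the same approach as the paper's proof: both arguments reduce to the convex set $\C_x$ via projection (incurring the $D_2(\C_x)$ correction), bound the Lipschitz constant of the squared-norm functional in $\theta$ by $4K_{\C_x}B_{\C_x}$ using the polarization identity and the boundedness on $\C_x$, cover $\F_\theta$ at the resulting radius, and then invoke Lemma~\ref{lemma:cond_fixed} on each net point together with a union bound. The only cosmetic difference is that the paper packages the empirical and population terms into a single function $g_\theta$ and bounds its Lipschitz constant in one shot, whereas you treat the two sides separately; the constants and the exponent $N/(4\,{\sf cond}_{\F_\theta}^2)$ line up either way.
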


\begin{proof}
We proceed with convex projection analysis similar to 
\cite{tadipatri-et-al-iclr25}. Denote 
$Z_i = \begin{bmatrix}\z_1^i & \cdots & \z_T^i \end{bmatrix}$ and define
\be
g_{\theta}([Z_i]_{i \in [N]})
:= 
-\frac{1}{NT}\sum_{i=1}^{N}\sum_{t=1}^{T}\nmm{f_{\theta}(\z_t)-f^*(\z_t)}_2^2
+\frac{1}{2T}\sum_{t=1}^{T}\mathbb{E}\left[\nmm{f_{\theta}(\z_t)-f^*(\z_t)}_2^2\right].
\ee
Now we estimate the Lipschtiz constant $g_{\theta}$ 
in $\F_{\theta}$.
\be
\begin{split}
g_{\theta}([Z_i]_{i \in [N]}) - g_{\theta'}([Z_i]_{i \in [N]}) 
&= -\frac{1}{NT}\sum_{i=1}^{N}\sum_{t=1}^{T}\nmm{f_{\theta}(\z_t)-f^*(\z_t)}_2^2-\nmm{f_{\theta'}(\z_t)-f^*(\z_t)}_2^2\\
&+\frac{1}{2T}\sum_{t=1}^{T}\mathbb{E}\left[\nmm{f_{\theta}(\z_t)-f^*(\z_t)}_2^2-\nmm{f_{\theta'}(\z_t)-f^*(\z_t)}_2^2\right].
\end{split}
\ee
From the problem setting we obtain that
\be
\begin{split}
|g_{\theta}([Z_i]_{i \in [N]}) - g_{\theta'}([Z_i]_{i \in [N]})| 
&\leq 4K_{\C}B_{\C}d(\theta, \theta').
\end{split}
\ee
In the similar analysis as Lemma \ref{lemma:sup_hypo} we
obtain for any $u \geq 0$
\be
\begin{split}
\mathbb{P}\left(\sup_{\theta \in \F_{\theta}}g_{\theta}([X_i]_{i\in[N]}) > u + D_2(\C) \right)
\leq \N(\F_{\theta}, d, u/2K'(\C)) \cdot \mathbb{P}\left(g_{\theta'}(\P_{\C}([X_i]_{i\in[N]})) > 0 \right).
\end{split}
\ee
From Lemma \ref{lemma:cond} we obtain
\be
\begin{split}
\mathbb{P}\left(\sup_{\theta \in \F_{\theta}}g_{\theta}([X_i]_{i\in[N]}) > u + D_2(\C) \right)
\leq \N(\F_{\theta}, d, u/4K_{\C_x}B_{\C_x}) \exp\left(-\frac{N}{4{\sf cond}_{\F_{\theta}}^2}\right).
\end{split}
\ee
\end{proof}
Up to this point we have linked empirical and population
excess risk uniformly. Now we prove the uniform concentration
of residual error in Equation \eqref{eq:basic_ineq}.
\begin{lemma}\label{lemma:sup_hypo}
Suppose $\v_t|\x_{1:t}$ is sub-gaussian vector with proxy variance $\sigma^2I$
in $\R^{n_y}$, where $\x_t \in \R^{n_x}$. 
Consider the parameter class $\F_{\theta}$,
and corresponding function $f_{\theta}: \R^{n_x} \to \R^{n_y}$
for all $\theta \in \F_{\theta}$. Let $f^*$ be some fixed realizable function. 
Suppose we have access to $N$ iid samples $(\x_{t}^i, \v_{t}^i)$ such that there is a convex set
$\C \subseteq {\R^{n_x}}^{\otimes T} \times {\R^{n_y}}^{\otimes T}$ which satisfies $P(\cap_{i\in[N]}\cap_{t\in [T]}(\x_t^i, \v_t^i) \notin \C) \leq \delta_{\C}$.
Define a real-valued function
\be
M(f, [(\x_t^i, \v_t^i)_{t\in[T]}]_{i \in [N]}) := \frac{1}{NT}\sum_{i=1}^{N}\sum_{t=1}^{T}\IP{f(\x_t^i)}{\v_t^i-f(\x_t^i)},
\ee
projection gap
\be
D(\C) := \sup_{\theta \in \F_{\theta}}\left|\mathbb{E}\left[M(f_{\theta}-f^*, [(\x_t^i, \v_t^i)_{t\in[T]}]_{i \in [N]})-M(f_{\theta}-f^*, \P_{\C}([(\x_t^i, \v_t^i)_{t\in[T]}]_{i \in [N]}))\right]\right|.
\ee
If for all $((\x_t, \v_t)_{t\in [T]}) \in \C$ for some metric $d$ defined in $\F_{\theta}$ it holds true that
\be
\forall \theta, \theta' \in \F_{\theta}; \nmm{f_{\theta}(\x_t) - f_{\theta'}(\x_t)} \leq K_{\C}d(\theta, \theta'),
\ee
and
\be
\max\{\nmm{f_{\theta}(\x_t) - f^*(\x_t)}_2, \nmm{\v_t}_2\} \leq B_{\C}.
\ee
Then we have that
\be
\begin{split}
&\mathbb{P}\left(\sup_{\theta \in \F_{\theta}}M(f_{\theta}-f^*[(\x_t^i, \v_t^i)_{t\in[T]}]_{i \in [N]}) > u + D(\C)\right)\\
&\leq \delta_{\C} + \exp\left(\ln\left(\N\left(\F_{\theta}, d, \frac{u}{6K_{\C}B_{\C}}\right)\right) - \frac{NT}{\sigma^2}u\right).
\end{split}
\ee
\end{lemma}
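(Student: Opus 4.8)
The plan is to combine a convex-projection localization with an offset martingale tail bound and a covering argument, mirroring the companion Lemma~\ref{lemma:cond}. Write $g_\theta := f_\theta - f^*$, so that $M(f_\theta - f^*, \cdot) = \frac{1}{NT}\sum_{i,t}\big(\IP{g_\theta(\x_t^i)}{\v_t^i} - \nmm{g_\theta(\x_t^i)}_2^2\big)$; the crucial structural feature is the $-\nmm{g_\theta}_2^2$ offset term, which will cancel the sub-Gaussian variance proxy later. First I would dispose of the unbounded tails of the data: on the event $\{\cap_{i,t}(\x_t^i,\v_t^i)\in\C\}$, which fails with probability at most $\delta_\C$, the projection $\P_\C$ acts as the identity, so $M(g_\theta,[\cdot]) = M(g_\theta,\P_\C[\cdot])$ for every $\theta$ simultaneously. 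This replaces the supremum of the original process by that of the projected process at the cost of the additive $\delta_\C$, and the projected data now lie in the bounded convex set $\C$, where the stated Lipschitz and boundedness hypotheses hold uniformly.

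Second, for a single fixed $\theta$ I would establish the exponential tail via the offset trick. Conditioning on $\x_{1:t}^i$, the scalar $\IP{g_\theta(\x_t^i)}{\v_t^i}$ is $\sigma^2\nmm{g_\theta(\x_t^i)}_2^2$-sub-Gaussian while $g_\theta(\x_t^i)$ is measurable. A Chernoff bound with parameter $\lambda$, using independence across rollouts and the tower property over the filtration generated by the inputs within each rollout, gives $\mathbb{E}\exp\big(\lambda\sum_{i,t}(\IP{g_\theta(\x_t^i)}{\v_t^i}-\nmm{g_\theta(\x_t^i)}_2^2)\big) \leq \mathbb{E}\exp\big(\sum_{i,t}(\lambda^2\sigma^2/2 - \lambda)\nmm{g_\theta(\x_t^i)}_2^2\big)$. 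Choosing $\lambda = 1/\sigma^2$ makes the exponent nonpositive, so the moment generating factor is at most one and $\mathbb{P}(M(g_\theta,\cdot) > u) \leq \exp(-NTu/\sigma^2)$, independently of $\nmm{g_\theta}_2$. This is precisely where the $-\nmm{g_\theta}_2^2$ offset does the work.

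Third, I would upgrade to a uniform bound by covering $\F_\theta$. A short computation gives $M(g_\theta,\cdot) - M(g_{\theta'},\cdot) = \frac{1}{NT}\sum_{i,t}\IP{(f_\theta-f_{\theta'})(\x_t^i)}{\v_t^i - g_\theta(\x_t^i) - g_{\theta'}(\x_t^i)}$; on $\C$ the first factor is at most $K_\C d(\theta,\theta')$ and the second at most $3B_\C$, so $\theta\mapsto M(g_\theta,\P_\C[\cdot])$ is $3K_\C B_\C$-Lipschitz. Taking a $d$-net of $\F_\theta$ at radius $u/(6K_\C B_\C)$ controls the fluctuation across the net, a union bound over the $\N(\F_\theta, d, u/(6K_\C B_\C))$ net points invokes the single-$\theta$ tail, and the projection gap $D(\C)$ absorbs the discrepancy between the mean of the projected process and that of the martingale-friendly original. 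Collecting terms yields the claimed $\delta_\C + \exp\big(\ln\N(\F_\theta,d,u/(6K_\C B_\C)) - NTu/\sigma^2\big)$, modulo the routine constant accounting in splitting the budget $u$ between the net radius and the tail exponent.

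The main obstacle is the tension between the two devices. The covering argument needs the uniform boundedness that only the projected data $\P_\C[\cdot]$ supply, whereas the clean offset MGF bound needs the conditional sub-Gaussian (martingale) structure of $\v_t^i \mid \x_{1:t}^i$, which the projection onto $\C$ — coupling all samples and time steps — generally destroys. Reconciling them by running the net on the projected process while transferring the tail back to the original process through the bias term $D(\C)$ is the delicate step; correctly tracking the conditional vector-sub-Gaussian MGF and the offset cancellation, and showing $\P_\C$ is a genuine identity on the good event, are the points requiring the most care.
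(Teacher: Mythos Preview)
Your proposal is correct and follows essentially the same route as the paper: project onto $\C$ to obtain a uniformly bounded, $3K_\C B_\C$-Lipschitz process (the paper computes the same constant), cover $\F_\theta$ at radius $u/(6K_\C B_\C)$, and for each fixed net point apply the offset Cram\'er--Chernoff/martingale bound (the paper chooses the equivalent of $\lambda = 2/\sigma^2$ so the exponent is exactly zero, then rescales $u\mapsto u/2$, landing on the same $\exp(-NTu/\sigma^2)$). Your discussion of the tension between the projection step and the martingale structure, and its resolution via $D(\C)$, matches precisely the delicate passage in the paper's argument.
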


\begin{proof}
The proof is similar to Proposition F.2 in \cite{ziemann_tutorial_2024}. However, we follow the style of 
\cite{tadipatri-et-al-iclr25} proof analysis via projecting onto convex sets.

Denote $Z_i = \begin{bmatrix} \z_1^i & \cdots & \z_T^i \end{bmatrix}$, and $W_i = \begin{bmatrix} \w_1^i & \cdots & \w_T^i \end{bmatrix}$. Define, 
\be
g_{\theta}([Z_i, W_i]_{i\in[N]}) = \frac{1}{NT}\sum_{t=1}^{T}\IP{f_{\theta}(\z_t^i) - f^*(\z_t^i)}{\w_t^i - \left[f_{\theta}(\z_t^i) - f^*(\z_t^i)\right]}.
\ee
We begin to estimate the local Lipschitz constant of $g_{\theta}$ in $\C$. First
choose any $[Z, W] \in \C$ since we have,
\be
\forall \theta, \theta' \in \F_{\theta}; \nmm{f_{\theta}(\z_t) - f_{\theta'}(\z_t)} \leq K_{\C}d(\theta, \theta').
\ee
Then for any $[Z_i, W_i] \in \C$ we have,
\begin{eqnarray}
\left|{g_{\theta}([Z_i, W_i]_{i\in[N]}) - g_{\theta'}([Z_i, W_i]_{i\in[N]})}\right| &=&\hspace{-8pt}
\big|\frac{1}{NT}\sum_{i=1}^{N}\sum_{t=1}^{T}\IP{f_{\theta}(\z_t^i) - f^*(\z_t^i)}{\w_t^i - \left[f_{\theta}(\z_t^i) - f^*(\z_t^i)\right]} \nonumber \\
&-&\hspace{-8pt} \IP{f_{\theta'}(\z_t^i) - f^*(\z_t^i)}{\w_t^i - \left[f_{\theta'}(\z_t^i) - f^*(\z_t^i)\right]} \big|,\\
&=&\hspace{-8pt} \big|\frac{1}{NT}\sum_{i=1}^{N}\sum_{t=1}^{T}\IP{f_{\theta}(\z_t^{i}) - f_{\theta'}(\z_t^{i})}{\w_t^i - \left[f_{\theta}(\z_t^{i}) - f^*(\z_t^{i})\right]} \nonumber \\
&+&\hspace{-8pt} \IP{f_{\theta'}(\z_t^{i}) - f^*(\z_t^i)}{f_{\theta'}(\z_t^i) - f_{\theta}(\z_t^i)} \big|,
\end{eqnarray}
First we apply triangular inequality and then Cauchy-Schwartz inequality then obtain,
\be
\begin{split}
\left|{g_{\theta}([Z_i, W_i]_{i\in[N]}) - g_{\theta'}([Z_i, W_i]_{i\in[N]})}\right| \leq
\frac{1}{NT}\sum_{i=1}^{N}\sum_{t=1}^{T}&\nmm{f_{\theta}(\z_t^{i}) - f_{\theta'}(\z_t^{i})} \times \Big[\nmm{f_{\theta'}(\z_t^{i}) - f^*(\z_t^{i})}\\
& + \nmm{\w_t^{i} - \left[f_{\theta}(\z_t^{i}) - f^*(\z_t^{i})\right]}\Big].
\end{split}
\ee
Since $\forall [Z, W] \in \C$, and $\theta \in \F_{\theta}$ it is the case that
$\max\{\nmm{f_{\theta}(\z_t) - f^*(\z_t)}, \nmm{\w_t}\} \leq B_{\C}$, via this assumption we obtain,
\be
\left|{g_{\theta}([Z_i, W_i]_{i\in[N]}) - g_{\theta'}([Z_i, W_i]_{i\in[N]})}\right| \leq \underbrace{3K_{\C}B_{\C}}_{=:K'(\C)}d(\theta, \theta').
\ee
Suppose that, $\theta^*$ is the maximizer of $g_{\theta}([Z_i, W_i]_{i\in[N]})$ then $\exists \theta_0 \in \N_{\epsilon}(\F_{\theta}, d)$
such that $d(\theta^*, \theta_0) \leq \epsilon$ therefore we have,
\be
-K'(\C)\epsilon \leq \sup_{\theta \in \F_{\theta}}g_{\theta}([Z_i, W_i]_{i\in[N]}) - g_{\theta_0}([Z_i, W_i]_{i\in[N]}) \leq K'(\C)\epsilon.
\ee
By the monotonicity of the probability measure we have,
\be
\mathbb{P}\left(g_{\theta_0}([Z_i, W_i]_{i\in[N]}) + K'(\C)\epsilon > u \right) \geq 
\mathbb{P}\left(\sup_{\theta \in \F_{\theta}}g_{\theta}([Z_i, W_i]_{i\in[N]}) > u \right),
\ee
further more by applying supremum over all the $\epsilon$-net then we have,
\be
\mathbb{P}\left(\sup_{\theta_0 \in \N_{\epsilon}(\F_{\theta}, d)}g_{\theta_0}([Z_i, W_i]_{i\in[N]}) > u - K'(\C)\epsilon \right) \geq 
\mathbb{P}\left(\sup_{\theta \in \F_{\theta}}g_{\theta}([Z_i, W_i]_{i\in[N]}) > u \right).
\ee
Now set $\epsilon = u/2K'(\C)$ then we have,
\be
\mathbb{P}\left(\sup_{\theta_0 \in \N_{u/2K'(\C)}(\F_{\theta}, d)}g_{\theta_0}([Z_i, W_i]_{i\in[N]}) > u/2 \right) \geq 
\mathbb{P}\left(\sup_{\theta \in \F_{\theta}}g_{\theta}([Z_i, W_i]_{i\in[N]}) > u \right),
\ee
by sub-additivity of probability measure we have,
\be
\sum_{\theta_0 \in \N_{u/2K'(\C)}(\F_{\theta}, d)}\mathbb{P}\left(g_{\theta_0}([Z_i, W_i]_{i\in[N]}) > u/2 \right) \geq 
\mathbb{P}\left(\sup_{\theta \in \F_{\theta}}g_{\theta}([Z_i, W_i]_{i\in[N]}) > u \right).
\ee
Now choose a fixed $\theta' \in \N_{u/2K'(\C)}(\F_{\theta}, d)$ we have
\be\label{eq:c1_1}
\mathbb{P}\left(\sup_{\theta \in \F_{\theta}}g_{\theta}([Z_i, W_i]_{i\in[N]}) > u \right)
\leq \N(\F_{\theta}, d, u/2K'(\C)) \cdot \mathbb{P}\left(g_{\theta'}([Z_i, W_i]_{i\in[N]}) > u/2 \right).
\ee

Observe that for $[X_i, V_i]_{i\in[N]}$ (not necessarily in the convex set $\C$) we have that

\be
\begin{split}
\mathbb{P}\big(&g_{\theta}([X_i, V_i]_{i\in[N]}) + \mathbb{E}\left[g_{\theta}([X_i, V_i]_{i\in[N]})\right]
= g_{\theta}(\P_{\C}([X_i, V_i]_{i\in[N]})) + \mathbb{E}\left[g_{\theta}(\P_{\C}([X_i, V_i]_{i\in[N]}))\right]\big) \\
&\geq 1 - \delta_{\C}.
\end{split}
\ee
By our assumptions we have that
\be
\mathbb{E}\left[g_{\theta}(\P_{\C}([X_i, V_i]_{i\in[N]}))\right] - 
\mathbb{E}\left[g_{\theta}([X_i, V_i]_{i\in[N]})\right] \leq B_{r}.
\ee
Therefore we have
\be\label{eq:c1_2}
\begin{split}
\mathbb{P}\big(g_{\theta}([X_i, V_i]_{i\in[N]}) \leq g_{\theta}(\P_{\C}([X_i, V_i]_{i\in[N]})) + B_{r}\big) 
\geq 1 - \delta_{\C}.
\end{split}
\ee

First we apply supremum over $\theta$, then set $[Z_i, V_i] = \P_{\C}([X_i, V_i])$. Finally by
combining Equations \eqref{eq:c1_1}, and \eqref{eq:c1_2} we obtain
\be
\begin{split}
\mathbb{P}\left(\sup_{\theta \in \F_{\theta}}g_{\theta}([X_i, V_i]_{i\in[N]}) > u + D(\C) \right)
\leq \N(\F_{\theta}, d, u/2K'(\C)) \cdot \mathbb{P}\left(g_{\theta'}(\P_{\C}([X_i, V_i]_{i\in[N]})) > u/2 \right).
\end{split}
\ee
By replacing all the variables back we have,
\be\label{eq:c1}
\begin{split}
&\mathbb{P}\Big(\sup_{\theta \in \F_{\theta}}\frac{1}{NT}\sum_{i=1}^{N}\sum_{t=1}^{T}\IP{f_{\theta}(\x_t^i)-f^*(\x_t^i)}{\v_t^i - \left[f_{\theta}(\x_t^i)-f^*(\x_t^i)\right]}
> u + D(\C)\Big) \leq \delta_{\C}\\
&+ \N(\F_{\theta}, d, u/2K'(\C)) \cdot
\mathbb{P}\left(\frac{1}{NT}\sum_{i=1}^{N}\sum_{t=1}^{T}\IP{f_{\theta'}(\x_t^i)-f^*(\x_t^i)}{\v_t - \left[f_{\theta'}(\x_t^i)-f^*(\x_t^i)\right]} > u/2\right).
\end{split}
\ee
At this stage we have linked supremum process with a fixed parameter in $\epsilon$-net of $\F_{\theta}$.
Next we compute the probability of the event in the right side term of Equation \eqref{eq:c1} for a fixed $\theta'$.
By applying Cram{\'e}r-Chernoff bound for any $a \geq 0$,
\be
\begin{split}
&\mathbb{P}\left(\frac{1}{NT}\sum_{i=1}^{N}\sum_{t=1}^{T}\IP{f_{\theta'}(\x_t^i)-f^*(\x_t^i)}{\v_t^i - \left[f_{\theta'}(\x_t^i)-f^*(\x_t^i)\right]} > u\right)\\
&\leq e^{-au}\mathbb{E}\left[\exp\left(a\frac{1}{NT}\sum_{i=1}^{N}\sum_{t=1}^{T}\IP{f_{\theta'}(\x_t^i)-f^*(\x_t^i)}{\v_t^i - \left[f_{\theta'}(\x_t^i)-f^*(\x_t^i)\right]}\right)\right].
\end{split}
\ee
By independence we have that
\be
\begin{split}
&\mathbb{P}\left(\frac{1}{NT}\sum_{i=1}^{N}\sum_{t=1}^{T}\IP{f_{\theta'}(\x_t^i)-f^*(\x_t^i)}{\v_t^i - \left[f_{\theta'}(\x_t^i)-f^*(\x_t^i)\right]} > u\right)\\
&\leq e^{-au}\prod_{i=1}^{N}\mathbb{E}\left[\exp\left(a\frac{1}{NT}\sum_{t=1}^{T}\IP{f_{\theta'}(\x_t^i)-f^*(\x_t^i)}{\v_t^i - \left[f_{\theta'}(\x_t^i)-f^*(\x_t^i)\right]}\right)\right].
\end{split}
\ee

Recall the martingale property of conditional
independence for a random variables $a_{1:T}$
we have that $\mathbb{E}\left[\left(\prod_{t=1}^{T}a_t\right)\right] = \mathbb{E}\left[\left(\prod_{t=1}^{T-1}a_t\right) \times \mathbb{E}\left[a_T \mid a_{1:(T-1)}\right]\right]$. In the similar spirit we obtain
\be
\begin{split}
&\mathbb{P}\left(\frac{1}{NT}\sum_{i=1}^{N}\sum_{t=1}^{T}\IP{f_{\theta'}(\x_t^i)-f^*(\x_t^i)}{\v_t^i - \left[f_{\theta'}(\x_t^i)-f^*(\x_t^i)\right]} > u\right)\\
&\leq e^{-au}\prod_{i=1}^{N}\mathbb{E}\Big[\exp\left(a\frac{1}{NT}\sum_{t=1}^{T-1}\IP{f_{\theta'}(\x_t^i)-f^*(\x_t^i)}{\v_t^i - \left[f_{\theta'}(\x_t^i)-f^*(\x_t^i)\right]}\right)\\ 
& \times \mathbb{E}\left[\exp\left(\frac{a}{NT}\IP{f_{\theta'}(\x_T^i)-f^*(\x_T^i)}{\v_T^i} - \frac{a}{NT}\nmm{\left[f_{\theta'}(\x_T^i)-f^*(\x_T^i)\right]}_2^2\right) \mid (\x_{1:T-1}^i, \v_{1:T-1}^i)\right] \Big].\\
\end{split}
\ee
From our assumption we have that $\v_T^i | \x_{1:T}^i$ is a sub-Gaussian vector then for any
vector $\y$, and $\l \geq 0$ we have that $\mathbb{E}[\exp(\l \IP{\y}{\v_T^i}) \mid \x_{1:T}^i] \leq 
\exp(0.5\sigma^2\l^2\nmm{\y}_2^2)$. Then applying this property to earlier conditional expectation 
we obtain,
\be
\begin{split}
&\mathbb{P}\left(\frac{1}{NT}\sum_{i=1}^{N}\sum_{t=1}^{T}\IP{f_{\theta'}(\x_t^i)-f^*(\x_t^i)}{\v_t^i - \left[f_{\theta'}(\x_t^i)-f^*(\x_t^i)\right]} > u\right)\\
&\leq e^{-au}\prod_{i=1}^{N}\mathbb{E}\Big[\exp\left(a\frac{1}{NT}\sum_{t=1}^{T-1}\IP{f_{\theta'}(\x_t^i)-f^*(\x_t^i)}{\v_t^i - \left[f_{\theta'}(\x_t^i)-f^*(\x_t^i)\right]}\right)\\ 
& \times \exp\left(\frac{a^2}{2N^2T^2}\sigma^2\nmm{f_{\theta'}(\x_T^i)-f^*(\x_T^i)}_2^2 - \frac{a}{NT}\nmm{\left[f_{\theta'}(\x_T^i)-f^*(\x_T^i)\right]}_2^2\right) \Big].
\end{split}
\ee
Now repeated apply these argument for all $t \in [T-1]$ then we have
\be
\begin{split}
&\mathbb{P}\left(\frac{1}{T}\sum_{t=1}^{T}\IP{f_{\theta'}(\x_t)-f^*(\x_t)}{\v_t - \left[f_{\theta'}(\x_t)-f^*(\x_t)\right]} > u\right)\\
&\leq e^{-au} \prod_{i=1}^{N}\exp\left(\frac{a^2}{2N^2T^2}\sigma^2\nmm{f_{\theta'}(\x_t)-f^*(\x_t)}_2^2 - \frac{a}{NT}\nmm{\left[f_{\theta'}(\x_t)-f^*(\x_t)\right]}_2^2\right) \\
&= e^{-au}\prod_{i=1}^{N}\exp\left(\sum_{t=1}^{T}\frac{a^2}{2N^2T^2}\sigma^2\nmm{f_{\theta'}(\x_t)-f^*(\x_t)}_2^2 - \frac{a}{NT}\nmm{\left[f_{\theta'}(\x_t)-f^*(\x_t)\right]}_2^2\right).
\end{split}
\ee
Recall that the choice of $a$ was arbitrary in $\R_+$ we set $a = \frac{2NT}{\sigma^2}$ then we have that
\be
\begin{split}
&\mathbb{P}\left(\frac{1}{NT}\sum_{i=1}^{N}\sum_{t=1}^{T}\IP{f_{\theta'}(\x_t^i)-f^*(\x_t^i)}{\v_t^i - \left[f_{\theta'}(\x_t^i)-f^*(\x_t^i)\right]} > u\right)\\
&\leq e^{-au}\prod_{i=1}^{N}\exp\left(\sum_{t=1}^{T}\frac{a^2}{2N^2T^2}\sigma^2\nmm{f_{\theta'}(\x_t^i)-f^*(\x_t^i)}_2^2 - \frac{a}{NT}\nmm{\left[f_{\theta'}(\x_t^i)-f^*(\x_t^i)\right]}_2^2\right)\\
&= \exp\left(-\frac{2NT}{\sigma^2}u\right).
\end{split}
\ee
Now by re-scaling $u$ to $u/2$ we obtain
\be\label{eq:c2}
\mathbb{P}\left(\frac{1}{NT}\sum_{i=1}^{N}\sum_{t=1}^{T}\IP{f_{\theta'}(\x_t^i)-f^*(\x_t^i)}{\v_t^i - \left[f_{\theta'}(\x_t^i)-f^*(\x_t^i)\right]} > u/2\right) \leq \exp\left(-\frac{NT}{\sigma^2}u\right).
\ee
Finally we combine Equation \eqref{eq:c1}, and \eqref{eq:c2} to obtain,
\be
\begin{split}
&\mathbb{P}\left(\sup_{\theta \in \F_{\theta}}\frac{1}{NT}\sum_{t=1}^{T}\IP{f_{\theta}(\x_t^i)-f^*(\x_t^i)}{\v_t^i - \left[f_{\theta}(\x_t^i)-f^*(\x_t^i)\right]} > u + D(\C)  \right)\\
&\leq \delta_{\C} + \exp\left(\ln\left(|\N(\F_{\theta}, d, u/2K'(\C))|\right) - \frac{uT}{\sigma^2}\right).
\end{split}
\ee
\end{proof}
Essentially we have finished (i) linking empirical and population
excess risk and (ii) concentration of residual error in Equation
\eqref{eq:basic_ineq}. Now we combine all the results from
Lemma \ref{lemma:cond}, and \ref{lemma:sup_hypo} 
below to obtain excess risk bounds for dependent data.
\begin{proof}[Proof of Theorem \ref{thm:apdx_timedep_risk}]
As $\hat{\theta}$ is the global minimizer of Equation \eqref{eq:emp_obj} from Lemma \ref{lemma:opt_lemma} we have,
\be
\frac{1}{T}\sum_{t=1}^{T}\nmm{f_{\hat{\theta}}(\x_t) - f(\x_t)}_{2}^2  \leq
\frac{2}{T}\sum_{t=1}^{T}\IP{\v_t}{f_{\hat{\theta}}(\x_t) - f(\x_t)} + \l \Omega(f^*).
\ee
Multiplying both sides by 2 and re-arranging $\frac{1}{T}\sum_{t=1}^{T}\nmm{f_{\hat{\theta}}(\x_t) - f(\x_t)}_{2}^2$
to right side of inequality we obtain,
\be
\frac{1}{T}\sum_{t=1}^{T}\nmm{f_{\hat{\theta}}(\x_t) - f(\x_t)}_{2}^2  \leq
\frac{4}{T}\sum_{t=1}^{T}\IP{\v_t}{f_{\hat{\theta}}(\x_t) - f^*(\x_t)} - \frac{1}{T}\sum_{t=1}^{T}\nmm{f_{\hat{\theta}}(\x_t) - f^*(\x_t)}_{2}^2 + 2\l \Omega(f^*).
\ee
As $f_{\hat{\theta}}$ is random variable dependent on the empirical samples we apply a supremum on right side of inequality
to make the bound tight
\be
\begin{split}
\frac{1}{T}\sum_{t=1}^{T}\nmm{f_{\hat{\theta}}(\x_t) - f(\x_t)}_{2}^2  \leq&
\sup_{\theta \in \F_{\theta}}\left[\frac{4}{T}\sum_{t=1}^{T}\IP{\v_t}{f_{{\theta}}(\x_t) - f^*(\x_t)} - \frac{1}{T}\sum_{t=1}^{T}\nmm{f_{{\theta}}(\x_t) - f^*(\x_t)}_{2}^2\right] \\
&+ 2\l \Omega(f^*).
\end{split}
\ee
Re-writing the earlier we have,
\be
\begin{split}
\frac{1}{T}\sum_{t=1}^{T}\nmm{f_{\hat{\theta}}(\x_t) - f(\x_t)}_{2}^2  \leq &
\sup_{\theta \in \F_{\theta}}\left[\frac{1}{T}\sum_{t=1}^{T}\IP{\left(f_{{\theta}}(\x_t) - f^*(\x_t)\right)}{4\v_t - \left(f_{{\theta}}(\x_t) - f^*(\x_t)\right)}\right] \\
&+ 2\l \Omega(f^*).
\end{split}
\ee
Applying Lemma \ref{lemma:sup_hypo} with Assumption \ref{ass:data}, and \ref{ass:hypo} we conclude that for any $u > 0$,
\be
\begin{split}
&\mathbb{P}\left(\frac{1}{T}\sum_{t=1}^{T}\nmm{f_{\hat{\theta}}(\x_t) - f(\x_t)}_{2}^2
\leq u + D(\C) + 2\l \Omega(f^*)\right) \\
&\geq 1 - \delta_{\C} -  \exp\left(\ln\left(\N(\F_{\theta}, d, u/6K_{\C}B_{\C})\right) - \frac{uNT}{16\sigma^2}\right).
\end{split}
\ee
Now we connect the earlier inequality to $\nmm{f_{\hat{\theta}} - f^*}_{L_2}^2$ via Lemma \ref{lemma:cond}, after
which we obtain,
\be
\begin{split}
&\mathbb{P}\left(\nmm{f_{\hat{\theta}} - f^*}_{L_2}^2
\leq 4u + 2D(\C) + D_1(\C_x) + 4\l \Omega(f^*)\right) \\
&\geq 1 - 2\delta_{\C} -  \exp\left(\ln\left(\N(\F_{\theta}, d, u/6K_{\C}B_{\C})\right)- \frac{uNT}{16\sigma^2}\right) \\
&-\exp\left(\ln\left(\N(\F_{\theta}, d, u/4K_{\C_x}B_{\C_x})\right)
- \frac{N}{4{\sf cond}_{\F_{\theta}}^2}\right).
\end{split}
\ee

We choose $\C, \C_x$ to be a Euclidean ball
of radius $r$, $\mathbb{B}(r)$.

Since $\v_t$ is conditionally sub-Gaussian, and from Assumptions
\ref{ass:data}, and \ref{ass:hypo} it is
easy to observe that $\delta_{\C} = \mathcal{O}(\exp(-r^2/\sigma^2))$ and $D(\mathbb{B}(r)), D_1(\mathbb{B}(r))$ $=$ 
$\mathcal{O}({\sf poly(r, n_x, n_y)}\exp(-r^2/\sigma^2))$. Furthermore,
for Lipschtiz parametric class we have that
$\N(\F_{\theta}, d, \varepsilon) = \mathcal{O}(1/\varepsilon^{{\sf dim}(\F_{\theta})})$.

We choose $r$ such that
\be
u \geq \max\{D(\mathbb{B}(r)), D_1(\mathbb{B}(r)\},
\ee
\be
\begin{split}
\delta_{\mathbb{B}_r} \leq 
\min \Big\{&\exp\left(\ln\left(\N(\F_{\theta}, d, u/6K_{\mathbb{B}_r}B_{\mathbb{B}_r})\right) - \frac{uNT}{16 \sigma^2}\right),\\
&\exp\left(\ln\left(\N(\F_{\theta}, d, u/4K_{\mathbb{B}_r}B_{\mathbb{B}_r})\right) - \frac{N}{4{\sf cond}_{\F_{\theta}}^2}\right)
\Big\}.
\end{split}
\ee
and 
\be
\sqrt{u} \geq 6K_{\mathbb{B}_r}D_{\mathbb{B}_r}.
\ee
Therefore we obtain
\be
\begin{split}
\mathbb{P}\left(\nmm{f_{\hat{\theta}} - f^*}_{L_2}^2
\leq 6u + 4\l \Omega(f^*)\right)
\geq &1 - 2\exp\left(\ln\left(\N(\F_{\theta}, d, \sqrt{u})\right) - \frac{uNT}{16\sigma^2}\right) \\
&-2\exp\left(\ln\left(\N(\F_{\theta}, d, \sqrt{u})\right) - \frac{N}{4{\sf cond}_{\F_{\theta}}^2}\right).
\end{split}
 \ee

Now set 
\be
u = \tilde{\mathcal{O}}\left(\sigma^2\frac{{\sf dim}(\F_{\theta}) + \ln(1/\delta)}{NT}\right),
\ee
when $N/\ln(NT) \gtrsim {\sf cond}_{\F_{\theta}}^2 \times {\sf dim}(\F_{\theta})$
then we obtain
\be
\mathbb{P}\left(\nmm{f_{\hat{\theta}} - f^*}_{L_2}^2
\leq \tilde{\mathcal{O}}\left(\sigma^2\frac{{\sf dim}(\F_{\theta}) + \ln(1/\delta)}{NT}\right) + \l \Omega(f^*)\right) \geq 1 - \delta.
\ee
\end{proof}

\section{Related works}\label{sec:apdx_related}
\label{sec:related_works}

The word ``\textit{System Identification}'' was first coined in \cite{zadeh-ieee62}.
SysID deals with determination of outputs of a/many system given the inputs based on previous observations
of inputs and outputs. In this section, we disscus only the key relevant works while 
the readers are referred to \cite{aastrom-automatica71, de-jcam00} for an extensive survey.
Estimation of impulse response of a linear system can be re-written as a least squares
problem whose solution is obtained Moore-Penrose pseudo inverse of inputs \cite{de-jcam00}.
However, with classical methods from impulse response estimation of system parameters 
required the knowledge of true-order of the system.

\textbf{Minimal-State Realization.} Due to the basis invariance of linear system parameters, and unknown
system order, it has been 
a great interest to identify structured realization of system parameters. In the seminal work of \cite{ho-kalman-66}
provided an algorithm to estimate lowest order realization of $(A, B, C)$ from the Markov parameters alone. 
This Ho-Kalman procedure requires to compute SVD of Hankel matrix with Markov parameters. This scales
quadratically with the trajectory length. N4SID method \citep{van-automatica94} performs
both least-squares to estimate the Markov parameters, and Ho-Kalman
procedure to estimate low-order system parameters. Similar SysID has been
viewed in frequency response by \cite{mcKelvey-tac96}.
As pointed out earlier, 
these methods does not scale for longer trajectories. There have been many works in literature under
name \textit{Subspace-Identification} (see \cite{qin-cce06} for detailed survey).

\textbf{Convex Relaxations.}  From the optimization lens of low-order
SysID is posed as Hankel matrix-rank minimization problem \citep{fazel-et-al-acc01},
while this NP-HARD to solve in general. Influential work of \cite{recht-et-al-cdc08}
proposed convex relaxations of matrix-rank minimization via
a nuclear-norm based regularization procedure that recovery guarantees under
certain conditions. This procedure was then adapted many linear SysID problems
in \cite{mohan-fazel-acc10, shah-et-al-cdc12, 
fazel-et-al-siam13, liu-et-al-scl13, chiuso-et-al-cdc13, pillonetto-et-automatica16, 
sun-et-al-ojcs22} and for non-linear SysID in \cite{markovsky-cdc17}. 
Equivalent convex relaxation were also performed in 
\cite{smith-tac14, honarpisheh-et-al-ifac24}. While these convex relaxation circumvents
the hardness of the problem, however, still necessitates the SVD computation
for the Hankel nuclear-norms.

\textbf{Nonconvex Relaxations.} To circumvent the SVD computation authors
of \cite{campi-et-al-cdc15} re-forumulated the Hankel-rank minimization
to estimating the amplitude, frequency, and phase of the system response
making the problem nonconvex and computationally faster. However, the adopted algorithm
is greedy and have only probabilistic global optimality guarantees. For faster learning of
system parameters, Deep Learning (DL) methods very adopted in \cite{pillonetto-et-al-automatica25}.
However, the main critique of optimality certificates for DL make it not realiable control
applications. But in the literature of Machine Learning there are nonconvex problems whose
global optimality certificates are well established \cite{haeffele_global_2017}. For instance,
the necessity of SVD computation 
for nuclear norm computation was greatly reduced by re-parameterzing the matrix
with a product of left and right matrices commonly known as
BM factorization \citep{burer-monteiro-mp03}. By doing this, the
nuclear-norm can written as variational minimization of sums of square Frobenous norms of 
left and right matrices \citep{recht-et-al-siam10, giampouras-et-al-nips20}. This trick
has been adopted to many structured matrix factorization problems in \cite{bach-arxiv13, haeffele-vidal-tpami20}.
These works form a basis for our proposed reformulations for SysID.

\textbf{Statistical Analysis.} Classical works
used maximum likelihood estimation to perform SysID
that provided asymptotic convergences 
\citep{lennart1980asymptotic}. However, they required
the prior information about the covariances
of the system parameters that posits a limitation
for unknown SysID. \cite{simchowitz2018learning}
analyzed recovery error rates but their
analysis was limited to fully observed system.
\cite{tsiamis2019finite}, and \cite{oymak2019non}
were the one of the first ones to provide error rates 
for partially observed systems. While
all the earlier mentioned works analyzed SysID without
the consideration of low-orderness. \cite{sun-et-al-ojcs22}
studied both optimization and statistical aspects
of low-order SysID, however the estimated sample
complexities scaled badly with the trajectory length. 
Later work on SVD truncation of Hankel matrices yielded a 
tighter error rates and sample complexities
\citep{lee2022improved}. We compare our estimates statistical error rates and sample complexities
with earlier mentioned works in Table \ref{tab:comparision_stat}.

\section{Extra numerical Simulations}\label{sec:apdx_numerical}
This section describes algorithmic details and more experimental results for the \textsection\ref{sec:numerical_simulations}.
\begin{algorithm}
\caption{Accelerated proximal gradient descent for Hankel Nuclear Norm Minimization \eqref{eq:P1_1}}
\label{alg:P1_1}
\SetKwInOut{Input}{Input}
\SetKwInOut{Output}{Output}

\Input{
    $Y_i \in \mathbb{R}^{n_y \times (2L+2) \times N}$: Output data tensor,
    $U_i \in \mathbb{R}^{n_u \times (2L+2) \times N}$: Input data tensor,\\
    $\lambda$: Regularization parameter,
    $\text{lr}$: Learning rate,
    $\mu$: Momentum rate,\\
    $G_{\text{init}} \in \mathbb{R}^{n_y \times (n_u \cdot (2L+1))}$: Initialization,\\
    $\text{max\_iter}$: Maximum iterations.
}
\Output{
    $\hat{n}_x, (\hat{A}, \hat{B}, \hat{C})$: Estimated system parameters.
}

\BlankLine
\textbf{Initialization:}\\
$G_0 \gets G_{\text{init}}$\\
$L(G) := \frac{1}{2N}\sum_{i=1}^{N}\nmm{\y_{2L+2}^{(i)} - G\vecc(U_{1:2L+1}^{(i)})}_2^2$.
\BlankLine
\For{$k \gets 1$ \KwTo $\text{max\_iter}$}{
$G_k \gets G_{k-1} - \text{lr} \cdot \nabla_{G}L(G_{k-1})$\\
$L, \Sigma, R \gets \text{SVD}(\mathcal{H}(G_{k}))$\\
$G_k \gets \mathcal{H}^{\dagger}\left(L(\Sigma - \text{lr} \cdot \l \cdot I)R^T\right) + \mu \cdot (G_k - G_{k-1})$
}
$\hat{n}_x, (\hat{A}, \hat{B}, \hat{C}) \gets \textbf{Ho-Kalman}(G_k)$,\\
\textbf{Return:} $\hat{n}_x, (\hat{A}, \hat{B}, \hat{C})$.
\end{algorithm}

\begin{algorithm}
\caption{Polyak's gradient descent for program \eqref{eq:P2}}\label{alg:P2}
\SetKwInOut{Input}{Input}
\SetKwInOut{Output}{Output}

\Input{
    $Y_i \in \mathbb{R}^{n_y \times (2L+2) \times N}$: Output data tensor,
    $U_i \in \mathbb{R}^{n_u \times (2L+2) \times N}$: Input data tensor,\\
    $\lambda$: Regularization parameter,
    $\text{lr}$: Learning rate,
    $\mu$: Momentum rate,\\
    $R_{\text{init}}$: Initial system order, $R_{\text{max}}$: Maximum system order\\
    $V_{\text{init}} \in \R^{(L+1)n_y \times R_{\text{init}}}$, $Z_{\text{init}}
    \in \R^{(L+1)n_u \times R_{\text{init}}}$: Initialization,\\
    $\text{max\_iter}$: Maximum iterations.
}
\Output{
    $\hat{n}_x, (\hat{A}, \hat{B}, \hat{C})$: Estimated system parameters.
}

\BlankLine
\textbf{Initialization:}\\
$V_0, V_{-1} \gets V_{\text{init}}$\\
$Z_0, Z_{-1} \gets Z_{\text{init}}$\\
$n_x \gets R_{\text{init}}$\\
$L(V, Z) := \frac{1}{2N}\sum_{i=1}^{N}\nmm{\y_{2L+2}^{(i)} - \mathcal{H}^{\dagger}(VZ^T)\vecc(U_{1:2L+1}^{(i)})}_2^2 + \frac{\l}{2}\left[\nmF{V}^2 + \nmF{Z}^2\right]$.\\
$M(V, Z) := \frac{1}{N\l}\sum_{i=1}^{N}\left(Y_i - \mathcal{H}^{\dagger}(VZ^T)\vecc(U_{1:2L+1}^{(i)}\right)
^T\vecc(U_{1:2L+1}^{(i)})$
\BlankLine
\While{$\nmm{\mathcal{H}^{\dagger}(V_0, Z_0)}_2 > 1$ and $\hat{n}_x \leq R_{\text{max}}$}{
\For{$k \gets 1$ \KwTo $\text{max\_iter}$}{
$V_k \gets V_{k-1} - \text{lr} \cdot \nabla_{V}L(V_{k-1}, Z_{k-1}) + \mu \cdot (V_{k-1} - V_{k-2})$\\
$Z_k \gets Z_{k-1} - \text{lr} \cdot \nabla_{Z}L(V_{k-1}, Z_{k-1}) + \mu \cdot (Z_{k-1} - Z_{k-2})$
}
$[L, \Sigma, R] \gets \text{SVD}(\mathcal{H}(M(V_{\text{max\_iters}}, Z_{\text{max\_iters}})))$\\
$\v^* \gets L(:, 1)$\\
$\z^* \gets R(:, 1)$\\
$V_0, V_{-1} \gets \begin{bmatrix} V_{\text{max\_iters}} & \v^* \end{bmatrix}$\\
$Z_0, Z_{-1} \gets \begin{bmatrix} Z_{\text{max\_iters}} & \z^* \end{bmatrix}$\\
$\hat{n}_x \gets \hat{n}_x + 1$
}
$\hat{n}_x \gets n_x$,\\
$\hat{C} \gets \left[V_{\text{max\_iters}}\right]_{1:n_y, 1:\hat{n}_x}$\\
$\hat{B} \gets \left[Z_{\text{max\_iters}}^T\right]_{1:n_u, 1:\hat{n}_x}$\\
$\hat{A} \gets \left(\left[V_{\text{max\_iters}}^T\right]_{(n_u+1):2\cdot n_u, 1:\hat{n}_x}\right)^{\dagger}\left[V_{\text{max\_iters}}\right]_{1:n_y, 1:\hat{n}_x}$\\
\textbf{Return:} $\hat{n}_x, (\hat{A}, \hat{B}, \hat{C})$.
\end{algorithm}

\begin{algorithm}
\caption{Polyak's gradient descent for program \eqref{eq:P1}}\label{alg:P1}
\SetKwInOut{Input}{Input}
\SetKwInOut{Output}{Output}

\Input{
    $Y_i \in \mathbb{R}^{n_y \times (2L+2) \times N}$: Output data tensor,
    $U_i \in \mathbb{R}^{n_u \times (2L+2) \times N}$: Input data tensor,\\
    $\lambda$: Regularization parameter,
    $\text{lr}$: Learning rate,
    $\mu$: Momentum rate,\\
    $R_{\text{init}}$: Initial system order, $R_{\text{max}}$: Maximum system order\\
    $\a_{\text{init}} \in \R^{R_{\text{init}}}$,
    $B_{\text{init}} \in \R^{R_{\text{init}} \times n_u}$,
    $C_{\text{init}} \in \R^{n_y \times R_{\text{init}}}$: Initialization,\\
    $\text{max\_iter}$: Maximum iterations.
}
\Output{
    $\hat{n}_x, (\hat{A}, \hat{B}, \hat{C})$: Estimated system parameters.
}

\BlankLine
\textbf{Initialization:}\\
$\a_0, \a_{-1} \gets \a_{\text{init}}$\\
$B_0, B_{-1} \gets B_{\text{init}}$\\
$C_0, C_{-1} \gets C_{\text{init}}$\\
$n_x \gets R_{\text{init}}$\\
$L(\a, B, C)$ := $\frac{1}{4N(L+1)}\sum_{i=1}^{N}\nmm{\vecc(Y^{(i)}) - G(\diag(\a), B, C, 0)\vecc(U_{1:2L+1}^{(i)})}_2^2$ + ${\l}\sum_{j=1}^{n_x}\gamma(a_j)\nmm{B_{j, :}}_2\nmm{C_{:, j}}_2$.\\
$M(a', \a, B, C) := \frac{1}{2N(L+1)\l\gamma(a')}\sum_{i=1}^{N}\left[Y_i - \sum_{j=1}^{n_x}\c_j\b_j^TU_i'P^T(a_j)\right]P(a'){U_i'}^{T}$
\BlankLine
\While{$\nmm{\mathcal{H}^{\dagger}(V_0, Z_0)}_2 > 1$ and $\hat{n}_x \leq R_{\text{max}}$}{
\For{$k \gets 1$ \KwTo $\text{max\_iter}$}{
$\a_k \gets \a_{k-1} - \text{lr} \cdot \nabla_{\a}L(\a_{k-1}, B_{k-1}, C_{k-1}) + \mu \cdot (\a_{k-1} - \a_{k-2})$\\
$B_k \gets B_{k-1} - \text{lr} \cdot \nabla_{B}L(\a_{k-1}, B_{k-1}, C_{k-1}) + \mu \cdot (B_{k-1} - B_{k-2})$\\
$C_k \gets C_{k-1} - \text{lr} \cdot \nabla_{C}L(\a_{k-1}, B_{k-1}, C_{k-1}) + \mu \cdot (C_{k-1} - C_{k-2})$
}
$a^* \gets \textbf{Golden-section Search}(M(\cdot, \a_{\text{max\_iters}}, B_{\text{max\_iters}}, C_{\text{max\_iters}}))$\\
$[L, \Sigma, R] \gets \text{SVD}(M(\cdot, \a_{\text{max\_iters}}, B_{\text{max\_iters}}, C_{\text{max\_iters}}))$\\
$\c^* \gets L(:, 1)$\\
$\b^* \gets R(:, 1)$\\
$\a_0, \a_{-1} \gets \begin{bmatrix} \a_{\text{max\_iters}}^T & a^* \end{bmatrix}$\\
$B_0, B_{-1} \gets \begin{bmatrix} B_{\text{max\_iters}}^T & \b^* \end{bmatrix}^T$\\
$C_0, C_{-1} \gets \begin{bmatrix} C_{\text{max\_iters}} & \c^* \end{bmatrix}$\\
$\hat{n}_x \gets \hat{n}_x + 1$
}
$\hat{n}_x \gets n_x$, $\hat{A} \gets \diag(\a_{\text{max\_iters}})$ ,$\hat{B} \gets B_{\text{max\_iters}}$, $\hat{C} \gets C_{\text{max\_iters}}$.\\
\textbf{Return:} $\hat{n}_x, (\hat{A}, \hat{B}, \hat{C})$.
\end{algorithm}

\begin{figure}[H]
    \centering
    \subfigure[Varying sample size]{
        \includegraphics[width=0.8\textwidth]{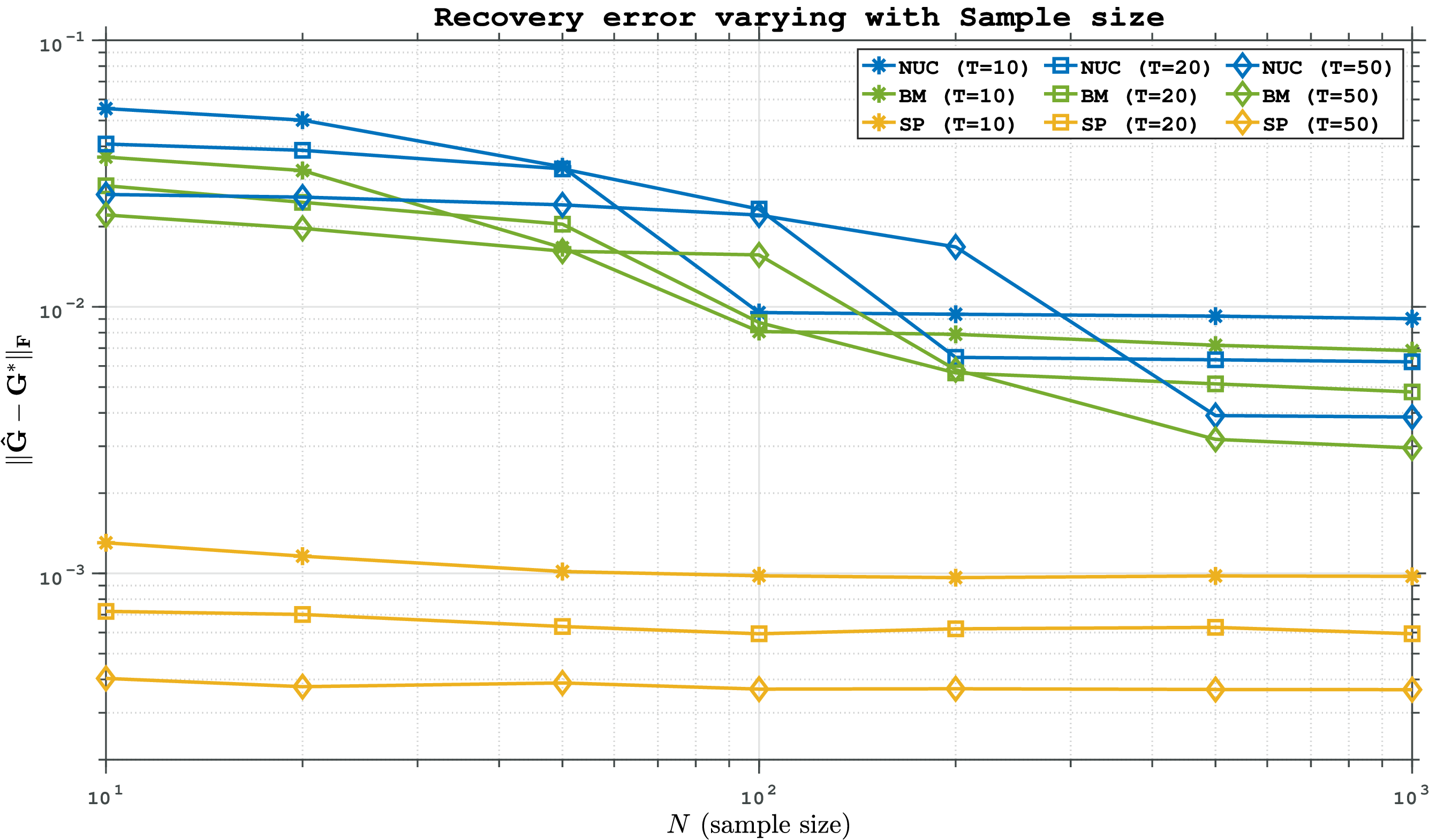}
        \label{fig:vfig1}
    }
    \subfigure[Varying trajectory length]{
        \includegraphics[width=0.8\textwidth]{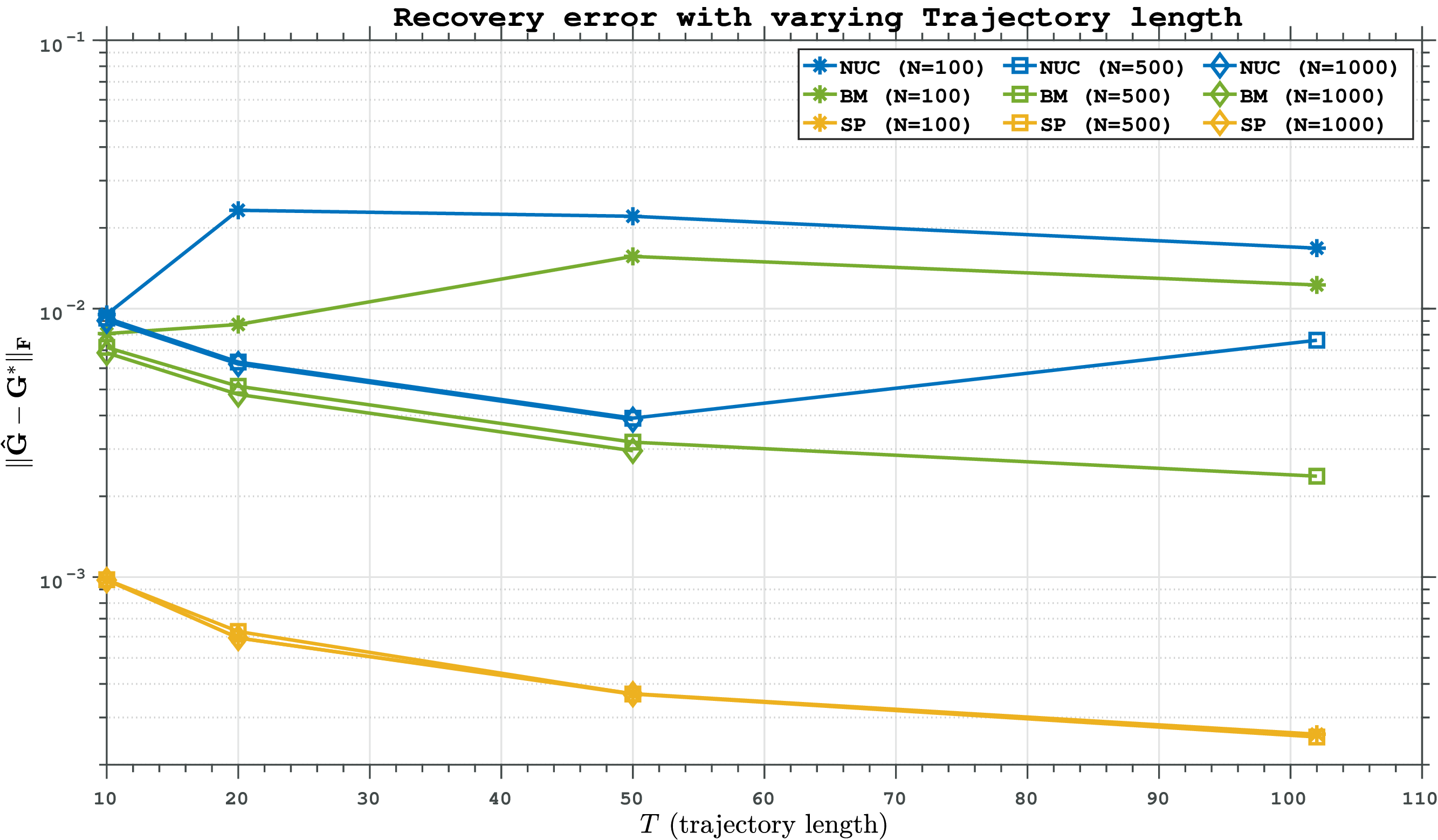}
        \label{fig:vfig2}
    }
    \caption{Varying sample size and trajectory length. \textsf{NUC} refers to \eqref{eq:P1_1},
    \textsf{BM} refers to \eqref{eq:P2}, and \textsf{SP} refers to \eqref{eq:P1}.}
    \label{fig:vary_sample_trajectory_length}
\end{figure}

\begin{figure}[H]
    \centering
    \subfigure[Recovery Error vs Time]{
        \includegraphics[width=0.48\textwidth]{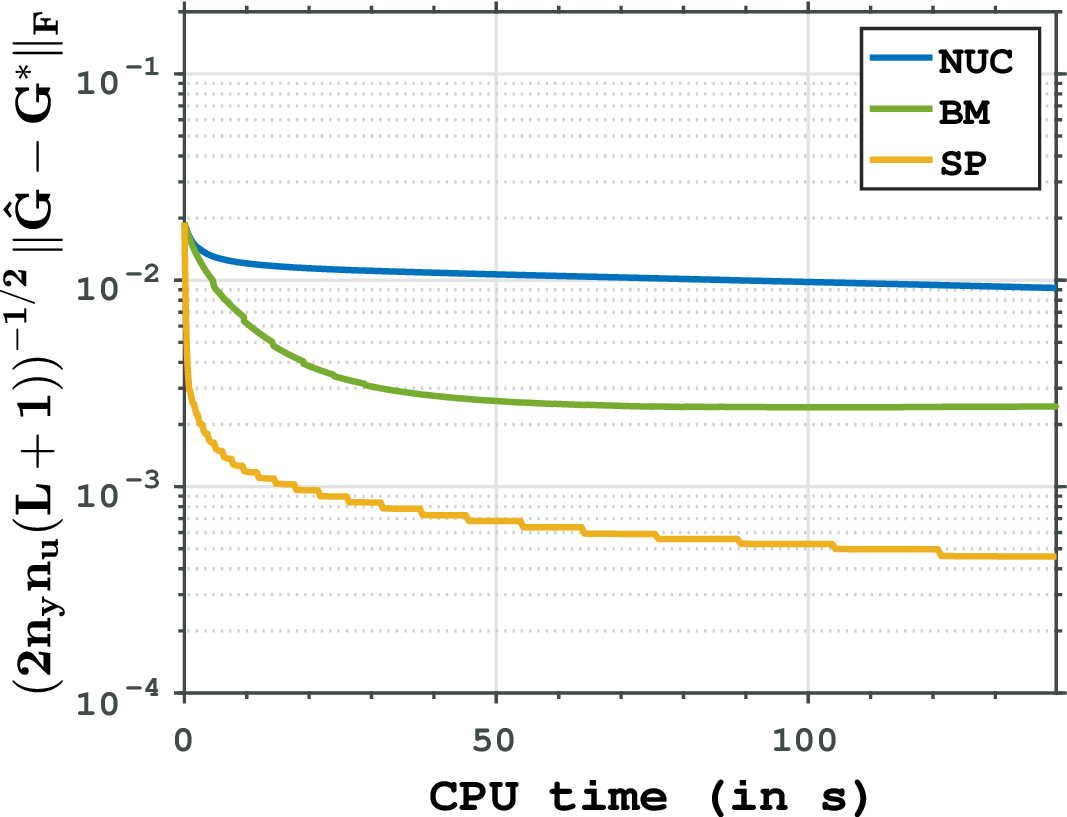}
        \label{fig:nfig1}
    }
    \subfigure[Loss vs Time]{
        \includegraphics[width=0.48\textwidth]{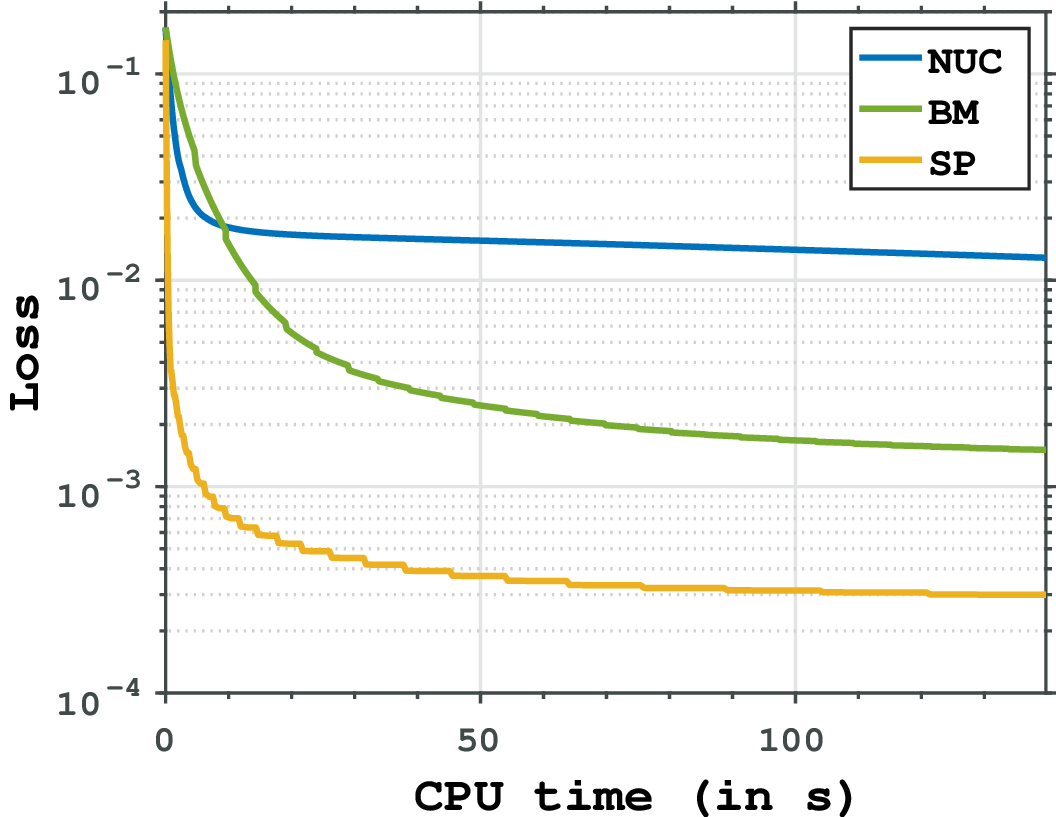}
        \label{fig:nfig2}
    }
    \subfigure[Hankel Singular Values vs Time]{
        \includegraphics[width=0.48\textwidth]{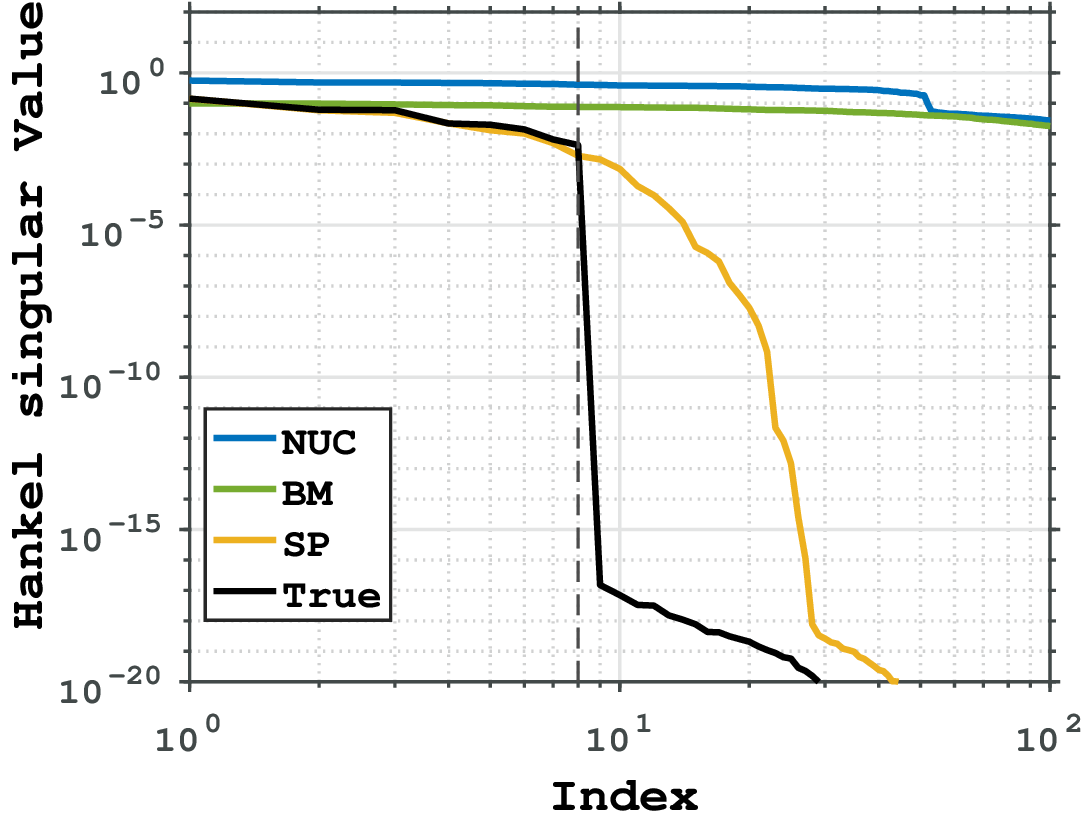}
        \label{fig:nfig3}
    }
    \subfigure[Polar vs Time]{
        \includegraphics[width=0.48\textwidth]{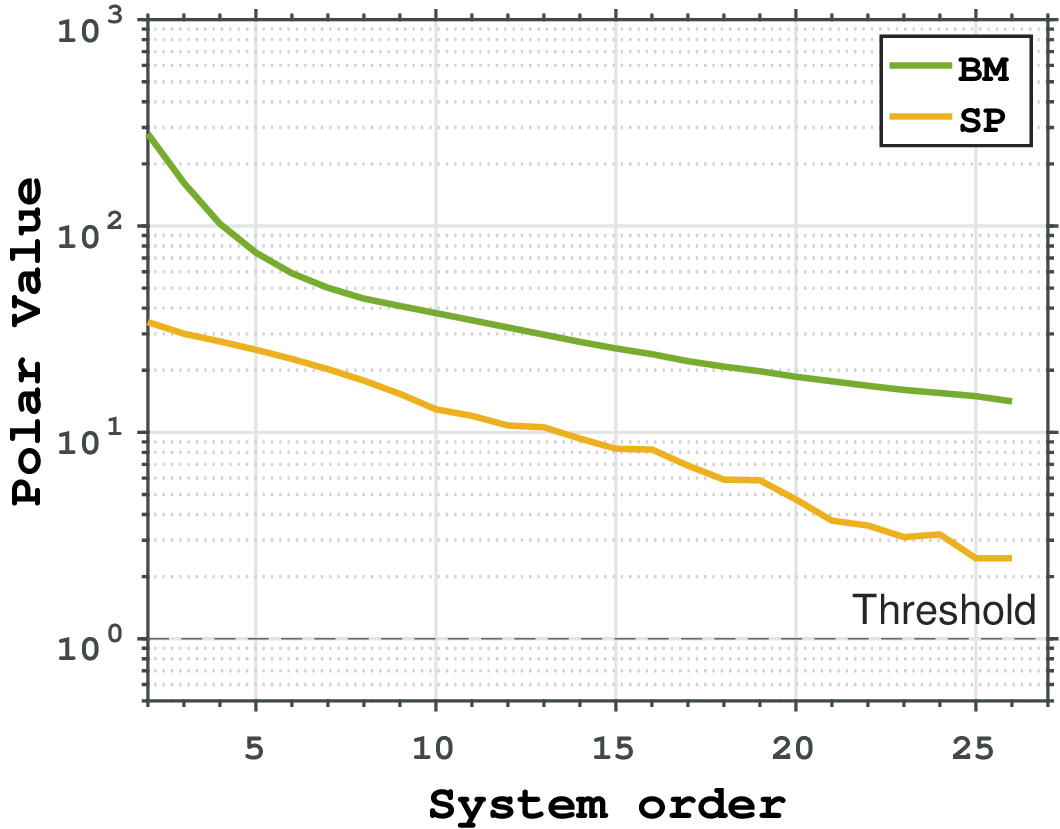}
        \label{fig:nfig4}
    }

    \caption{Overview of various metrics for a non-diagnosable systems with sample size, $N=500$, and trajectory length $T=102$.\textsf{NUC} refers to \eqref{eq:P1_1},
    \textsf{BM} refers to \eqref{eq:P2}, and \textsf{SP} refers to \eqref{eq:P1}.}
    \label{fig:non_diag_metric}
\end{figure}

\end{document}